\crefname{appsec}{Appendix}{Appendices}
\begin{document}
\maketitle

\begin{abstract}

We furnish a procedure based on universal hash families (UHFs) that can convert an error correcting coding scheme (ECC) of rate $R$ into a \textit{semantically} secure wiretap coding scheme of rate $R - \xi$ where $\xi$ is a parameter derived from the eavesdropper's point-to-point channel. This conversion is shown to be polynomial time efficient with block length and is applicable to any channel, i.e., both discrete and continuous channels. When an ECC is chosen, our procedure induces a wiretap coding scheme that is concrete and efficient as long as the ECC is also such. To prove this induced wiretap coding scheme is semantically secure, we have constructed bounds on the information leaked to the eavesdropper. Our construction is an upgrade of bounds from recent literature: the novelty here being that our leakage bounds hold for \textit{any} message distribution. Indeed, our wiretap procedure using UHFs and our characterization of its semantic leakage is the first main contribution of this work.

The other main contribution of this work is as follows. We apply the aforementioned procedure to a variety of wiretap channels in order to show the procedure's efficacy, and as a result of such applications, we mirror existing results from literature regarding achievable semantically secure rates. More notably, in some cases our results establish \textit{new} achievable semantically secure rates. For DMC wiretap channels and No-CSIT (instantaneous \underline{c}hannel \underline{s}tate \underline{i}nformation at the \underline{t}ransmitter) fast fading wiretap channels, we show how our wiretap scheme can achieve the secrecy capacity in certain cases, but more generally, can always achieve a non-negative rate of $R - C_E$ under semantic security where $R$ is the rate of the ECC on the main channel and $C_E$ is the capacity of the eavesdropper's point-to-point channel. On partial CSIT fast fading wiretap channels, we show that our wiretap coding scheme can achieve the best known secure achievable rates from literature, even under semantic security. On full CSIT fast fading wiretap channels, we show that our wiretap coding scheme can achieve the secrecy capacity. On AWGN wiretap channels, using a recent ECC from literature, we provide an end-to-end wiretap coding scheme that is concrete, polynomial time efficient in block length, semantically secure, and has both its probability of error and  semantic leakage exponentially diminishing with block length. In fact, we prove that the semantic leakage in each of the previous cases is exponentially decreasing with block length.

\end{abstract}

\begin{IEEEkeywords}
Physical layer security, Universal Hashing, UHF, Semantic Security, Secrecy Capacity, Achievable rates, Fast Fading channels, Leakage bounds, Full CSIT, Partial CSIT, No-CSIT
\end{IEEEkeywords}

\section{Introduction}
\IEEEPARstart{P}{hysical} layer security exploits the inherent randomness in a communication environment to derive security; this form of security makes no assumptions on the eavesdropper's capabilities. This is in direct contrast to computational based security which derives security based on the assumption that the eavesdropper has \textit{bounded} computational resources. 

Computational based security has been the de facto security for communication systems since its inception due especially to its ease of implementation; however, the main assumption of computational boundedness has been scrutinized in recent years more than ever. One of the primary reasons for this scrutiny is the potential advent of practical quantum computers in the near future. On the other hand, physical layer security is impervious to advances in computing, in particular quantum computing, because it makes no underlying assumptions on computational resources. Thus, regardless of the technology the eavesdropper possesses, physical layer security maintains its integrity. In this way, physical layer security is \textit{inherent} security.

Given this clear advantage of physical layer security, it is still underutilized in modern communication systems. This is primarily because most proposed schemes to implement physical layer security are too impractical. The schemes are most often only theorized to exist with a tangible construction unknown, i.e., proofs are by existence and not by construction. Moreover, even when a construction is given, it is rarely efficient in block length.

Overcoming these hurdles has been one of the primary aims of the physical layer community for quite some time. But there is yet another reason physical layer security has not found common use in new communication systems; this reason is significantly more subtle. The measure of security provided by most physical layer security schemes is insufficient to be used in a practical setting. 

There is no direct analog of this problem that arises from computational based security because in that case the underlying assumption that certain decision problems are computationally hard is unproven anyway. Here, in physical layer security where security is rigorously proven, the choice of how security is measured needs to be consistent with reality if the proof of security is to hold any merit. 

If a physical layer scheme could be created that is tangible, efficient, utilizes the most realistic measure of security, and achieves an input/output rate near the theoretical maximum, then physical layer security could potentially rival computational based security as the de facto security of modern communication systems, or at the very least could be an indispensable component. Motivated by this, herein we develop a physical layer coding scheme that \textit{aims} to satisfy all of these properties and in some cases even \textit{does}.

\subsection{Background - Security Metrics}
Physical layer security is often modeled by a wiretap channel which was introduced in the 1970's by Wyner \cite{Wyner} and later generalized by Csisz\'ar and K\"orner \cite{csiszarkorner}. The metric used to measure security in these works is now colloquially referred to as the \textit{weak security metric}. For years, this was the primary metric used to measure security on wiretap channels, however, it was asserted in the 1990's in \cite{strongsecurity} that the weak metric provided an inadequate measure of security to be deemed practical. This led to the creation of the \textit{strong security metric}, the unnormalized version of the weak metric.

This metric sufficed for awhile, but in 2012, this metric was again shown to be an inadequate measure of security for realistic communication systems by Bellare, Tessaro, and Vardy \cite{cryptoTreatment}. In addition to showing this, they created three new security metrics provably stronger than the strong security metric and proved them asymptotically equivalent. For the purposes of this paper, due to their equivalence, we will refer to all three of these metrics collectively as the \textit{semantic security metric}, the name given in \cite{cryptoTreatment}. This metric is now held to be the gold standard of security metrics for the wiretap channel. Moreover, it is argued that a stronger security metric than the semantic security metric does not exist. For these reasons, \textit{it is the only measure of security that should be utilized in practice}. Admittedly, proving results with this metric tend to be more arduous, therefore many results in literature still use the strong security metric and even the weak security metric, but in this work we will \textit{exclusively} use the semantic metric to prove security.

\subsection{Background - Fading Channels}
In addition to focusing on physical layer security schemes that are tangible, efficient, and utilize semantic security, we will be primarily concerned with the most realistic of wiretap channel models: the fading wiretap channel. Fading wiretap channels are commonly used to model security of wireless communications. It assumes the input signal is attenuated/amplified then corrupted by some additive noise. The amount of attenuation/amplification is called the channel state. When the channel state changes frequently and independently, we are in the so called \textit{fast fading} regime. This is one of the most practical fading wiretap channel models and is the main focus of our applications.

Due to the nature of wireless systems, fading wiretap channels sometimes assume that the current channel state is fed back from the receiver to the transmitter (this is abbreviated by CSIT - instantaneous channel state information at the transmitter). However, since there are actually two point-to-point channels within a wiretap channel, the transmitter potentially receives both of these channel states, a channel state corresponding to the intended receiver's channel and a channel state corresponding to the eavesdropper's channel.

We denote the case when the transmitter knows neither of these channel states by \bfit{No-CSIT}, although we do assume the transmitter knows the \textit{statistics} of the channel states as random variables. We denote the case when the transmitter knows the intended receiver's current channel state but not the eavesdropper's current channel state (only the statistics) by \bfit{partial CSIT}. Lastly, we denote the case when the transmitter knows both current channel states by \bfit{full CSIT}. 

The level of CSIT drastically changes which secure rates are achievable. For this reason, we will treat No-CSIT, partial CSIT, and full CSIT as separate wiretap channels entirely.

\subsection{Related Work}
In \cite{cryptoTreatment} and also in \cite{semanticallySecure}, a tangible (concrete) and efficient wiretap coding scheme was given that could achieve positive secrecy rates on discrete memoryless wiretap channels under \textit{semantic} security. In certain cases, this wiretap scheme could also achieve the \textit{semantic} secrecy capacity \cite{channelupgrading}. In \cite{explicitGaussianWiretap}, this scheme was extended for use on the AWGN wiretap channel and was shown to achieve the secrecy capacity, however, the wiretap scheme therein was only able to achieve positive secrecy rates under the \textit{strong} security metric. In \cite{UHF}, however, this wiretap scheme was shown to achieve the \textit{strong} secrecy capacity for both continuous and discrete wiretap channels. Their proof is a direct bound on the strong leakage and admits a nice characterization of the secure achievable rates. In \cite{AWGNsemanticRecent}, a wiretap scheme was shown to achieve the \textit{semantic} secrecy capacity of AWGN wiretap channels, albeit in a completely different manner than the previously mentioned five papers. To date, there is currently no universal wiretap scheme that achieves the semantic secrecy capacity for both discrete memoryless \textit{and} AWGN wiretap channels.

Physical layer security for fast fading wiretap channels was arguably started with Liang, Poor, and Shamai in \cite{secureoverfading} where they found the weak secrecy capacity of the fast fading wiretap channel with the assumption of full CSIT. This was later improved by Bloch and Laneman in \cite{chresolv} where they determined the secrecy capacity of this channel under the strong secrecy metric. In a different direction, Bloch and Laneman \cite{PartialCSIT_2013} considered the case of fast fading wiretap channels with partial CSIT; they gave a set of achievable secrecy rates under the strong secrecy metric for this channel. Their solution relies on an optimization problem that has no closed form solution and thus it represents the \textit{best known} secrecy rate on the fast fading channel with partial CSIT. In the case of fast fading channels with No-CSIT, it was only recently shown in \cite{us}, \cite{stochasticdegrade}, \cite{mukherjee_ulukus_2013} that positive rates are actually achievable and an upper bound for the secrecy capacity is also derived. For a special class of fast fading No-CSIT channels, \cite{stochasticdegrade} actually finds the secrecy capacity of these channels under the weak secrecy constraint. In \cite{almostuniversal}, a positive semantically secure achievable rate is obtained for fast fading channels with No-CSIT. To date, there are few results involving semantic security on fast fading wiretap channels. In particular, no one has constructed a wiretap scheme that achieves the best possible semantically secure rates for each case of CSIT. Moreover, hardly any wiretap schemes exist for fast fading channels that are tangle/efficient \textit{and} come close to the best possible rates, even in the lesser weak and strong cases.

\subsection{Summary of Results}
The main purpose of this paper is to amplify results of physical layer security into a more practical setting. We prove all of our results using the semantic security metric, the most demanding security metric in this field. Our wiretap coding scheme developed is modular in the sense that it can immediately be adapted to any existing channel to provide semantic security; furthermore, it is shown to be concrete and efficient\footnote{As will be made clear in \Cref{scheme}, we only prove the \textit{preprocessor} is concrete and efficient; however, if the error correcting code is also such, then so is the entire wiretap coding scheme.}.

To prove our wiretap coding scheme is semantically secure, we bound the semantic leakage asymptotically (\Cref{lem:LHL}). We do this by upgrading the strong leakage bounds found in \cite{UHF}. In particular, we optimize over all message distributions. As in \cite{explicitGaussianWiretap,UHF}, our wiretap scheme is a modular scheme consisting of a preprocessor based on UHFs. However, in order to guarantee that our scheme is semantically secure, we require the UHF to also have \textit{additional} properties (we dub UHFs with these additional properties as semantically secure universal hash families - SS-UHFs). The additional properties are non-restrictive in general and we provide a particular implementation of an SS-UHF based on finite field arithmetic that is concrete and quadratic time efficient. In effect, our SS-UHF based preprocessor is a converter that takes in an off-the-shelf ECC and converts it to a semantically secure wiretap coding scheme (\Cref{thm:LHL3}).

In \Cref{procedure} below, we outline the necessary steps for using our wiretap scheme on an arbitrary wiretap channel. Use of this procedure attains semantic security for any wiretap channel contingent on certain conditions being satisfied which are derived from the wiretap channel. We show that these conditions are indeed satisfied for the DMC, AWGN, and fast fading wiretap channels where we examine the fading channels with various levels of instantaneous channel state information at the transmitter. In other words, we \textit{demonstrate} this procedure, in effect, proving that our wiretap coding scheme can achieve semantically secure rates on these channels.

The following are our specific contributions on each of the aforementioned channels.
\begin{itemize}
\item \textit{DMC} - In \Cref{thm:DMWC}, we reestablish the result given by Tal and Vardy's upgrade \cite{channelupgrading} of Bellare, Tessaro, and Vardy's original result \cite{cryptoTreatment,semanticallySecure}; that is, we show our wiretap coding scheme achieves the semantic secrecy capacity of any symmetric, degraded, discrete memoryless wiretap channel. However, we allow \textit{any} ECC for the main point-to-point channel in our construction. This is in contrast to the previous results that impose certain restrictions on the ECC in order to achieve secrecy capacity. 
\item \textit{AWGN} - In \Cref{thm:awgnrates}, we reestablish \cite{AWGNsemanticRecent} by constructing a concrete, end-to-end efficient wiretap scheme and prove that it can achieve the secrecy capacity on the AWGN wiretap channel under semantic security. However, we prefer our wiretap scheme in the fact that it is modular: the same preprocessor used here can be used on any channel without modification. 
\item \textit{No-CSIT} - In \Cref{thm:ImaxLessCE}, we prove that our wiretap scheme achieves the semantic secrecy capacity here for the case when the eavesdropper's channel is \textit{stochastically degraded} (cf. \cite{stochasticdegrade}) with respect to the main channel. Furthermore, in other cases, we provide a set of semantically secure achievable rates. 
\item \textit{Partial CSIT} - In \Cref{thm:ImaxPartialCSIT}, we prove that our wiretap scheme achieves the best known achievable secrecy rates to date (cf. \cite{bloch_barros_2011}) with semantic security.
\item \textit{Full CSIT} - In \Cref{thm:ImaxFullCSIT}, we prove that our scheme can actually achieve the strong secrecy capacity in this setting with semantic security thereby proving that semantic secrecy capacity is equivalent to the strong secrecy capacity and hence also the weak secrecy capacity. 
\end{itemize}

All of the achievable semantically secure rates on these channels can be attained concretely and efficiently (\Cref{prop:SSUHF} and \Cref{prop:UHFisEfficient}) - since our preprocessor is already such, one only needs to concentrate on finding an error correcting code that is concrete and efficient. Once this is done, the entire wiretap coding scheme is concrete and efficient! In other words, we have converted the problem of finding \textit{good} wiretap coding schemes into a problem of finding \textit{good} error correcting coding schemes where good here means concrete and efficient.

To recap, we give in this paper a procedure for attaining semantically secure rates in a concrete and efficient way for arbitrary wiretap channels. We apply this procedure in particular to the five aforementioned channels. Therefore, if the reader desires to attain semantically secure rates on one of these channels, all that remains is to find an error correcting code. As a special case, we have pointed the reader to an ideal error correcting code for the AWGN wiretap channel, thereby completing the procedure in this case in full. If the reader wants to attain semantic security on a wiretap channel not listed above, then the reader must apply \Cref{procedure} in its entirety. Specifically, the reader must check that the hypothesis of \Cref{thm:LHL3} is satisfied for that channel.

\subsection{Outline}
The remainder of the paper is organized as follows. \Cref{sec:Prelims} introduces notation and gives the preliminary mathematical background necessary to proceed through the rest of the paper. \Cref{scheme} presents our modular wiretap coding scheme and gives a concrete and efficient implementation of the preprocessor based on finite field arithmetic. \Cref{sec:LHL} analyzes both the security and achievable rates of our proposed wiretap scheme and gives a procedure for how to utilize our main results on an arbitrary (discrete or continuous) wiretap channel. In \Cref{sec:App1}, we apply this procedure to the DMC and AWGN wiretap channels as a first application and show how our wiretap scheme \textit{replicates} the best results from literature. \Cref{section:fading} considers fast fading wiretap channels with various levels of CSIT (No-CSIT, partial CSIT, and full CSIT) and gives semantically secure achievable rates for each of these. Moreover, we show how our wiretap scheme in these cases \textit{exceeds} the best results from literature.

In an attempt to give a more polished presentation, we have assigned nearly all of the proofs to the appendices.

\section{Preliminaries}\label{sec:Prelims}

\subsection{Notation and Conventions}
We shall write $a^n$ to denote an $n$-dimensional vector where $a_i$ denotes the $i$-th component, i.e., $a^n = (a_1,\dots,a_n)$. We use the usual notation $\norm{a^n}$ to denote the Euclidean norm. We shall denote the indicator (or characteristic) function by $\indicator_\mathcal{A}(x)$ or $\indicator\left(x \in \mathcal{A} \right)$ and will take all logarithms in this paper to be base $2$ unless we write $\ln$, for which we mean the logarithm of base $e$. We will write $\N$, $\R$ and $\C$ to denote the set of natural, real, and complex numbers respectively. With a slight abuse of notation, we will write $\R_{+}$ to denote the set of \textit{non-negative} reals. We will write $|\mathcal{A}|$ to denote the cardinality of set $\mathcal{A}$.

We will denote random variables by capital letters and will denote the spaces for which a random variable is defined by a respective scripted letter, e.g.,\ $A$ is a random variable with values in $\mathcal{A}$. As usual we write $X \in \U{\left( \mathcal{X} \right)}$ to denote that $X$ is a uniform random variable over some discrete set $\mathcal{X}$; we write $Y \in \mathcal{N}\left(a,b\right)$ to denote that $Y$ is a real Gaussian random variable with mean $a$ and standard deviation $b$; we write $Z \in \mathcal{CN}\left(a,b\right)$ to denote that $Z$ is a circularly symmetric complex Gaussian random variable with mean $a$ and standard deviation $b$.

We shall use the notation of \cite{csiszarkorner,UHF} and let $I(A \sep B)$ denote the usual mutual information between random variables $A$ and $B$. 
We write $A \bot B$ when random variable $A$ is independent of $B$. We write $\P\left[\mathcal{A}\right]$ to denote the probability of event $\mathcal{A}$ and $\E \left[A\right]$ to denote the expected value of random variable $A$. When we want to be explicit about which random variable we are taking the probability (resp. expected value) with respect to, we shall denote the random variable by a subscript. 

We denote all probability densities\footnote{Sometimes when we have a probability mass function, instead we will use the notation $P(\cdot)$ with appropriate subscripts as necessary.} by $\omega(\,\cdot\,)$ defined by the Radon-Nykodym derivative with respect to some implicit reference measure; we will almost always denote this reference measure by $\mu$. We denote the conditional probability density in an analogous way as $\omega(\,\cdot\,|\,\cdot\,)$. As an example of our notation, if $A$ and $B$ are random variables on $\mathcal{A}$ and $\mathcal{B}$ respectively, then $\omega(a)$ denotes the probability density of $A$ and $\omega(b|a)$ denotes the conditional probability density of $B$ given $A = a$.

When algorithms are completed in polynomial time (in the worst case) then we take up the standard convention and call such algorithms \bfit{efficient}.

\subsection{Channels}
Let $\X$ and $\Y$ be sets. We shall denote a \bfit{stochastic map} by $T: \X \leadsto \Y$. Given $x \in \X$, a stochastic map assigns a likelihood that $x$ will map to a certain $y \in \Y$. For each $x \in \X$, this induces the random variable $T(x)$. The support of this random variable, $\text{supp}(T(x)) \subset \Y$, is the elements in $\Y$ that $T$ can map $x$ to with non-zero likelihood.

Let $T: \X \leadsto \Y$ be some stochastic map, $X$ a random variable on $\X$, $\mu$ some reference measure on $\Y$, and $Y = T \circ X = T(X)$. We will call the conditional density $\omega(y|x)$ the \bfit{transition density} of the stochastic map $T$ and we will call the tuple $(\X, \omega(y|x),\Y)$ a \bfit{channel}. We will often abuse language/notation and call $T$ itself a channel. The transition density probabilistically tells us how the channel is mapping $\X$ to $\Y$. Given that some symbol $x \in \X$ was sent across the channel, the probability that $Y$ is in some subset $\mathcal{U} \subset \Y$ is given by $\int_{\mathcal{U}} \omega(y|x) \mu(dy)$.

For the rest of this paper, we will be considering \textit{subnormalized} channels: channels with transition densities such that $\int_{\Y} \omega(y|x) \mu(dy) \leq 1$. This is a technical condition that allows us to define the following. Given a channel $T = (\X, \omega(y|x), \Y)$ and subset $\T \subset \X \times \Y$ denote $\omega_\T(y|x) = \omega(y|x) \indicator\left( (x,y) \in \T \right)$. This induces a restricted channel as follows. Given that $x \in \X$ was sent across the restricted channel, the probability that $Y$ is in some subset $\mathcal{U} \subset \Y$ is given by $\int_{\mathcal{U}} \omega_\T(y|x) \mu(dy) = \int_{\mathcal{U}}\omega(y|x)\indicator\left((x,y) \in \T \right) \mu(dy)$.

\subsection{Error Correcting Codes}
We will always refer to the number of channel uses\footnote{Note that we are only considering discrete-time channels in this work.} as the \textit{block length} (of the code) and denote it by $n$. As usual, we will mainly be considering the $n$-letter extension of channel $T$ notated by $T^n = (\X^n, \omega(y^n|x^n), \Y^n)$.

Let $\M'$ be some finite message set. An $n$-length encoder for $T^n$ is an injective function $e_n: \M' \to \X^n$. The image $e_n(\M') \subset \X^n$ is called the codebook and is denoted $\codebook_n$. Elements of the codebook are referred to as codewords. An $n$-length decoder for $T^n$ is a function $d_n: \Y^n \to \M'$ and an $n$-length code is a tuple $\code_n = (e_n,d_n)$. The rate of the code is given by $R_{\code_n} = \frac{1}{n}\log|\M'|$. Lastly, a family of codes $\{\code_n\}_{n \in \N}$ is called a coding scheme $\code$ with rate given by $R_\code = \lim_{n \to \infty } R_{\code_n}$, where we assume this limit exists.

The maximum probability of error for code $\code_n$ is given by $\P_e (\code_n) = \max_{M' \in \M'} \P[(d_n \circ T^n \circ e_n) (M') \neq M']$. If $\P_e (\code_n)$ is sufficiently small then $\code_n$ is called an error correcting code (ECC). If every code in scheme $\code$ is an ECC, we call $\code$ an ECC scheme. If $\P_e (\code_n) \to 0$ as $n\to \infty$ then we say the scheme $\code$ is reliable. In particular, if $\log(\P_e (\code_n)) \leq -a n^b$ for some constants $a,b >0$ and for every $n$, then we call the ECC scheme $\code$ \bfit{exceptionally reliable}.

\begin{remark}
It was noted in \cite{cryptoTreatment} that ``good'' error correcting coding schemes in practice should satisfy the reliability condition exponentially fast; they called such ECC schemes ``strongly reliable.'' Due to the plethora of definitions containing the wording ``strong'' in the literature, we have instead called such ECC schemes here ``exceptionally reliable.''
\end{remark}

For continuous channels (i.e. $\X = \Y = \R$) we shall always impose the average power constraint as usual. In more detail, for some fixed constant $P$, we shall require the code to satisfy $\frac{1}{n} || x^n||^2 \leq P$ for every $x^n \in \codebook_n$.

The supremum of reliable achievable rates over all ECC schemes is known as the (point-to-point) channel capacity. We shall denote the channel capacity of a channel $T$ by $C_T$.

\subsection{Wiretap Codes}\label{sect:wiretapcodes}
Let $T = (\X, \omega(y|x), \Y)$ be a channel that models the communication between a transmitter Alice and intended receiver Bob. Let $E = (\X, \omega(z|x), \Z)$ be a channel modeling the unintended communication between Alice and a passive eavesdropper Eve. We call the pair of channels $W = (T,E)$ the \bfit{wiretap channel}.

Note that we have chosen the letters $T$, $E$, and $W$ so as to denote the \underline{T}ransmission channel, \underline{E}avesdropper's channel, and \underline{W}iretap channel. We also note that the $n$-letter wiretap channel is given by $W^n = (T^n, E^n)$.

The goal of physical layer security as modeled by a wiretap channel is for Alice to communicate information reliably to Bob while keeping that same information hidden from Eve. Let $M \in \M$ be the random variable representing the message Alice wants to impart to Bob yet keep secret from Eve. Let $Z^n \in \Z^n$ be the $n$-letter random variable representing Eve's output. To measure security, we recall the most common security metrics. 
\begin{itemize}
\item \cite{Wyner} \bfit{Weak}: \[\frac{1}{n} I(M \sep Z^n), \quad M \sim \U(\M).\]
\item \cite{maurer_1994} \bfit{Strong}\footnote{Strong security is sometimes referred to as MIS-R, cf. \cite{cryptoTreatment}.}: \[ I(M \sep Z^n), \quad M \sim \U(\M).\]
\item \cite{cryptoTreatment} \bfit{Semantic}: \[\max\limits_{P_M} I(M \sep Z^n).\]
\end{itemize}
We refer to each of these quantities as \bfit{leakage} and we say that a  coding scheme is secure under a given metric if its respective leakage goes to 0 as $n\rightarrow \infty$. In a similar fashion to exceptional reliability, we say that a coding scheme is \bfit{exceptionally secure} if the leakage is vanishing exponentially fast with $n$. 

\begin{remark}
The expression for semantic security above is technically called mutual information security (MIS) as originally defined in \cite{cryptoTreatment}. Semantic security (in the wiretap context) is actually defined using guessing probabilities. However, therein it was shown for discrete channels (and in \cite{continuousSemantic} for continuous channels) that MIS was equivalent to semantic security asymptotically. Thus, in the asymptotic regime there is no need to differentiate between the two metrics because each implies the other. Hence, our choice of name is technically justified.

However, one may still ask why we call the definition above ``semantic security'' when it is actually the definition of MIS; the reasoning is as follows. The definition of semantic security in \cite{cryptoTreatment} is named such to allude to the gold standard definition from computational based security \cite{goldwasser_micali_1984}. However, the definition of semantic security is considerably less tractable than the definition of MIS. In order to get the best of both worlds, we have chosen our naming convention. We note that it is a convention already followed by other works.
\end{remark}

Let $\wiretap = \{\wiretap_n\}_{n \in \N}$ be a coding scheme for channel $T$ (and inherently channel $E$) using message set $\M$. We say $\wiretap$ is a \bfit{$\mathfrak{X}$-wiretap coding scheme}, where $\mathfrak{X} \in \{\text{weak},\text{strong},\text{semantic}\}$, if it satisfies each of the following.
\begin{itemize}
\item \bfit{Reliability:} $\wiretap$ is a reliable ECC scheme for $T$.
\item \bfit{Security:} $\wiretap$ is secure (relative to $E$) using the $\mathfrak{X}$-metric.
\end{itemize}

If these two conditions are satisfied \textit{exceptionally}, then we say that $\wiretap$ is an \bfit{outstanding $\mathfrak{X}$-wiretap coding scheme}.

If $\Rs=\lim_{n\to\infty} \frac{1}{n} \log|\M|$ is the rate of an $\mathfrak{X}$ wiretap coding scheme, then we say $\Rs$ is an \bfit{$\mathfrak{X}$ achievable secrecy rate.} We call the supremum of all $\mathfrak{X}$ achievable secrecy rates the \bfit{$\mathfrak{X}$ secrecy capacity} denoted by $\Cs\bigr|_\mathfrak{X}$ or simply $\Cs$ when the metric is clear from context.
\begin{fact}\label{fact:CsEqual}
If all secure rates $\Rs$ achievable under the weak secrecy metric are also achievable under the semantic secrecy metric, then:
\[\Cs\bigr|_\text{weak}=\Cs\bigr|_\text{semantic}.\]
\end{fact}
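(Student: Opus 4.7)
The plan is to establish the two inequalities $\Cs\bigr|_\text{weak}\geq\Cs\bigr|_\text{semantic}$ and $\Cs\bigr|_\text{weak}\leq\Cs\bigr|_\text{semantic}$ separately. The second of these is precisely the statement of the hypothesis: every rate achievable under the weak metric is achievable under the semantic metric, hence the supremum of the former cannot exceed the supremum of the latter. So the entire content of the proof lies in justifying the first inequality, i.e., showing that semantic security implies weak security (and in particular reliability is the same in both definitions, so any semantic wiretap scheme is automatically a weak wiretap scheme).

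For the first inequality, I would begin with any semantic $\mathfrak{X}$-wiretap coding scheme $\wiretap=\{\wiretap_n\}$ with message set $\M$ and rate $\Rs$. Reliability is identical in both definitions, so only the security condition needs attention. By definition of semantic security, $\max_{P_M} I(M\sep Z^n)\to 0$ as $n\to\infty$. In particular, specializing to the uniform distribution $M\sim\U(\M)$ gives $I(M\sep Z^n)\leq\max_{P_M} I(M\sep Z^n)\to 0$, and dividing by $n$ yields $\frac{1}{n}I(M\sep Z^n)\to 0$, which is exactly the weak security condition. Hence $\wiretap$ is a weak wiretap scheme of rate $\Rs$, so every semantically achievable secrecy rate is weakly achievable, and taking suprema gives $\Cs\bigr|_\text{semantic}\leq\Cs\bigr|_\text{weak}$.

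Combining the two inequalities yields the claimed equality. I do not anticipate a serious obstacle: the argument is essentially a one-line manipulation of definitions plus an appeal to the hypothesis. The only minor subtlety worth flagging in the write-up is that the maximum in the semantic definition is taken over \emph{all} message distributions (including the uniform one used implicitly in the weak definition), which is what makes the implication ``semantic $\Rightarrow$ weak'' immediate rather than requiring any auxiliary estimates.
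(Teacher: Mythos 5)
Your proof is correct. The paper states this as a ``Fact'' without proof, treating it as immediate from the definitions; your two-inequality decomposition is exactly the intended routine argument. The only direction requiring work is $\Cs\bigr|_\text{semantic}\leq\Cs\bigr|_\text{weak}$, and your observation that specializing the maximum over $P_M$ to the uniform distribution, then dividing by $n$, turns semantic security into weak security is precisely the right (and essentially unique) way to see it.
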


\subsection{Universal Hashing}

Let $\M = \{0,1\}^k$ be the set of binary strings of length $k$, $\M'$ and $\seed$ be finite sets, and $S$ a \textit{uniform} random variable on $\seed$. Consider now a family of a finite number of functions indexed by $\seed$: \[\mathcal{F} = \{ f_s:\Mprime \to \M \,|\, s \in \seed\}.\]
\begin{enumerate}[(i)]
\item $\mathcal{F}$ is called a \bfit{universal hash family} (UHF) if for every $m'_1 \neq m'_2 \in \M'$,
\[|\{s\in \seed\,|\, f_s(m'_1) = f_s(m'_2)\}| \leq \frac{|\seed|}{2^k}.\]
\item $\mathcal{F}$ is called \bfit{uniform} if for every $m' \in \M'$ and for every $m \in \M$,
\[|\{s\in \seed\,|\, f_s(m') = m\}| = \frac{|\seed|}{2^k}.\]
\item $\mathcal{F}$ is called \bfit{$b$-regular} if for every $s \in \seed$ and for every $m \in \M$, \[|\{m' \in \Mprime \,|\, f_s(m') = m\}| = 2^b. \] 
\item $\mathcal{F}$ is called \bfit{invertible} if for each $s \in \seed$ there exists some stochastic mapping $\phi_s : \M \leadsto \M'$ such that for all $m \in \M$ and $y \in \text{supp}(\phi_s(m))$, $f_s(y) = m$. If $\phi_s(m)$ is a \textit{uniform} random variable for every $s \in \seed$ and $m \in \M$ then we call $\mathcal{F}$ \bfit{evenly invertible}.
\item Lastly, we call $\mathcal{F}$ a \bfit{semantically secure universal hash family} (SS-UHF) if it is: (i) universal, (ii) uniform, (iii) $b$-regular, and (iv) evenly invertible.
\end{enumerate}

Many of the definitions here coincide with those found in computer science literature. The conditions of being a \textit{universal hash family} (as introduced in \cite{CARTER}) and \textit{uniform} are found in most textbooks on hash families. The condition of being \textit{$b$-regular} and \textit{invertible} can be found in \cite{semanticallySecure} and \cite{UHF}. That being said, we have invented some terminology. We have dubbed hash families that are universal, uniform, $b$-regular, and evenly invertible as \textit{semantically secure universal hash families} to emphasize that hash families with these four properties are the proper ones for inducing semantic security (see \Cref{sec:LHL}).

\subsection{$\epsilon$-smooth $\alpha$-Mutual Information}
In order to measure the amount of information leaked to the eavesdropper using our wiretap scheme, we will need to employ the use of a different measure of information, known as $\alpha$-mutual information. $\alpha$-mutual information is defined using R\'enyi entropy and is actually a generalization of the usual mutual information defined by Shannon. 

For a discrete random variable $M'$ over $\M'$, the following generalizes Shannon's entropy and is called R\'enyi entropy of order $\alpha \in (1,\infty)$ \cite{renyi1961measures}:
$H_\alpha (M') = \frac{1}{1-\alpha} \log \left( \sum_{m'} \omega(m')^\alpha  \right)$. This can be extended by continuity to the cases of $\alpha = 1$ and $\alpha = \infty$ where $H_1(M')$ is the usual Shannon entropy and $H_\infty(M')$ is the usual min-entropy. In particular, when $M'$ is \textit{uniform}, for any $\alpha \in [1,\infty]$ we have $H_\alpha (M') = \log( |\M'| )$, a fact we will use frequently.

In a similar way, one can define conditional R\'enyi entropy, however, there is no universal notion of such a definition in literature as different definitions can be employed based on the specific properties one desires (cf. \cite{Renyi2017,Berens}). We will be using Arimoto's definition \cite{arimoto1977information,ConditionalRenyi2018} given as follows. 

Let $Z^*$ be an arbitrary random variable over $\Z^*$ (with measure $\mu$ on $\Z^*$) and $M'$ a discrete random variable over $\M'$. Then conditional R\'enyi entropy of order $\alpha \in (1,\infty)$ is given by:
\begin{align*}
&H_\alpha(M'|Z^*)\\
&\quad = \frac{\alpha}{1-\alpha} \log  \int_{\Z^*} \omega(z^*)  \left(\sum_{m'}\omega(m'|z^*)^\alpha \right)^{\frac{1}{\alpha}} \mu(dz^*).
\end{align*}
Just as in the case of (unconditioned) R\'enyi entropy, this definition can be extended to the cases of $\alpha = 1$ and $\alpha = \infty$ by continuity. For $\alpha \to 1$, one easily checks using L'Hospitals rule that $H_\alpha (M' |Z^*)$ becomes $H(M'|Z^*)$, the conditional Shannon entropy. For $\alpha \to \infty$, the definition becomes
\[H_\infty (M'|Z^*) = -\log \int_{\Z^*} \omega(z^*) \max_{m'} \omega(m'|z^*)\mu(dz^*),\] and is often referred to as conditional min-entropy. Another important case for which we would like to emphasize is when $\alpha = 2$: \[H_2(M'|Z^*) = -2\log \int_{\Z^*} \omega(z^*) \left(\sum_{m'}\omega(m'|z^n)^2 \right)^{\frac{1}{2}} \mu(dz^*),\] which is often referred to as conditional collision entropy.

Now let us finally define $\alpha$-mutual information: the R\'enyi extension to Shannon's mutual information. Again, there is no universal definition in literature but we will be using the definition put forth in \cite{ConditionalRenyi2018} for the special case when $M'$ is a uniform random variable.

Let $M'$ and $Z^*$ be random variables as before except now we require $M'$ to be uniform over $\M'$. For $\alpha \in [1,\infty]$ we define the \bfit{$\alpha$-mutual information} between $M'$ and $Z^*$ by \[I_\alpha(M' \sep Z^*) = \log |\M'|  - H_\alpha(M'|Z^*).\]
Notice that $I_1(M' \sep Z^*)$ is exactly Shannon's mutual information $I(M' \sep Z^*)$ so in this case we will drop the subscript. Moreover, for the case of $\alpha = 2$, we will often call $I_2(M' \sep Z^*)$ \bfit{collision-information} and for the case of $\alpha = \infty$, we will often call $I_\infty(M' \sep Z^*)$ \bfit{max-information}. 

\begin{fact}\label{fact:renyiOrdering1} \cite{ConditionalRenyi2018,arimotoProperties}
 For any $\alpha \in [1,\infty]$, $I_\alpha(M'\sep Z^*)$ is monotonically increasing in $\alpha$.
\end{fact}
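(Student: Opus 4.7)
My plan is to reduce the claim to the classical monotonicity of R\'enyi divergence via a Sibson-type identity. Since $M'$ is uniform on $\M'$, we have $H_\alpha(M')=\log|\M'|$ for every $\alpha\in[1,\infty]$, so $I_\alpha(M'\sep Z^*)=\log|\M'|-H_\alpha(M'|Z^*)$ being monotonically increasing in $\alpha$ is equivalent to Arimoto's conditional R\'enyi entropy $H_\alpha(M'|Z^*)$ being monotonically \emph{decreasing} in $\alpha$. Rather than differentiating the Arimoto formula directly---an approach that is possible but messy---I would route the proof through a variational expression for $I_\alpha$ that makes the monotonicity transparent.

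The first step is to establish the Sibson identity
\[
I_\alpha(M'\sep Z^*)=\inf_{Q}D_\alpha\!\left(\omega(m',z^*)\,\big\|\,U_{\M'}(m')\,Q(z^*)\right),
\]
where $U_{\M'}$ denotes the uniform density on $\M'$, $Q$ ranges over probability densities on $\Z^*$ with respect to $\mu$, and $D_\alpha$ is the usual R\'enyi divergence of order $\alpha$. This is a direct Lagrangian computation: expanding $D_\alpha(\omega\|U_{\M'}\otimes Q)$ and minimizing over $Q$ subject to $\int Q\,\mu(dz^*)=1$ yields the optimizer $Q^*(z^*)\propto\bigl(\sum_{m'}\omega(z^*|m')^\alpha\bigr)^{1/\alpha}$; substituting $Q^*$ back, using $U_{\M'}(m')=1/|\M'|$, and comparing against the Arimoto formula for $H_\alpha(M'|Z^*)$ (after converting $\omega(m'|z^*)$ to $\omega(z^*|m')$ via Bayes' rule, which is clean precisely because $M'$ is uniform) shows that the infimum collapses to $\log|\M'|-H_\alpha(M'|Z^*)=I_\alpha(M'\sep Z^*)$.

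The second step invokes the classical monotonicity of R\'enyi divergence: for any pair of probability measures $P$ and $R$, the function $\alpha\mapsto D_\alpha(P\|R)$ is non-decreasing on $[1,\infty]$, a fact that follows from Jensen's inequality applied to the convex map $t\mapsto t^{(\beta-1)/(\alpha-1)}$ for $1\le\alpha\le\beta$. Applied pointwise in $Q$, this gives $D_{\alpha_1}(\omega\|U_{\M'}\otimes Q)\le D_{\alpha_2}(\omega\|U_{\M'}\otimes Q)$ for every $Q$ whenever $1\le\alpha_1\le\alpha_2\le\infty$; taking infima over $Q$ preserves the inequality and yields $I_{\alpha_1}(M'\sep Z^*)\le I_{\alpha_2}(M'\sep Z^*)$. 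Continuity of $\alpha\mapsto D_\alpha$ at the endpoints $\alpha=1$ (where $D_\alpha$ becomes the KL divergence) and $\alpha=\infty$ covers the limiting cases.

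The main obstacle I anticipate is executing the Sibson derivation cleanly in the Radon--Nikodym setting adopted throughout the paper: one must verify that the candidate optimizer $Q^*$ is integrable so that it normalizes to a density, that the interchanges of sum and integral in the Lagrangian calculation are justified, and that the infimum is attained rather than merely approached. These are routine measure-theoretic checks, but they consume most of the effort. Once they are dispatched, monotonicity of $I_\alpha$ is essentially immediate from the divergence-level monotonicity, which is what makes this route preferable to a direct calculus attack on $H_\alpha(M'|Z^*)$.
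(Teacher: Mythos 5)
Your argument is correct but differs from the route the paper points to. The paper does not prove this fact itself; it cites \cite{ConditionalRenyi2018,arimotoProperties}, and those references establish monotonicity by working directly with Arimoto's formula for $H_\alpha(M'|Z^*)$---a power-mean/Jensen estimate on the inner quantity $\bigl(\sum_{m'}\omega(m'|z^*)^\alpha\bigr)^{1/\alpha}$. That direct route matters downstream: in \Cref{fact:renyiOrdering2} the authors reuse ``the proof given for'' the cited proposition verbatim, merely substituting the subnormalized $\omega_\T(m'|z^*)$ for $\omega(m'|z^*)$, something a variational characterization does not accommodate as readily. Your Sibson route is a genuinely different and, for the unconditional statement, arguably cleaner proof: because $M'$ is uniform, Arimoto's $I_\alpha$ coincides with Sibson's, so the identity $I_\alpha(M'\sep Z^*)=\inf_Q D_\alpha\bigl(\omega(m',z^*)\,\big\|\,\omega(m')\,Q(z^*)\bigr)$ does hold, and monotonicity of R\'enyi divergence together with the fact that pointwise inequalities survive infima yields the claim immediately. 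The trade-off is portability: the Sibson identity presumes that $\omega(m',z^*)$ is a genuine joint probability, so your argument would not transfer cleanly to the typical-set-restricted, subnormalized version the paper also needs, whereas the direct Jensen estimate survives that substitution essentially unchanged. As a proof of the fact as stated, yours is sound, modulo the routine measure-theoretic checks you yourself flag (integrability and normalizability of the candidate optimizer, attainment of the infimum, endpoint continuity at $\alpha=1$ and $\alpha=\infty$).
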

Note that this fact justifies the name of $I_\infty(M'\sep Z^*)$ as max-information because it measure the \textit{most} amount of information of all of the $\alpha$-mutual informations.

The $\alpha$-mutual information also admits several other desirable properties of an ``information measure'' which can be found in \cite{arimotoProperties}. Note however that this definition of $\alpha$-mutual information is not symmetric in its arguments and does not satisfy the chain rule in general. This of course is in contrast to Shannon's mutual information.

To facilitate our proofs later on we will also need a concept called $\epsilon$-smooth $\alpha$-mutual information. Basically, we will define $\alpha$-mutual information on a portion of the entire space that probabilistically contains enough content up to some $\epsilon$. To make this rigorous we first introduce the concept of a typical set.

For $\epsilon \geq 0$, we call a subset $\T \subset \M' \times \Z^*$ a \bfit{$(1-\epsilon)$-typical set} if \[\P\left[(M',Z^*) \in \T \,|\, M' = m' \right]\geq 1-\epsilon, \qquad \forall m' \in \M'.\]
Furthermore, we will denote the set of all $(1-\epsilon)$-typical sets by $\TT$. Typical sets intuitively contain almost all that there is to know about our space up to some $\epsilon$, hence the name typical. 

For some typical set $\T$, we first define the conditional R\'enyi entropy of order $\alpha$ restricted to $\T$. This is simply given by 
\begin{align*}
&H_\alpha^\T(M'|Z^*)\\
&\quad = \frac{\alpha}{1-\alpha} \log  \int_{\Z^*} \omega(z^*)  \left(\sum_{m'}\omega_\T(m'|z^*)^\alpha \right)^{\frac{1}{\alpha}} \mu(dz^*).
\end{align*}

Given $\epsilon \geq 0$ define \bfit{$\epsilon$-smooth $\alpha$-mutual information} for $M'$ uniform over $\M'$ by
\[I_\alpha^\epsilon(M'\sep Z^*) = \inf\limits_{\T \in \TT} I_\alpha^\T(M' \sep Z^*) ,\] where $\alpha$-mutual information evaluated on $\T$ is given by \[\ I_\alpha^\T(M' \sep Z^*) = \log|\M'| - H_\alpha^\T(M'|Z^*).\]

Given some threshold $\epsilon$, we find the smallest value that $\alpha$-mutual information could possibly be when defined on the \textit{subnormalized} channels corresponding to those sets that contain \textit{enough} probability with respect to our threshold. Later, we will bound the leakage between the transmitter and eavesdropper as an increasing function of this metric; thus, defining $\epsilon$-smooth $\alpha$-mutual information using the infimum provides the tightest bound we should expect when $\epsilon$ is our threshold. 

Note that when $\epsilon = 0$, $\TT$ contains only sets equal to the entire space less a set of measure zero and hence $I_\alpha^0(M' \sep Z^*) = I_\alpha(M'\sep Z^*)$.

Analogous to \Cref{fact:renyiOrdering1} we have the following ordering for $\epsilon$-smooth $\alpha$-mutual information, a result we will use in proving our wiretap scheme is secure.

\begin{lemma}\label{fact:renyiOrdering2} For any $\T \subset \M' \times \Z^*$ and $\alpha \in [1,\infty]$, $I_\alpha^\T(M'\sep Z^*)$ is monotonically increasing in $\alpha$.
\end{lemma}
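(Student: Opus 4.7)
The plan is to reduce the restricted quantity $H_\alpha^\T(M'|Z^*)$ to a standard (unrestricted) Arimoto conditional R\'enyi entropy under a suitably modified joint distribution, so that the monotonicity will follow from the general result underlying \Cref{fact:renyiOrdering1}. Since $I_\alpha^\T(M' \sep Z^*) = \log|\M'| - H_\alpha^\T(M'|Z^*)$, it suffices to show that $\alpha \mapsto H_\alpha^\T(M'|Z^*)$ is monotonically \emph{decreasing} in $\alpha$.

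The first step is a renormalization. For each $z^* \in \Z^*$, set $S(z^*) := \sum_{m'} \omega_\T(m'|z^*) \in [0,1]$ (the subnormalization induced by the indicator cutting out $\T$), and let $C := \int_{\Z^*} \omega(z^*) S(z^*) \mu(dz^*) \leq 1$. Introduce the auxiliary joint law on $(M', Z^*)$ whose marginal and conditional densities are $\tilde\omega(z^*) := \omega(z^*) S(z^*)/C$ and $\tilde\omega(m'|z^*) := \omega_\T(m'|z^*)/S(z^*)$, respectively (defined arbitrarily on the $\tilde\omega$-null set $\{S = 0\}$). Pulling the factor $S(z^*)$ out of the inner $\ell^\alpha$-norm and absorbing the outer factor $\int \omega S\,\mu(dz^*)$ into $C$, the defining formula for $H_\alpha^\T$ rewrites as
\[
H_\alpha^\T(M'|Z^*) \;=\; \frac{\alpha}{1-\alpha}\log C \;+\; \tilde H_\alpha(M'|Z^*),
\]
where $\tilde H_\alpha(M'|Z^*)$ is the ordinary Arimoto conditional R\'enyi entropy under the tilde-distribution.

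The second step is to show that each summand is decreasing in $\alpha$. For the first, a direct calculation gives $\frac{d}{d\alpha}\!\bigl[\frac{\alpha}{1-\alpha}\log C\bigr] = \log C / (1-\alpha)^2 \leq 0$, since $C \leq 1$ forces $\log C \leq 0$. For the second, I will invoke the well-known monotonicity of Arimoto's conditional R\'enyi entropy in $\alpha$ for arbitrary joint distributions, cf.\ \cite{arimotoProperties,ConditionalRenyi2018}. Summing the two monotonicities yields that $H_\alpha^\T$ is decreasing, hence $I_\alpha^\T$ is increasing. The degenerate case $C = 0$ collapses $H_\alpha^\T$ to $+\infty$ (trivially monotone), and the boundary values $\alpha \in \{1, \infty\}$ are handled by taking the usual continuous limits.

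The main obstacle is that \Cref{fact:renyiOrdering1} as stated in the paper assumes $M'$ is uniform, whereas the marginal of $M'$ under the tilde-law is generally non-uniform---it is proportional to $\P[(M',Z^*) \in \T \mid M' = m']$, and $\T$ is an arbitrary subset in the hypothesis of the lemma (not necessarily a typical set, so the conditional probabilities need not equalize). Consequently, the proof cannot cite \Cref{fact:renyiOrdering1} verbatim and must explicitly appeal to the stronger distribution-free form of the Arimoto--R\'enyi monotonicity, which is the form actually proved in \cite{arimotoProperties,ConditionalRenyi2018}. If one prefers a self-contained argument, an alternative route is to prove decreasingness of $H_\alpha^\T$ directly via a Minkowski-type mixed-norm inequality comparing the $\ell^\alpha_{m'}(L^1_\omega)$ and $L^1_\omega(\ell^\alpha_{m'})$ norms of $\omega_\T$; this is longer but avoids the external citation entirely.
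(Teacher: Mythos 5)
Your proof is correct, but it proceeds along a genuinely different route from the paper's. The paper's proof re-runs the argument of \cite[Proposition 1]{ConditionalRenyi2018} with the subnormalized densities $\omega_\T(\cdot\mid\cdot)$ substituted for $\omega(\cdot\mid\cdot)$, observing that no step there relies on normalization. You instead renormalize: writing $\omega_\T(m'|z^*) = S(z^*)\,\tilde\omega(m'|z^*)$ with $S(z^*) = \sum_{m'}\omega_\T(m'|z^*)$ and folding $S$ into the $z^*$-marginal, you obtain the clean additive identity $H_\alpha^\T(M'|Z^*) = \frac{\alpha}{1-\alpha}\log C + \tilde H_\alpha(M'|Z^*)$ with $C = \int_{\Z^*}\omega(z^*)S(z^*)\,\mu(dz^*)\le 1$, which reduces the claim to (i) monotonicity of $\frac{\alpha}{1-\alpha}\log C$, immediate from $\log C\le 0$ and $\frac{d}{d\alpha}\frac{\alpha}{1-\alpha}=\frac{1}{(1-\alpha)^2}>0$, and (ii) the monotone decrease of Arimoto's conditional R\'enyi entropy under a bona fide joint law, invoked as a black box from \cite{arimotoProperties,ConditionalRenyi2018}. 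The gain is modularity: you never have to reopen the cited proof to check which inequalities survive subnormalization. You are also right that \Cref{fact:renyiOrdering1} as packaged in this paper quietly carries a uniformity assumption on $M'$ (needed to equate $\log|\M'|$ with $H_\alpha(M')$), and your tilde-marginal of $M'$ is generically nonuniform, so the appeal must be to the distribution-free form of the Arimoto monotonicity in the references rather than to the paper's Fact --- that is exactly the right caution. As a side benefit, your decomposition makes transparent that $H_1^\T = +\infty$ (hence $I_1^\T = -\infty$) whenever $C < 1$, which is consistent with the claimed ordering and clarifies what happens at the $\alpha = 1$ endpoint.
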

\begin{proof}
This follows easily from the proof given for \cite[Proposition 1]{ConditionalRenyi2018} replacing the densities $\omega(m'|z^*)$ by $\omega_\T(m'|z^*)$ and noting that all inequalities still hold.
\end{proof}

\section{A Wiretap Coding Scheme} \label{scheme}
In this section we will furnish a wiretap coding scheme $\wiretap$ for an arbitrary\footnote{Here arbitrary indeed means \textit{any} discrete-time wiretap channel; however, a positive secrecy rate may not be attainable on some wiretap channels.} wiretap channel which is \textit{based} on a wiretap scheme put forth in \cite{semanticallySecure}, \cite{explicitGaussianWiretap}, and \cite{UHF}. We will first define each step of this scheme and show that it is reliable (we will show security in the next section). Then we will give a particular implementation and show that this implementation is efficient with respect to the block length $n$.

Over an arbitrary wiretap channel $W = (T,E)$ our wiretap coding scheme $\wiretap$ involves combining an SS-UHF with a reliable ECC already in use over the main point-to-point channel. This modular wiretap scheme is precisely the scheme put forth in \cite{semanticallySecure,explicitGaussianWiretap,UHF} except there the UHF was only required to be $b$-regular and evenly invertible. Here, we are also demanding that our UHF be \textit{uniform}. The necessity of this extra property will be elucidated in the next section when we prove that our scheme is semantically secure. 

Consider \Cref{fig:txscheme}; this describes our wiretap scheme overall. We will now describe in detail each layer.

\begin{figure}[ht]
  \begin{center}
  	\begin{tikzpicture}[auto, thick, node distance=2cm, >=triangle 45,thick,scale=0.81, every node/.style={scale=0.81}]
\draw
	node [name=input1] {\Large $M$}
	node [right of=input1] (Mprime) {\Large $M'$}
	node [right of=Mprime] (Xn) {\Large $X^{n}$}
    node [right of=Xn] (Yn) {\Large $Y^{n}$}
    node [right of=Yn] (Mhat) {\Large $\widehat{M'}$}
    node [right of=Mhat] (Mdec) {\Large $\hat{M}$}
    node [below of=Yn] (Zn) {\Large$Z^{n}$};
	\draw[->](input1) -- node {$\phi_s$}(Mprime);
 	\draw[->](Mprime) -- node {$e_n$} (Xn);
	\draw[->](Xn) -- node {$T^{n}$} (Yn);
    \draw[->](Yn) -- node {$d_n$} (Mhat);
    \draw[->](Mhat) -- node {$f_s$} (Mdec);
	\draw[->](Xn) |- node [near end] {$E^{n}$} (Zn);
\end{tikzpicture}
  \end{center}
  \caption{Wiretap coding scheme.}
  \label{fig:txscheme}
\end{figure}
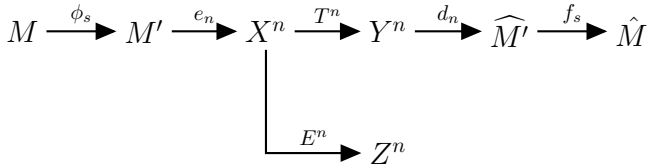

\subsection{Preprocessing Layer}
Consider the finite sets $\M=\{0,1\}^k$ and $\M' = \{0,1\}^l$ with $l > k$. We shall refer to $M \in \M$ as the \bfit{actual message} and $M' \in \M'$ as the \bfit{pseudo-message} because $M$ represents the information the transmitter \textit{actually} wishes to impart to the intended receiver securely, whereas $M'$ is some random variation of the actual message necessary for security. We will \textit{not} assume which distribution the message $M$ takes.

Over a fixed arbitrary finite set $\seed$, the transmitter will first draw a seed $S \sim \U(\seed)$ to be used for the remainder of transmission. We assume the seed is independent of the message $M$ and that the realized seed is \textit{publicly} available to all parties. All communication must take place over the wiretap channel; however, we show in \Cref{appendix:seedremoval} that the transmitter can send the seed before the transmission of an actual message with no asymptotic loss in rate or security. 

The transmitter now chooses an SS-UHF $\F = \{f_s: \Mprime \to \M \,|\,s \in \seed \}$. Suppose each function $f_s$ in the SS-UHF has its invertible stochastic mapping given by $\phi_s$. Choosing our message as $M = m$ and seed as $S = s$, we will choose our pseudo-message to be $M' = \phi_s(m)$. Since $\F$ is \textit{evenly} invertible and $b$-regular, $M'$ is a uniform random variable on $2^b$ elements of $\M'$. In particular, $\omega(m'|m,s) = 2^{-b}\indicator\left(m'\in \text{supp}(\phi_s(m)) \right)= 2^{-b}\indicator\left( f_s(m') = m \right)$.

\subsection{Coding Layer}
The transmitter chooses some reliable ECC scheme\footnote{We always assume that the scheme satisfies the power constraint for the channel if there is one.} $\code = \{\code_n\}_{n \in \N}.$ We will assume (as per standard) that each party has full knowledge of $\code$. Thus, for a given blocklength $n$, each party knows $\codebook_n$ is the codebook and we have inherently induced new channels: $T^n:\codebook_n \to \Y^n$ and $E^n:\codebook_n \to \Z^n$. We will henceforth be considering \textit{these} as the main transmission channel and eavesdropper's channel respectively for the remainder of this work. At this point the transmitter encodes the pseudo-message $M'$ using $e_n$, this will be a random variable $X^n = e_n(M')$ over $\codebook_n$. Now the transmitter sends $X^n$ over the wiretap channel $W = (T,E)$; that is, the channel input $X^n$ is sent across $T^n$ but also across $E^n$ inherently.

\subsection{Intended Receiver's Decoding Layer}
The intended receiver will receive a noisy version of the channel input $Y^n = T^n(e_n(M'))$. The goal of the intended receiver is to correctly guess which realization of the random variable $M'$ was sent given the realization of the random variable $Y^n$. This is accomplished using the estimate $\widehat{M}' = d_n(Y^n)$. Since we have assumed $\code$ to be reliable, each $\code_n$ is an ECC and thus the probability of error $\P_e(\code_n)$ is considerably low. In particular, for some finite $n$ this means there is a high probability that $\widehat{M}'$ will equal $M'$; this equality happens almost surely asymptotically with $n$. In short, the intended receiver will be able to undo the coding layer entirely. 

Next, the intended receiver shall \textit{post-process} $\widehat{M'}$ to an estimate of the actual message $\widehat{M}$ using the hash function corresponding to the public seed $S$. That is, given that $S = s$ the intended receiver's estimate is given as $\widehat{M} = f_s(\widehat{M}')$. Since we assumed our SS-UHF to be invertible, if $\widehat{M}'$ is equal to $M'$ then the UHF is guaranteed to map $\widehat{M}'$ to $M$ (the original message); however, we showed that this happens almost surely asymptotically with $n$. In this sense, the pre/post processing layers do not subtract anything from our reliability. In more detail, if $\code$ is reliable to begin with then our entire wiretap scheme will also satisfy reliability. Furthermore, if $\code$ is exceptionally reliable, then our wiretap scheme is exceptionally reliable as well. 

\subsection{Eavesdropper's Decoding Layer}
Once the eavesdropper receives her channel output $Z^n = E^n(e_n(M'))$ she will attempt to decode it in a similar fashion to that of the intended receiver; however, we will not assume \textit{how} she decodes her output since that could affect our measure of security. As a side note, in contrast to computational based security methods, we also do \textit{not} assume the boundedness of resources at the eavesdropper.

\subsection{Discussion}
As in \cite{UHF}, we call the preceding scheme \textit{modular} since the pre/post processing layers are not intrusive to the main channel in any way in terms of either reliability or constructibility. That is, our preprocessing layer could be added to any already existing communication system without changing any core components of the original system.

\subsection{Explicit Construction}\label{explicit}
Does such a wiretap scheme exist? By extensions of Shannon's channel coding theorem we know that if $R_\code < C_T$ then a reliable ECC scheme $\code$ exists. Since our wiretap scheme is a concatenation of a pre/post processing layer with a reliable ECC, we now only need to be concerned if such a pre/post processing scheme exists; in particular, if an SS-UHF exists. 

In this subsection we give an explicit construction of an SS-UHF. Our construction is inspired by those given in \cite{semanticallySecure,explicitGaussianWiretap,UHF}; however, there, the UHF's can be shown to \textit{not} satisfy uniformity which is \textit{essential} to our proof showing our wiretap scheme is semantically secure in the next section.

Consider the following family of functions \[\F^* = \{f_{s,t}: \M' \to \M \;|\; s \in \{0,1\}^l \setminus 0^l,\ t \in \{0,1\}^l \}\] where $f_{s,t}(m') =  \left[ \left( s \odot m'\right) \oplus t\right]_k$ and $\M = \{0,1\}^k$ and $\M' = \{0,1\}^l$ as before. Here, all $l$-length bit strings correspond to their respective elements in the finite field $GF(2^l,\oplus,\odot)$ (where $\oplus$ and $\odot$ denote addition and multiplication in the field respectively), $[\cdot]_k$ selects the $k$ most significant bits, and $0^l$ denotes the all-$0$ bit string of length $l$ (which is correspondent to the additive identity in $GF(2^l)$). As a remark, we note that $\oplus$ here is equivalent to modulo-2/bitwise/XOR addition and $\seed = \{0,1\}^l \setminus 0^l \times \{0,1\}^l$ where $|\seed| = (2^l - 1) 2^l$.

For some random variable $R \sim \U(\{0,1\}^{l-k})$ and $(s,t) \in \seed$ consider the inverses of $f_{s,t}$ given by \[\phi_{s,t,R}(m)= s^{-1} \odot \left((m||R) \oplus t \right).\] Here $s^{-1}$ is the inverse element of $s$ in $GF(2^l)$ (which always exists because $s$ is non-zero) and $(\cdot|| \cdot)$ represents usual bit-string \textit{concatenation}.

\begin{prop}\label{prop:SSUHF}
The family of functions $\F^*$ is an SS-UHF.
\end{prop}
\begin{proof}
See \Cref{appendix:UHFconstruction}.
\end{proof}

With this we have constructed a concrete (algorithmic) implementation of an SS-UHF: this means that our wiretap coding scheme $\wiretap$ of the previous subsection always exists. Specifically, our pre/post processing layers are given \textit{concretely} so that if the reliable ECC $\code$ is also given concretely, then so is the entire wiretap scheme. Let us emphasize again that this is in contrast to much of the literature wherein wiretap schemes are implicitly defined through proofs by existence.

The fact that our wiretap scheme is explicitly given is necessary for realistic wiretap schemes but not quite enough in terms of practicality. We would also like our scheme to be \textit{efficient} with block length $n$. Fortunately, our pre/post processing scheme \textit{is} efficient as proven in the next proposition. In other words, when the reliable ECC scheme $\code$ is efficient, so is the entire wiretap scheme.

\begin{prop}\label{prop:UHFisEfficient}\
\begin{enumerate}
\item Given $m \in \M$, $(s,t) \in \seed$, and $r \in \{0,1\}^{l-k}$, the inverse $\phi_{s,t,r}(m)$ can be computed in quadratic-time with respect to $n$.
\item Given $(s,t) \in \seed$ and $m' \in \M'$, the function $f_{s,t}(m')$ can be computed in quadratic time with respect to $n$.
\end{enumerate}
\end{prop}
\begin{proof}
See \Cref{appendix:UHFconstruction}.
\end{proof}

In conclusion of this section, we have constructed a concrete and efficient  wiretap scheme that is polynomially time computable with block length $n$. We note that the construction given here is by no means unique and one could use \textit{any} concrete and efficient SS-UHF as the pre/post processing layers of our wiretap scheme $\wiretap$.

\section{Achievable Semantically Secure Rates}\label{sec:LHL}
We have already seen that the wiretap scheme we constructed in \Cref{scheme} satisfies the reliability property of a wiretap scheme as long as the ECC $\code$ is reliable (and does so \textit{exceptionally} when $\code$ is chosen to be \textit{exceptionally} reliable). Now we need to show that the scheme satisfies the security property of a wiretap scheme as well. In this section we will do just that by constructing leakage bounds for the semantic metric. It will turn out that under certain conditions our leakage bounds asymptotically go to 0 implying that our scheme is a semantically secure wiretap scheme. In particular, under further restrictions, our wiretap scheme is shown to be \textit{outstanding}. 

It is noted that leakage bounds for arbitrary wiretap channels using evenly invertible, $b$-regular UHFs are already given in \cite{UHF}; however, the leakage there assumes the secret message $M$ follows a uniform distribution and hence will only lead to strong security at best. As a reminder, strong security is not a sufficient measure of security in real world applications because often times messages are not uniformly distributed. We therefore need to generalize the \textit{leftover hash lemma} (channel version) in \cite{UHF} to overcome this obstacle. What becomes obvious upon proof is that considering UHF's that are only evenly invertible and $b$-regular is not quite restrictive enough to lead to semantic security; this explains why in our wiretap coding scheme of \Cref{scheme} we chose our UHF to also be \textit{uniform}.

For the remainder of this section, we will write $\wiretap =\{\wiretap_n\}$ to be the modular wiretap coding scheme described in \Cref{scheme}.

\subsection{Semantic Leakage Bounds}

\begin{thm}\label{lem:LHL}
Using $\wiretap$ on any wiretap channel $W = (T,E)$, for $\epsilon \geq 0$ we have
\begin{align*}
\max_{P_M} I(M \sep Z^n) &\leq \frac{1}{\ln 2}2^{ \frac{1}{2} \left(-b + I_2^\epsilon(M' \sep Z^n) \right) } + \epsilon k\\
&\leq \frac{1}{\ln 2}2^{\frac{1}{2} \left(-b + I_\infty^\epsilon(M' \sep Z^n)\right)} + \epsilon k.
\end{align*}
\end{thm}
\begin{proof}
See \Cref{appendix:LHL}.
\end{proof}

Note the striking resemblance of our first inequality to \cite[Theorem 3]{privacyamplification} for secret key agreements. There, they also used universal hashing to amplify privacy. Also note that our bounds generalize those provided in \cite{UHF}. Therein, the message was assumed to be uniform, whereas here we make no a priori assumptions on $P_M$. Admittedly, we require an SS-UHF for the pre/post processors whereas they require an SS-UHF less the uniform requirement for the pre/post processors. However, we have provided in \Cref{explicit} an efficient and concrete construction, thereby alleviating any doubts that such a hash family exists. Lastly, we note that an \textit{attempt} to generalize the bounds of \cite{UHF} to the ones given here has already been given in literature but was redacted due to an error. Our approach is noticeably different allowing our proof to overcome said error.

We will only be concerned with the second inequality of \Cref{lem:LHL} for the remainder of this paper. It is considerably more tractable computationally and has already been studied in \cite{UHF}.

Recall the wiretap scheme $\wiretap$ consists of a pre/post processor and an ECC scheme $\code$. \Cref{lem:LHL} makes no a priori assumptions on what that ECC scheme is. Once we actually pick the ECC, however, we can characterize \Cref{lem:LHL} more appropriately. In particular, suppose we choose a reliable ECC $\code = \{\code_n\}$ with each $\code_n$ having rate $R_{\code_n} = \frac{l}{n}$ and with the overall rate of each $\wiretap_n$ given by $R_n = \frac{k}{n}$. Now since the ECC has been chosen, there is a \textit{deterministic bijective} mapping between $\M'$ and $\codebook_n = e_n(\M')$. Thus if $M'$ is a random variable on $\M'$ then $e_n(M')$ is a random variable on $\codebook_n$ with the same distribution as $M'$. For convenience, define the random variable $X^n = e_n(M')$ and note that it is defined only over $\codebook_n$ \textit{not} $\X^n$. With these observations in mind, we can reformulate \Cref{lem:LHL} as follows.

\begin{cor}\label{lem:LHL2}
Using $\wiretap$ with reliable deterministic ECC $\code$ on any wiretap channel $W = (T,E)$, for $\epsilon \geq 0$ we have
\[\max_{P_M} I(M \sep Z^n) \leq \frac{1}{\ln 2}2^{- \frac{n}{2} \left(R_{\code_n} - R_n - \frac{1}{n}I_\infty^\epsilon(X^n \sep Z^n)\right)} + \epsilon n R_n.\]
\end{cor}

\subsection{Semantically Secure Rates}
With the previous two bounds on the semantic leakage in mind, we would like to know under what conditions they asymptotically (with respect to $n$) approach 0. In this way, those conditions will tell us precisely when our wiretap coding scheme $\wiretap$ is semantically secure. It is fortunate that these conditions can be described in terms of $\Rs$ (the asymptotic achievable secrecy rate), $R_\code$ (the rate of the ECC scheme), and $\frac{1}{n} I_\infty^\epsilon(X^n \sep Z^n)$ ($\epsilon$-smooth max-information per channel symbol). 

Let $(\cdot)^+ = \max(\,\cdot\,,0)$. The following theorem characterizes which secure rates are achievable under semantic security and we will be using its conclusions frequently throughout the rest of this paper.

\begin{thm} \label{thm:LHL3} 
Using $\wiretap$ with a reliable deterministic ECC $\code$ on any wiretap channel $W = (T,E)$, if $\epsilon$ is chosen such that $\epsilon n \to 0$ as $n \to \infty$ then we have the following.
\begin{enumerate}[(1)]
\item \[\Rs < \left( R_\code - \lim\limits_{n \to \infty} \frac{1}{n} \Iemax \right)^+\] using \textit{semantic} security.
\item If $\lim_{n \to \infty} \frac{1}{n}\Iemax \leq \xi$ then \[\Rs < \big( R_\code - \xi \big)^+\] using \textit{semantic} security.
\item If $\epsilon$ is \textit{exponentially} diminishing to 0 with $n$, then for any secure rates as in (1) and (2), $\wiretap$ is exceptionally semantically secure. Moreover, if $\code$ is exceptionally reliable, then $\wiretap$ is an outstanding semantically secure \textit{wiretap} scheme.
\end{enumerate}
\end{thm}
\begin{proof}
See \Cref{appendix:LHL}.
\end{proof}
\begin{remark}
In \Cref{section:fading}, we will apply this theorem to channels with side information (extra information available to Alice that may help her deduce better security or reliability, of which fading channels are a special case). In that case, we will restate this theorem in a more suitable form (see \Cref{cor:Thm1Redux}).
\end{remark}

Given \textit{any} wiretap channel, \Cref{thm:LHL3}.1 says all one needs to do is calculate $\lim_{n \to \infty} \frac{1}{n} \Iemax$, then use of the wiretap scheme $\wiretap$ will guarantee that rates $R_s$ are achievable with semantic security. However, finding this limit is probably not feasible. For fixed $n$, the $\epsilon$-smooth max-information is basically an $n$-dimensional integral where each point of the integral is a maximization over a set with roughly $2^{n}$ elements. This problem is exponentially hard unless one can exactly characterize the ``regions'' of the integrand that have the same maximum. Characterization of these regions is an interesting line of future work but we do not explore that any further here. 

Luckily, we do not need to calculate $\lim_{n \to \infty} \frac{1}{n} \Iemax$ exactly. \Cref{thm:LHL3}.2 says an upper bound to this limit suffices. We will primarily be using this result for the remainder of this paper due to its tractability. In forthcoming sections we will see that this still yields surprisingly favorable results. 

The leakage bound, $\xi$, in \Cref{thm:LHL3}.2 can be thought of as a parameter of the eavesdropper's channel. Moreover, it can be thought of as the loss we incur when converting an ECC into a semantically secure wiretap code by our procedure. That is, given an ECC of rate $R_\code$, our procedure converts that ECC into a semantically secure wiretap code of rate $R_\code - \xi$.

\Cref{thm:LHL3}.3 says that in order to control the speed by which the semantic leakage diminishes with $n$, we only need to control the speed by which $\epsilon$ diminishes with $n$ where we recall that $\epsilon$ is a parameter that controls how much of the total space (with respect to probability) we are considering. We note that when $\epsilon = 0$ we are always considering the entire space for every $n$ so that the condition of \Cref{thm:LHL3}.3 is trivially satisfied and we have exceptional semantic security. We will not pursue such an approach in this paper as the $\epsilon > 0$ case is much more manageable. However, in all of our applications, $\epsilon$ will be exponentially diminishing with $n$ so that we will get exceptional semantic security.

Recall that $C_T$ is the point-to-point channel capacity of Alice and Bob's channel and $C_E$ is the point-to-point channel capacity of Alice and Eve's channel. The following is a special case of \Cref{thm:LHL3}.2. 
\ \\ 
\begin{cor} \label{cor:LHL4}
If $\xi = C_E$, $\epsilon n \to 0$ as $n \to \infty$, and we pick a reliable deterministic ECC $\code$ with rate arbitrarily close to $C_T$ then \[\Rs < \left( C_T - C_E \right)^+\] with \textit{semantic} security.
\end{cor}
This corollary is particularly satisfying considering that many channels have their weak secrecy capacity given by $C_T - C_E$. Thus in those cases, if we can satisfy the conditions of \Cref{cor:LHL4}, we can achieve the secrecy capacity using our wiretap scheme $\wiretap$ and moreover, we immediately have proven that the semantic secrecy capacity is \textit{equivalent} to the weak secrecy capacity by using \Cref{fact:CsEqual}, a result not known in general.

\subsection{Summary of Wiretap Coding Scheme}
Let us end this section by summarizing what we have shown for our wiretap scheme so far and explain how this can be applied in practice and in theory.

Our wiretap scheme outlined precisely in \Cref{scheme} is a combination of a pre/post processor based on an SS-UHF together with a reliable ECC scheme. We constructed a concrete and efficient SS-UHF in \Cref{scheme} and showed that it did not affect the reliability of the ECC scheme. Hence, since we always assume the ECC scheme is chosen to be reliable, our entire wiretap scheme is always reliable. Moreover, when the ECC is exceptionally reliable the entire scheme is also exceptionally reliable.

In this section, we showed that over a truly arbitrary wiretap channel, our wiretap scheme's semantic leakage can be bounded using \Cref{lem:LHL} or \Cref{lem:LHL2}. Moreover, if the threshold probability $\epsilon$ of our space (a parameter solely designed to aide in the proof) is chosen so that $\epsilon n \to 0$ as $n \to \infty$ then \Cref{thm:LHL3} gives us a precise characterization of when our wiretap coding scheme is semantically secure over \textit{any} wiretap channel. 

To this end, we find it beneficial to outline the steps one shall take in applying our wiretap scheme to a wiretap channel of their choice.

\begin{proc}\label{procedure} The following is the general procedure one should take when using our wiretap scheme over an arbitrary wiretap channel $W = (T, E)$.
\begin{enumerate}[1.]
\item Find which achievable rates $\Rs$ are supported on $W$.
\begin{itemize}
\item For each $n$, construct a $(1-\epsilon)$-typical set $\T$ where $\epsilon n \to 0$ as $n\to \infty$.
\begin{itemize}
\item \bfit{Preferably} construct $\T$ so that $\epsilon$ is exponentially diminishing to 0 so that we induce \textit{exceptional} semantic security.
\end{itemize}
\item Find an upper bound $\xi$ such that \[\lim_{n \to \infty} \frac{1}{n} \Iemax \leq \xi.\]
\begin{itemize}
\item Ideally, one should find the smallest possible $\xi$ as to guarantee higher achievable rates.
\item One could also compute $\lim_{n \to \infty} \frac{1}{n} \Iemax$ directly as mentioned previously, but currently this is seemingly intractable.
\end{itemize}
\end{itemize}

\item Choose operating point $\Rs$.
\begin{itemize}
\item We can achieve all rates $\Rs < \big( R_\code - \xi \big)^+$ with semantic security (\Cref{thm:LHL3}.2).
\begin{itemize}
\item We must choose $R_\code > \xi$ in order to have positive secrecy rates using our wiretap scheme over $W$. However, if this is not possible then either $\xi$ was chosen poorly or the channel does not allow a positive semantic secrecy rate with our wiretap scheme.
\end{itemize}
\end{itemize}

\item Build the wiretap coding scheme $\wiretap$.
\begin{itemize}
\item Find a reliable ECC scheme $\code$ of rate $R_\code$ for use over the main point-to-point channel.
\begin{itemize}
\item \bfit{Preferably} choose $\code$ as follows:
\begin{itemize}
\item Concrete, so that the entire wiretap scheme is concrete.
\item Efficient, so that the entire wiretap scheme is efficient.
\item Exceptionally reliable, so that the entire wiretap scheme is exceptionally reliable. 
\end{itemize}
\end{itemize}
\item Use the finite field SS-UHF of \Cref{prop:SSUHF} as the pre/post processor of this wiretap scheme.
\begin{itemize}
\item One could use any SS-UHF in practice but it is preferable to use one like ours that is concrete and efficient.
\end{itemize}
\end{itemize}
\end{enumerate}
\end{proc}
\begin{remark}
Note that if $\epsilon$ is exponentially diminishing to 0 and $\code$ is chosen exceptionally reliable then $\wiretap$ is an \bfit{outstanding wiretap scheme}.
\end{remark}

\section{Applications I}\label{sec:App1}
In this section, we show how to actually use \Cref{procedure}. We apply \Cref{procedure} to both the discrete memoryless wiretap channel (DMWC) and the memoryless additive white Gaussian noise wiretap channel (AWGN). In particular, on the AWGN and symmetric, degraded DMWCs, we achieve the semantic secrecy capacity. Lastly, we explain how our scheme can be applied in theory in the finite regime; i.e.\ we explain results for finite $n$.

Before we begin, we will write max-information in a more convenient form. This is both so that we can use the supporting results of \cite{UHF}, but also because this alternative form will have a better interpretation here.

\begin{lemma}\label{lem:awgnDMCmaxinfo}
The $\epsilon$-smooth max-information $\Iemax$ can alternatively be written as the infimum of
\[ \log \int\limits_{\Z^n} \max\limits_{x^n \in \codebook_n} \omega_\T (z^n|x^n) \mu(dz^n)\] over all $(1-\epsilon)$-typical sets $\T$.
\end{lemma}
\begin{proof}
See \Cref{appendix:awgnMaxInfoBound}.
\end{proof}

\subsection{Semantic security on a DMWC}
For our first application of \Cref{thm:LHL3} and \Cref{procedure}, we consider DMWCs. This is the case when both the intended receiver's channel and the eavesdropper's channel are given by distinct point-to-point discrete memoryless channels (DMC). We represent the input signal by the discrete random variable $X$, Bob's output signal by the discrete random variable $Y$, and Eve's output signal by the discrete random variable $Z$.

\begin{fact} (cf. \cite{cover_thomas}) The point-to-point capacity of a DMC with input $X$ and output $Y$ is given as
\[C = \max_{P_X} I(X \sep Y).\]
\end{fact}
In particular we denote $C_T = \max_{P_X} I(X \sep Y)$ and $C_E = \max_{P_X} I(X \sep Z)$. As described in \Cref{procedure}, in order to characterize a set of semantically secure rates, we need to asymptotically bound the max-information per channel symbol of Eve's channel. The following lemma provides this bound.

\begin{lemma}\label{lem:DMCBound}
Using a reliable ECC scheme $\code$, the max-information per channel symbol of the DMC $E$ is asymptotically bounded as
\[\lim\limits_{n \to \infty} \frac{\Iemax}{n} \leq C_E,\] where $\epsilon$ is exponentially decreasing to 0 with $n$.
\end{lemma}
\begin{proof}
Using \Cref{lem:awgnDMCmaxinfo} (where $\mu$ is the counting measure) we can write
\[\frac{1}{n} I_\infty^\epsilon(X^n \sep Z^n) \leq \frac{1}{n}  \log \sum\limits_{z^n} \max\limits_{x^n \in \codebook_n} \omega_\T (z^n|x^n),\] for any $(1-\epsilon)$-typical set $\T$.

Luckily, \cite[Lemma 5]{UHF} proved a bound on this right hand term for the same modular pre/post processing scheme less our uniform requirement. Thus, by their result we immediately have that there exists a constant $c >0$ such that for $\epsilon = e^{-nc}$:
\[\frac{1}{n} I_\infty^\epsilon(X^n \sep Z^n) \leq C_E + \frac{1}{n}o(n)\]
where $\frac{1}{n}o(n)$ is a term diminishing to $0$ as $n \rightarrow \infty$. This completes the claim asymptotically with $n$.
\end{proof}

With this bound we can apply \Cref{thm:LHL3}.2 immediately to describe the semantically secure rates our wiretap scheme can achieve. However, in certain cases we can achieve the secrecy capacity (with semantic security), i.e.\ the best possible semantically secure rate. In order to describe this, let us recall the following fact.

\begin{fact}\label{fact:vanDijk}\cite{vanDijk}
The secrecy capacity of a DMWC where the eavesdropper's channel is noisier than the main channel and both channels are weakly symmetric is given by 
\[\Cs = C_T - C_E.\]  
\end{fact}

With this fact, we can state our main result of this subsection, a characterization of semantically secure achievable rates for the DMWC. Note that this result was already proven in \cite{semanticallySecure} and \cite{channelupgrading}, but we restate this here to show the efficacy of our proposed wiretap coding scheme and the fact that our proof differs significantly.

\begin{thm}\ \label{thm:DMWC}
\begin{enumerate}
\item On any DMWC, our wiretap scheme $\wiretap$ can achieve all secure rates, \[\Rs < (R_\code - C_E)^+,\] with exceptional\footnote{Recall that exceptional here means that the semantic leakage diminishes to 0 exponentially fast with $n$.} semantic security.
\item On a DMWC where both channels are weakly symmetric and the eavesdropper's channel is noisier than the main channel we can achieve the secrecy capacity under exceptional semantic security when $R_\code$ achieves the main channel capacity $C_T$.
\end{enumerate}
\end{thm}
\begin{proof}
This follows from \Cref{thm:LHL3}.2 and \Cref{cor:LHL4} combined with \Cref{lem:DMCBound} and \Cref{fact:vanDijk}.
\end{proof}

The first part of this proposition emphasizes that our wiretap scheme $\wiretap$ acts as a converter. If we input an ECC scheme for the DMC $(\X,\omega(y|x),\Y)$ of rate $\Rc > C_E$, then our procedure converts that ECC scheme into an exceptionally semantically secure wiretap code for a DMWC of rate $\Rs$. 

The second part of this proposition says that on degraded symmetric DMWCs, our conversion respects the optimality of rates. Specifically, it says that given an optimal ECC scheme, i.e.\ an ECC scheme achieving the point-to-point main channel capacity, our procedure converts this ECC scheme into an exceptionally semantically secure wiretap code of optimal rate, i.e.\ a wiretap scheme achieving the secrecy capacity.

With this, we again emphasize that our conversion is concrete and efficient. Thus, if the ECC scheme is such, so is the entire wiretap scheme. Moreover, if the ECC scheme is exceptionally reliable, the wiretap scheme is \textit{outstanding}\footnote{Recall the definition of an \textit{outstanding} wiretap coding scheme from \Cref{sect:wiretapcodes}.}.

\subsection{Semantic security for AWGN wiretap channels}
We consider now the additive white Gaussian noise (AWGN) memoryless wiretap channel where both the intended receiver's channel and eavesdropper's channel are given by distinct AWGN memoryless channels. We represent the input signal by the real random variable $X$ (where we suppose it satisfies the average power constraint $P$) and the additive white Gaussian noise by the real random variable $U$. The channels $T$ and $E$ can be described by their outputs given respectively as
\begin{align*}
Y &= X + U_T\\
Z &= X + U_E.
\end{align*}
The random variables $U_T$ and $U_E$ are assumed mutually independent and sampled i.i.d.\ according to $\mathcal{N}(0,\sigma_T^2)$ and $\mathcal{N}(0,\sigma_E^2)$ respectively. 

\begin{fact}\label{gaussianCapacity} (cf. \cite{cover_thomas})
The capacity of an AWGN channel with average input power constraint $P$ and additive noise variance $\sigma^2$ is given by \[C = \frac{1}{2}\log\left( 1+ \frac{P}{\sigma^2}\right).\]
\end{fact}
In particular, this means the capacity of the intended receiver's point-to-point channel is given by $C_T = \frac{1}{2}\log( 1 + \frac{P}{\sigma_T^2})$ and the capacity of the eavesdropper's point-to-point channel is given by $C_E = \frac{1}{2}\log ( 1 + \frac{P}{\sigma_E^2})$.

Our goal of this subsection is to describe the semantically secure achievable rates that our wiretap scheme $\wiretap$ can achieve. Using \Cref{procedure} we already have a prescription of how to do this by bounding the max-information per channel symbol.

\begin{lemma}\label{lem:AWGNbound} Using a reliable ECC scheme $\code$, the max-information per channel symbol of an AWGN eavesdropper channel $E$ is asymptotically bounded as
\[\lim\limits_{n \to \infty} \frac{\Iemax}{n} \leq C_E,\]
where $\epsilon$ is exponentially decreasing to 0 with $n$.
\end{lemma}
\begin{proof}
Using \Cref{lem:awgnDMCmaxinfo} (where $\mu$ is the Lebesgue measure) we can write
\[\frac{1}{n} I_\infty^\epsilon(X^n \sep Z^n) \leq \frac{1}{n}  \log \int\limits_{\R^n} \max\limits_{x^n \in \codebook_n} \omega_\T (z^n|x^n) dz^n,\] for any $(1-\epsilon)$-typical set $\T$.

Again, \cite[Lemma 6]{UHF} proved a bound on this right hand term for the same modular pre/post processing scheme less our uniform requirement. Thus, by their result we immediately have the following bound for every $\delta > 0$ small:
\[\frac{1}{n} I_\infty^\epsilon(X^n \sep Z^n) \leq C_E + \delta \log e + \frac{1}{n}o(n).\] Here $\epsilon = \exp(-n\delta^n/8)$ and $\frac{1}{n}o(n)$ is a term diminishing to 0 as $n \to \infty$.

Since this holds for every $\delta >0$ this completes the claim asymptotically with $n$.
\end{proof}
\begin{remark}
A reworked proof of \cite[Lemma 6]{UHF} can be found in our \Cref{appendix:awgnMaxInfoBound} (\Cref{lem:awgnmaxinfobound}). We feel it is worthwhile to see the proof of this statement for the AWGN wiretap channel, since later (specifically in \Cref{thm:ImaxLessCE} and \Cref{thm:ImaxPartialCSIT}), we prove a more complicated analogous result for the No-CSIT and partial CSIT wiretap channels.
\end{remark}

Again, now that we have this bound in hand, we can apply \Cref{thm:LHL3}.2 to describe the semantically secure rates our wiretap scheme can achieve. However, we notice that we can actually achieve the best possible rates after considering the following fact.

\begin{fact}\label{fact:AWGN}\cite{AWGNwiretap}
On an AWGN wiretap channel $W = (T,E)$, the weak secrecy capacity is given as:
\begin{align*}
\Cs = \begin{cases}
C_T-C_E, &\text{if }\sigma_T^2 < \sigma_E^2\\
0, &\text{Otherwise.}
\end{cases}
\end{align*}
\end{fact}
\begin{remark}
This fact can be upgraded to strong secrecy using the usual technique (cf. \cite{bloch_barros_2011}). Only recently was this fact upgraded to semantic secrecy \cite{AWGNsemanticRecent}.
\end{remark}

Using this fact, we have the following main result of this subsection. 
\ \\ \ \\ \ \\ 
\begin{thm}\label{thm:awgnrates}\ 
\begin{enumerate}
    \item On an AWGN wiretap channel, our wiretap scheme $\wiretap$ can achieve all secure rates \[\Rs < R_\code - C_E\] with exceptional semantic security as long as $R_\code > C_E$.
    \item In particular, when $R_\code$ achieves the main channel capacity $C_T$, then our wiretap scheme achieves the secrecy capacity under exceptional semantic security.
\end{enumerate}
\end{thm}
\begin{proof}
This follows from \Cref{thm:LHL3}.2 and \Cref{cor:LHL4} combined with \Cref{lem:AWGNbound} and \Cref{fact:AWGN}.
\end{proof}
\begin{remark}
In an independent way from \cite{AWGNsemanticRecent}, \Cref{thm:awgnrates}.2 shows that the semantic secrecy capacity is equivalent to the weak secrecy capacity for the AWGN wiretap channel using \Cref{fact:CsEqual}.
\end{remark}

Note that $\wiretap$ is exceptionally semantically secure so that if $\code$ is also chosen to be exceptionally reliable, then our entire wiretap coding scheme is \textit{outstanding}\footnote{Again recall the definition of an \textit{outstanding} wiretap coding scheme from \Cref{sect:wiretapcodes}.}.

Indeed an ECC scheme is given in \cite{capacityAchAWGNECC} that is concrete, reliable, and has quadratic time complexity with respect to block length $n$ in both encoding and decoding. Moreover, it has probability of error exponentially decreasing to 0 so that it is exceptionally reliable. Thus using this particular ECC scheme with our SS-UHF implementation given in \Cref{prop:SSUHF} gives an end-to-end wiretap coding scheme for the AWGN wiretap channel that is \textit{concrete, efficient, outstanding, semantically secure, and can achieve the secrecy capacity}.

Note that the wiretap scheme used in \cite{AWGNsemanticRecent} has every single one of these properties as well. However, their wiretap coding scheme is based on polar lattices and is not modular. In contrast, our scheme \textit{is} modular: the exact same pre/post processor used here (that is, the SS-UHF of \Cref{prop:SSUHF}) can be used on any channel (discrete or continuous); one just needs to find a reliable ECC scheme for the given point-to-point channel.

\subsection{Finite Analysis}
Thus far we have exclusively focused on asymptotic analysis of our wiretap scheme. Despite this, \Cref{lem:LHL2} gives an extremely useful bound of security and rates in the finite regime, that is, for a fixed finite coding blocklength $n$. We do not pursue this line any further here, but for an interesting look into finite block length analysis see Yang, Schaefer, and Poor's result \cite{Poor_finite} which also uses a UHF based scheme to derive upper and lower bounds on the achievable rates in the finite regime. 

\section{Applications II - Fading}\label{section:fading}
In this section, we will consider even more applications of \Cref{thm:LHL3} and \Cref{procedure}, specifically, applications to \textit{fading} wiretap channels. Fading wiretap channels are the prototypical physical layer security models of wireless communication.

It is standard to assume some feedback of channel state information to Alice that will help her deduce the current fade and increase her overall secure transmission rate. In this sense it is obvious that fading wiretap channels are only a particular instance of a much more general case of wiretap channels: wiretap channels with side information. Side information is any information in the form of a random variable available to Alice before transmission that may be advantageous. In this way, it may help her induce more reliability or security, which in turn may help her ascertain a higher secure achievable rate. Hence, by studying wiretap channels with side information, we are inherently considering fading wiretap channels by inclusion.

To study wiretap channels with side information we will first need to manipulate the language we have introduced thus far. Let $\Lambda^n \in \H^n$ represent the $n$ pieces of \bfit{side information} that may be advantageous to the transmitter. Because we always deal in the worst case for security, it is necessary to assume that the eavesdropper also knows $\Lambda^n$, thus we will need to convert the previously defined security metrics in the obvious way to account for this. However, as is a common trick in fading, we can consider the entire tuple $(Z^n, \Lambda^n)$ to be the eavesdropper's output instead of only $Z^n$ as before. Thus, for wiretap channels with side information, the semantic security metric has its leakage given by $\max_{P_M} I(M \sep Z^n, \Lambda^n)$.

With this trick, we can also consider our main result, \Cref{thm:LHL3}, redone for side information, however, we will only need part 2 and part 3 of that theorem.
\ \\
\begin{cor}[\Cref{thm:LHL3} redux]\label{cor:Thm1Redux}
Using $\wiretap$ with a reliable deterministic ECC $\code$ on any wiretap channel $W = (T,E)$, if $\epsilon$ is chosen such that $\epsilon n \to 0$ as $n \to \infty$ we have the following.
\begin{itemize}
\item If $\lim_{n \to \infty} \frac{1}{n}I_\infty^\epsilon (X^n \sep Z^n,\Lambda^n)\leq \xi$ then\footnote{As a reminder, $(\cdot)^+ = \max(\,\cdot\,,0)$.} \[\Rs < \big( R_\code - \xi \big)^+\] using semantic security.
\item If $\epsilon$ is \textit{exponentially} diminishing to 0 with $n$, then for any secure rates above, $\wiretap$ is exceptionally semantically secure. In particular, if $\code$ is exceptionally reliable, then $\wiretap$ is an outstanding semantically secure \textit{wiretap} scheme.
\end{itemize}
\end{cor}
\vspace*{0.15cm}
\begin{remark}\
\begin{itemize}
\item We call this a corollary due to the numerous references hereafter; however, it is in itself \textit{just} \Cref{thm:LHL3} in the case where side information is present.
\item Recall that $I_\infty^\epsilon(X^n \sep Z^n,\Lambda^n)$ is defined as the infimum of $I_\infty^\T(X^n \sep Z^n, \Lambda^n)$ over all $(1-\epsilon)$-typical sets $\T$. To be precise, we note that now $\T \subset \codebook_n \times \H^n \times \Z^n$. 
\end{itemize}
\end{remark}

It will be beneficial in the sequel to characterize $I_\infty^\T(X^n \sep Z^n,\Lambda^n)$ in the following way.

\begin{lemma}\label{lem:maxinfoFading} Let $X^n$ be a random variable over $\codebook_n$ and $\Lambda^n$ be some side information. If $X^n \bot \Lambda^n$ then
\begin{align*}
&I_\infty^\T (X^n \sep Z^n,\Lambda^n)\\
&\hspace*{1cm}= \log \left( \E_{\Lambda^n} \int_{\Z^n} \max_{x^n \in \codebook_n} \omega_\T(z^n|\Lambda^n,x^n) \mu(dz^n) \right).
\end{align*}
\end{lemma}
\begin{proof}
See \Cref{appendix:fading}.
\end{proof}
\begin{remark}
Indeed $X^n \bot \Lambda^n$ seems to be a restrictive assumption, however, it is not, as the forthcoming proofs will make clear.
\end{remark}

Until this point, we have been general with respect to side information. We really do allow \textit{any} extra information available to the transmitter that could be used to aide in a higher secure rate. However, we will now be focusing on fading wiretap channels, that is, when side information is a tuple of fading coefficients. 

\subsection{Fading Preliminaries}\label{sub:FadingDescription}
The general channel model used to model wireless communication environments is that of the \textit{fading channel}, where the output signal is an attenuation of the input signal layered with additive white Gaussian noise. The attenuation, input, and noise are represented using the complex random variables $H$, $X$, and $U$ respectively. The output of this channel at time $i$ is then given as \[Y_i = H_iX_i + U_i\] where $X_i \in \C$, $H_i \in \C$, and $U_i \sim \mathcal{CN}(0,\sigma^2) $. Here, $\mathcal{CN}(0,\sigma^2)$ is a circularly-symmetric normal distribution with $0$ mean and variance $\sigma^2$. We shall refer to the random variable representing attenuation, $H$, as the \bfit{channel coefficient}.

For the purposes of this paper, we will only be considering fast fading channels, that is, the fading coefficient is sampled i.i.d.\ for each use of the channel (cf. \cite{tse}). In particular, we will consider the case of fast fading wiretap channels, i.e., channels $T$ and $E$ are both taken to be fast fading channels. More specifically, during the $i$-th symbol of the codeword, the output at Bob from channel $T$ and the output at Eve from channel $E$ are given respectively by
\begin{align*}
Y_i &= H_{T,i}X_i + U_{T,i}\\
Z_i &= H_{E,i}X_i + U_{E,i},
\end{align*}
where $U_{T,i}$ and $U_{E,i}$ are i.i.d.\ $ \mathcal{CN}(0,\sigma_T^2)$ and $ \mathcal{CN}(0,\sigma_E^2)$ additive noise respectively, $X_i \in \C$ is subject to the power constraint $\E\left[|X|^2 \right] \leq P'$, and the coefficients $H_{T,i},H_{E,i}\in \C$ are also i.i.d. and $H_{T,i} \bot H_{E,j}$ for all $i,j$. For technical reasons we assume that the second order moment of $|H_E|$ exists, i.e., $\E[|H_E|^2] < \infty$. We note that this is not a very limiting constraint since it can be interpreted as the channel having an attenuation with finite energy. Apart from this, we do not assume \textit{which} distribution the channel coefficients follow so as to remain as general as possible. Note that this is in contrast to much of the fast fading literature that a priori assumes a distribution on both $H_T$ and $H_E$.

Achievability results for fading channels depend on which parties have instantaneous access to the realizations of $H_{T,i}$ and $H_{E,i}$, or rather, which parties have \textit{full channel state information}. If a party only has access to the statistics of $H_{T,i}$ or $H_{E,i}$ we say that party has \textit{no channel state information}.

\begin{fact}\label{fact:ComplexFadingtoRealFading}
On a complex fast fading channel, if the receiver has full channel state information (CSIR) then the channel can be decomposed into two real parallel channels.
\end{fact}
\begin{proof}
See \Cref{appendix:fading} for the usual proof.
\end{proof}

For the remainder of this paper, we will assume both the intended receiver and the eavesdropper have full channel state information (CSIR) about their respective channels. In particular, this means that we will only be considering the \textit{real} fast fading channels given at time $i$ as $Y_i = |H_{T,i}| X_i + U_{T,i}$ and $Z_i = |H_{E,i}| X_i + U_{E,i}$ due to \Cref{fact:ComplexFadingtoRealFading}. Since carrying around the modulus on the channel coefficients is cumbersome, we shall simply write $H_T$ and $H_E$ for the remainder of the paper where it will be clear that both are \textit{non-negative real} random variables instead of complex as previously mentioned. An illustration of our setup is given in \Cref{fig:fadingch}.

\begin{figure}[h]
  \begin{center}
  	\scalebox{0.75}{\begin{tikzpicture}[auto,scale=0.70, thick, node distance=2cm, >=triangle 45]
\draw
	node [name=Xn] {\Large $X_i$};
    
	\draw[->](Xn) -- ++(3cm,1.25cm) node [sum,right] (cross1) {\crossp};
    \draw[->](cross1) -- ++(1.75cm,0) node [sum, right] (sum1) {\suma};
    \draw[->](sum1) -- ++(3cm,0) node [right] (Y) {\Large$Y_i$};
    \draw[<-](cross1) -- ++(0, 1.25cm) node [above] (Hm) {\Large$H_{T,i}$};
    \draw[<-](sum1) -- ++(0, 1.25cm) node [above] (Um) {\Large$U_{T,i}$};
    
    \draw[->](Xn) -- ++(3cm,-1.25cm) node [sum,right] (cross2) {\crossp};
    \draw[->](cross2) -- ++(1.75cm,0) node [sum, right] (sum2) {\suma};
    \draw[->](sum2) -- ++(3cm,0) node [right] (Z) {\Large$Z_i$};
    \draw[<-](cross2) -- ++(0, -1.25cm) node [below] (He) {\Large$H_{E,i}$};
    \draw[<-](sum2) -- ++(0, -1.25cm) node [below] (Ue) {\Large$U_{E,i}$};
    
\end{tikzpicture}}
  \end{center}
  \caption{Fast fading wiretap channel model.}
  \label{fig:fadingch}
\end{figure}
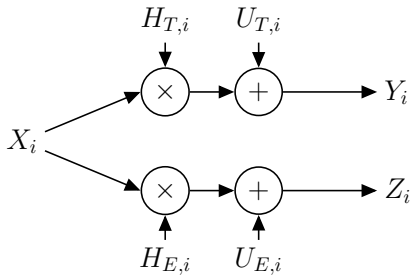

Thus far, we have made no assumptions as to what information the \textit{transmitter} has about the channel coefficients $H_T$ and $H_E$. We shall notate full channel state information at the transmitter by CSIT and will focus on three separate cases. The first case we will consider is \bfit{No-CSIT} where the transmitter has knowledge only of the main channel and eavesdropper channel statistics. Next we will consider \bfit{partial CSIT}, where the transmitter has instantaneous knowledge of the main channel's realizations of $H_T$ at each time $i$ but no knowledge of the eavesdropper's instantaneous channel coefficient - only its statistics. Finally, we will consider \bfit{full CSIT}, where the transmitter has instantaneous knowledge of both the main channel's and eavesdropper channel's realizations of $H_T$ and $H_E$ respectively.

For each of these scenarios, we wish to characterize a set of achievable secure rates. To do so, we utilize \Cref{cor:Thm1Redux} with \Cref{lem:maxinfoFading} where we take the side information to be $\Lambda^n=(H_T^n,H_E^n)$.

\subsection{Fading: No-CSIT}\label{sub:NoCSIT}
The case of No-CSIT, where the transmitter knows only the channel statistics of both the main and eavesdropper channels, is arguably the most realistic scenario of a modern wireless communication environment. It requires no special real-time feedback implementation for the main channel and assumes that the eavesdropper is purely a malicious party (although still passive). Under this assumption, in this subsection we give a set of \textit{semantically secure} achievable rates for the fast fading wiretap channel. To the best of the authors' knowledge, this is the first time semantic security has been characterized on the fast fading wiretap channel with No-CSIT in general. To do so, we find an asymptotic upper bound, $\xi$, to the leakage max-information per channel symbol, i.e., $\frac{1}{n} \Iemaxf$, for any choice of code so as to use \Cref{cor:Thm1Redux} and \Cref{procedure}. In particular, we will be focused on $\xi = C_E$, where $C_E$ denotes the point-to-point channel capacity of the eavesdropper's channel.

We start by first simplifying the expression for max-information of \Cref{lem:maxinfoFading} in the case of No-CSIT.

\begin{lemma}\label{lem:noCSIT}
On the No-CSIT real fast fading channel, max-information can be simplified as
\begin{align*}
&I_\infty^\T(X^n \sep Z^n, H_T^n,H_E^n)\\
&\hspace*{1cm}= \log \left(\E_{H_E^n}  \int\limits_{\R^n} \max_{x^n \in \codebook_n}\omega_{\T}(z^n|x^n,H_E^n) dz^n \right),
\end{align*}
 where $X^n$ is a random variable over $\codebook_n$.
\begin{proof}
See \Cref{appendix:fading}.
\end{proof}
\end{lemma}

With codeword power constraint $P$ and noise variance $\sigma^2$, we note the following fact. 

\begin{fact} \cite{tse} The point-to-point capacity of a real fast fading channel with No-CSIT is given by \[C = \frac{1}{2} \E_H\left[\log\left(1+H^2 \frac{P}{\sigma^2}\right)\right],\]
where $H$ is the random variable representing the channel coefficient.
\end{fact}
To this end, our goal for the remainder of this section will be to show \[\lim\limits_{\substack{n \to \infty\\ \epsilon \to 0}} \frac{\Iemaxf}{n} \leq  \frac{1}{2}\E_{H_E}\left[\log(1+H_E^2 \SNR)\right]\] such that $\SNR$ denotes the eavesdropper's average signal to noise ratio $P/\sigma_E^2$. In particular, we need to show the above holds for some $(1-\epsilon)$-typical set $\T$ such that $\epsilon$ is \textit{exponentially} decreasing to $0$ as $n \to \infty$. 

We begin by constructing such a set $\T$ and showing that it is typical in an exponential fashion with respect to $n$. The set is made up of three constituent sets; one each concerning the output power, noise power, and eavesdropper channel coefficient power.

We define\footnote{Motivation for defining these typical sets is based on a sphere packing argument and can be found in \Cref{appendix:SpherePacking}.} the following sets for $\delta_n,\delta_n',\delta_n'' > 0$ small: 
\begin{itemize}
\item $\Pout_n$ as the set of tuples $(h_E^n,z^n) \in \Rpn \times \Rn$ such that \[\frac{1}{n} \sum\limits_{i=1}^n \frac{z_i^2}{\sigma_E^2 + h_{E,i}^2 P} - 1 \leq \delta_n,\]
\item $\Pnoise_n$ as the set of $z^n \in \Rn$ that satisfy \[\norm{z^n - x^nh_E^n}^2 \geq n \sigma_E^2 (1 - \delta'_n)\] given a fixed $x^n \in \codebook_n$ and $h_E^n \in \Rpn$,
\item $\Perg_n$ as the set of $h_E^n \in \Rpn$ that satisfy \[\left|  \frac{1}{n} \sum\limits_{i=1}^n \log\left(1+h_{E,i}^2 \SNR \right) - \E_{H_E}\left[1 + H_E^2 \SNR \right] \right| \leq \delta''_n.\]
\end{itemize}

Intuitively, $\Pout_n$ corresponds to the set of eavesdropper output powers and channel coefficients most likely to occur in conjunction. $\Pnoise_n$ corresponds to the least amount of noise added to $h_E^nx^n$ during transmission. $\Perg_n$ corresponds to the set of eavesdropper channel coefficients that we expect to occur and is needed for technical reasons. The following lemma proves that events from each of these sets occur with sufficiently high probability. 

\begin{lemma} \label{lem:Psets} \
Consider\footnote{$K^*$ is a parameter of the channel defined in \Cref{appendix:NOCSIT}, \Cref{lem:sunglemma}.} the constant $c = 1/(4 K^*) > 0$.
\begin{enumerate}
\item Let $\epsilon_n^1 = 2e^{-n c \delta_n^2 }$. For any $x^n \in \codebook_n$,  \[\P \left[\left( H_E^n,Z^n\right) \in \Pout_n \, \biggr|\, X^n = x^n \right] \geq 1 - \epsilon_n^1.\]
\item Let $\epsilon_n^2 = e^{-\frac{n }{4}{\delta'_n}^{2}}$. For any $x^n \in \codebook_n$ and $h_E^n \in \Rpn$, \[\P\left[ Z^n \in \Pnoise_n \, \biggr|\, X^n = x^n, H_E^n = h_E^n \right] \geq 1- \epsilon_n^2.\]
\item Let $\epsilon_n^3 = 2e^{-n c {\delta''_n}^2 }$. Then, \[\P \left[H_E^n \in \Perg_n \right] \geq 1-\epsilon_n^3 .\]
\end{enumerate}
\end{lemma}
\begin{proof}
See \Cref{appendix:NOCSIT}.
\end{proof}

\newcommand{\bound}{\gamma}
We now use the sets constructed above to create our typical set. Define each of the following sets:
\begin{align*}
\Tout_n &= \{(x^n,h_E^n,z^n): x^n \in \codebook_n \text{ and } (h_E^n,z^n) \in \Pout_n \},\\
\Tnoise_n &= \{(x^n,h_E^n,z^n): x^n \in \codebook_n, h_E^n \in \Rpn,  \text{ and } z^n \in \Pnoise_n \},\\
\Terg_n &= \{(x^n,h_E^n,z^n)\in \codebook_n \times \Perg_n \times \R^n \}.
\end{align*}
We can think of each of these three sets as the \textit{expansion} set that corresponds to each of the previous three sets $\Pout_n$, $\Pnoise_n$, and $\Perg_n$ but lives in the space $\codebook_n \times \Rpn \times \Rn$, the tuple of all codewords, eavesdropper channel coefficients, and eavesdropper outputs.

We now take the intersection of these sets to construct one final set 
\[\T_n = \Tout_n \cap \Tnoise_n \cap \Terg_n.\]

The following lemma shows that the tuple of main channel coefficients and the previous set, $\Rp^n \times \T_n$, is typical for any $n$. The main channel coefficients must be taken into account as well since we are on a fast fading wiretap channel but we will see shortly that in the case of No-CSIT, it plays little part.
\begin{lemma}\label{lem:typicaldefinition} Let $\epsilon_n = \epsilon_n^1 + \epsilon_n^2 + \epsilon_n^3$ then
\[\P\left[(H_T^n,X^n,H_E^n,Z^n) \in \Rp^n\times\T_n |X^n = x^n \right] \geq 1- \epsilon_n,\] for any $x^n \in \codebook_n$. That is, $\Rp^n\times\T_n$ is a $(1-\epsilon_n)$-typical set where $\epsilon_n$ is exponentially decreasing to $0$ as $n \to \infty$.
\begin{proof}
See \Cref{appendix:NOCSIT}.
\end{proof}
\end{lemma}

With our typical set $\Rp^n \times \T_n$ in hand, we are ready to prove the main result of this section and determine a characterization for semantically secure achievable rates for the fast fading wiretap channel with No-CSIT. 

\begin{thm}\label{thm:ImaxLessCE}
Consider the fast fading wiretap channel with No-CSIT and let $\T_n$ and $\epsilon_n$ be defined as in \Cref{lem:typicaldefinition}. It follows that:
\[\lim\limits_{\substack{n \to \infty\\\epsilon \to 0}} \frac{\Iemaxf}{n} \leq \frac{1}{2}\E_{H_E}\left[\log(1+H_E^2 \SNR)\right].\]
\end{thm}
\begin{proof}
See \Cref{appendix:NOCSIT}.
\end{proof}

The following corollary then tells us what semantically secure rates we can achieve given this bound.

\begin{cor}\label{cor:PositiveRate}
The wiretap coding scheme of \Cref{scheme} can achieve an overall semantic secrecy rate of $C_T - C_E$ on the No-CSIT fast fading wiretap channel when $C_T > C_E$ and $\Rc$ is chosen arbitrarily close to $C_T$.
\end{cor}
\begin{proof}
We can combine the previous theorem with \Cref{cor:Thm1Redux} and note that $\delta_n,\delta'_n,\delta''_n$ can be chosen in such a way that $\epsilon_n \to 0$ exponentially as $n\to \infty$.
\end{proof}

Note that to the best of the authors' knowledge, this is the best semantically secure achievable rate on a No-CSIT fast fading wiretap channel to date. Going further, we actually have achieved the secrecy capacity for a specific class of wiretap channels.

\begin{fact}\label{stochasticfact} \cite{stochasticdegrade}
The weak secrecy capacity of a \textit{stochastically degraded} fast fading wiretap channel with No-CSIT is given by $$C_S = C_T - C_E.$$
\end{fact}

Immediately this fact with the previous corollary implies that we can achieve the secrecy capacity with our wiretap coding scheme of \Cref{scheme} on stochastically degraded fast fading channels with No-CSIT.

\begin{cor}
Using the wiretap coding scheme of \Cref{scheme} on any fast fading stochastically degraded wiretap channel with No-CSIT we have the following:
\begin{enumerate}
\item It is possible to achieve the semantic secrecy capacity.
\item $\Cs \big|_\text{weak} = \Cs \big|_\text{semantic}$.
\end{enumerate}
\end{cor}

\subsection{Fading: Partial CSIT} \label{sub:Partial}
We now turn to the case of partial CSIT, where the transmitter has access to full CSI about the main channel but knows only the statistics of Eve's channel. Our goal in this subsection is the same as in the previous subsection - we wish to characterize a set of semantically secure rates for the wiretap channel at hand and we use \Cref{cor:Thm1Redux} to do so.

Since the transmitter has access to CSI about the main channel, every party can demultiplex the fast fading wiretap channel into a set of $d$ parallel channels by partitioning the channel coefficients of the main channel into $d$ intervals as done in \cite{bloch_barros_2011,goldsmith_varaiya_1997}. Each parallel wiretap channel is then composed of a time-invariant, constant gain Gaussian main channel with a fast fading eavesdropper channel characterized by $H_E$ as depicted in \Cref{fig:PartialBreakup}.

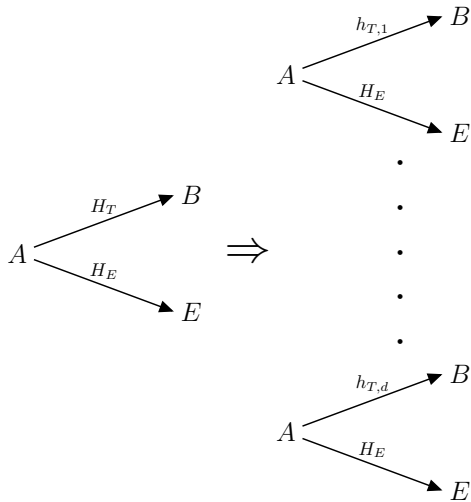
\begin{figure}[h]
  \begin{center}
  	\scalebox{0.70}{\begin{tikzpicture}[auto,scale=0.85, thick, node distance=2cm, >=triangle 45]
\draw node [name=A] {\Large $A$};
	\draw[->](A) -- node [above] {$H_T$} ++(3.5cm,1.3cm) node [right] (B) {\Large$B$};
    \draw[->](A) -- node [above] {$H_E$} ++(3.5cm,-1.3cm)  node [right] (E) {\Large$E$};
\node[text width=3cm] at (6.4,0) {\Huge$\Rightarrow$};
    \draw node [name=A1] at (6,4) {\Large $A$};
	\draw[->](A1) -- node [above] {$h_{T,1}$} ++(3.5cm,1.3cm) node [right] (B1) {\Large$B$};
    \draw[->](A1) -- node [above] {$H_E$} ++(3.5cm,-1.3cm)  node [right] (E1) {\Large$E$};
    \node[text width=1cm] at (9,2) {\Huge$\cdot$};
    \node[text width=1cm] at (9,1) {\Huge$\cdot$};
    \node[text width=1cm] at (9,0) {\Huge$\cdot$};
    \node[text width=1cm] at (9,-1) {\Huge$\cdot$};
    \node[text width=1cm] at (9,-2) {\Huge$\cdot$};
    
    \draw node [name=A1] at (6,-4) {\Large $A$};
	\draw[->](A1) -- node [above] {$h_{T,d}$} ++(3.5cm,1.3cm) node [right] (B1) {\Large$B$};
    \draw[->](A1) -- node [above] {$H_E$} ++(3.5cm,-1.3cm)  node [right] (E1) {\Large$E$};
\end{tikzpicture}}
  \end{center}
  \caption{Decomposition of the fast fading wiretap channel with partial CSIT.}
  \label{fig:PartialBreakup}
\end{figure}

More specifically, we assume the fading gain of the main channel is bounded as usual and divide the possible realizations of $H_T$ into intervals $\left[h_{T,i}, h_{T,i+1}\right)$ with $i\in \{1,\ldots,d \}$. Let 
\[p_i = \P\left[H_T \in \left[h_{T,i}, h_{T,i+1}\right.)\right].\]
Let $N_i$ be the random variable representing the number of times channel $i$ is \textit{actually} used, i.e., the number of times $h_{T,i}$ belongs to the $i$-th interval over all $n$ channel uses. Let $n_i = p_in - \varepsilon_i$ be a real number, where $\varepsilon_i$ is chosen sufficiently large so that $N_i$ is greater than $n_i$ with high probability and $\varepsilon_i \rightarrow 0$ as $n \rightarrow \infty$. In short, $n_i$ represents the number of times we \textit{plan} on the channel coefficients being realized in the $i$-th interval, whereas the realization of $N_i$ is how many times the the channel coefficients \textit{actually do occur} in the $i$-th interval. For every index $i$, the transmitter and legitimate receiver will publicly agree on a transmit power $\gamma_i(H_T)$ where $\{\gamma_i\}_{i=1}^d$ is chosen such that 
\[\sum_{i=1}^{d} p_i\gamma_{i} \leq P.\]

For $1\leq i \leq d$, the transmitter and legitimate receiver also publicly agree upon an ECC $\code_{n_{i}}^{i}$ (with codebook $\codebook_{n_{i}}^{i}$) designed to operate on the Gaussian point-to-point channel with constant channel gain $h_{T,i}$. We denote by $R_i$ the rate of $\code_{n_{i}}^{i}$ and the overall rate over the main channel to be \[R_{\code_n} = \sum_{i=1}^d p_iR_i.\] The full coding scheme is then outlined as follows: a message $m \in \M$ is chosen which passes through the preprocessing layer to produce an $l$-length pseudo-message $m' \in \M'$. These $l$ bits are then divided into sets of $n_{i}R_i$ bits such that \[l = \sum_{i} n_{i}R_{i}.\] A codeword is then generated for each of these sets by their respective $\code_{n_{i}}^{i}$ and the multiplexing strategy outlined in \cite{bloch_barros_2011,goldsmith_varaiya_1997} is then employed to transmit the $i$th codeword when the channel state is in the $i$th interval. In more detail, at each time instant $i$ the multiplexer will determine what the channel state is and send \textit{one} symbol from the codeword associated with that channel gain.

The reliability of this scheme comes from the aggregate reliability of all the ECC's being employed on the $d$ parallel channels and the fact that we are choosing $n_i <N_i$ with high probability. Since we are assuming an ECC $\code_{n_{i}}^i$ is chosen to be reliable over the $i$th point-to-point main channel, we know that the probability of error will be negligible:
\[\P_e(\code_{n_i}^i) \rightarrow 0 \text{ as } n_i \rightarrow \infty.\]
In other words, the receiver will be able to recover each $n_i$-length codeword with high probability. Thus the probability of error for the entire $n$-length transmission is just probability of error for each individual $n_i$-length codeword weighted by the probability that that code is used:
\[\P_e(\code_n) = \sum_i^d p_i \P_e(\code_{n_i}^i) \rightarrow \sum_i^d p_i \cdot 0 = 0 \text{ as } n \rightarrow \infty\]
since $n_i$ grows with $n$. Now that this scheme has been shown to be reliable, we now address its security.

We wish to asymptotically bound $\frac{1}{n}\Iemaxf$ of this fast fading channel by considering the set of $d$ parallel wiretap channels outlined above and each of \textit{their} individual associated max-information terms for which we already know the bound found in \Cref{thm:ImaxLessCE}. This is due to the fact that \Cref{thm:ImaxLessCE} did not impose any restrictions on the main channel distribution, it only required Eve's channel to be given arbitrarily as $H_E$. Thus having a constant gain main channel and no CSIT of Eve's channel is a special case of No-CSIT. The only way this differs from that of \Cref{sub:NoCSIT} is that in the case of No-CSIT, we are not allowed to vary the power we are transmitting at due to our lack of knowledge of instantaneous CSIT, whereas in the case of partial CSIT, we can vary our power to align with what the current main channel gain is.

Similarly to the case of No-CSIT, we wish to create a typical set which will contain enough content about our space of inputs, outputs, and channel coefficients. We accomplish this by creating typical sets for each of the $d$ subchannels and taking the Cartesian product of these to generate the typical set for the entire wiretap channel.

Define the following sets:
\begin{align*}
&\T'_{n_i} = \hspace{-0.05cm}\{(x^{n_i},h_T^{n_i},h_E^{n_i},z^{n_i})\hspace{-0.05cm}: h_T^{n_i} \in \Rp^{n_i}, (x^{n_i},h_E^{n_i},z^{n_i})\hspace{-0.05cm} \in \T_{n_i}\}\\
&\T'_{n} \hspace{0.105cm}= \bigotimes_{i}\T'_{n_i}
\end{align*}
where $\T_{n_i}$ is defined in \Cref{sub:NoCSIT}.

\begin{lemma}\label{lem:PartialTypical}
$\T'_n$ as defined above is a $(1-\epsilon_{n})$ typical set where $\epsilon_n$ is exponentially decreasing with $n$.
\end{lemma}
\begin{proof}
See \Cref{appendix:partial}.
\end{proof}

With the typical set $\T_n'$ in hand, we now aim to find an asymptotic bound $\xi$ for the average max-information for the entire $n$ uses of the wiretap channel $W$.

\begin{thm}\label{thm:ImaxPartialCSIT}
Consider a fast fading wiretap channel where the transmitter has partial CSIT with $\T'_n$ and $\epsilon_n$ as defined in \Cref{lem:PartialTypical}. Using the multiplexing scheme above, it follows that:
\begin{align*}
&\lim\limits_{\substack{n \to \infty\\\epsilon \to 0}} \left( \frac{\Iemaxf}{n}\right) \\
&\qquad\qquad\qquad\leq \frac{1}{2}\E_{H_EH_T}\left[\log\left(1+\frac{\gamma(H_T)H_E^2}{\sigma_E^2}\right)\right].
\end{align*}
\end{thm}

\begin{proof}
See \Cref{appendix:partial}.
\end{proof}

Now that we have found $\xi$, \Cref{cor:Thm1Redux} immediately tells us that by using the SS-UHF based preprocessing scheme we can achieve any positive rate, $R_s$, with semantic security satisfying \[R_s < R_\code - \frac{1}{2}\E_{H_EH_T}\left[\log\left(1+\frac{\gamma(H_T)H_E^2}{\sigma_E^2}\right)\right].\] Let's see how this compares to previous results.

\begin{fact}\label{PartialAchievable}
\cite{PartialCSIT_2013, bloch_barros_2011} For the fast fading wiretap channel where the CSI of the main channel but not the CSI of the eavesdropper channel is known at the transmitter, all rates $R_s$ such that
\begin{align*}
R_s &< \max_{\gamma} \left(\frac{1}{2}\E_{H_T}\left[\log\left(1+\frac{\gamma(H_T)H_T^2}{\sigma_T^2}\right)\right]\right. +\cdots\\
&\qquad\cdots- \left.\frac{1}{2}\E_{H_TH_E}\left[\log\left(1+\frac{\gamma(H_T)H_E^2}{\sigma_E^2}\right)\right]\right)
\end{align*}
where $\gamma:\Rp \rightarrow \Rp$ obeys the constraint $\E\left[\gamma(H_T)\right]\leq P$ are achievable secrecy rates under the strong (and weak) secrecy metric.
\end{fact}

To the extent of the authors' knowledge, the secure achievable rates given in \Cref{PartialAchievable} have never been extended to semantic security. However, the next corollary remedies this. 

\begin{cor}\label{PartialMatching}
The wiretap coding scheme of \Cref{scheme} can achieve all rates given in \Cref{PartialAchievable} with semantic security on the partial CSIT fast fading wiretap channel when the rate of the ECC, $R_\code$, is taken arbitrarily close to \[\frac{1}{2}\E_{H_T}\left[\log\left(1+\frac{\gamma(H_T)H_T^2}{\sigma_T^2}\right)\right]\] for any power allocation $\gamma(H_T)$. Moreover, these rates are achieved with exceptional semantic security.
\begin{proof}
The result follows immediately after combining \Cref{cor:Thm1Redux} with \Cref{thm:ImaxPartialCSIT} and noting that there does exist some ECC which can achieve this rate due to the fact that the above expression is less than or equal to the point-to-point capacity of the fast fading channel.
\end{proof}
\end{cor}

\subsection{Fading: Full CSIT}
In this subsection, we shall assume full CSIT; that is, we assume the  transmitter knows instantaneously the realizations at time instance $i$ of both the main and eavesdropper channel coefficients. The strategy used to find a set of semantically secure rates in this scenario is almost identical to that used in \Cref{sub:Partial} thus we omit most of the redundant explanations and proofs here. We now demultiplex the wiretap channel into $d^2$ parallel constant gain Gaussian wiretap channels determined by the channel coefficients of both channel $T$ and channel $E$. Since each of the parallel wiretap channels are now \textit{Gaussian} wiretap channels, we no longer use the bounds found in \Cref{thm:ImaxLessCE}, but rather use the bounds from \Cref{lem:AWGNbound} to bound the max-information of each of the parallel wiretap channels.

As before, we define a typical set for this channel as the Cartesian product of simpler sets:
\begin{align*}
\prescript{\star}{}{\Tout_{n_{ij}}} &= \{(x^{n_{ij}},h_T^{n_{ij}},h_E^{n_{ij}},z^{n_{ij}}):\\
&\qquad\quad h_T^{n_{ij}} \in \Rp^{n_{ij}}, (x^{n_{ij}},h_E^{n_{ij}},z^{n_{ij}}) \in \Tout_{n_{ij}}\},\\
\prescript{\star}{}{\Tnoise_{n_{ij}}} &= \{(x^{n_{ij}},h_T^{n_{ij}},h_E^{n_{ij}},z^{n_{ij}}):\\
&\qquad\quad h_T^{n_{ij}} \in \Rp^{n_{ij}}, (x^{n_{ij}},h_E^{n_{ij}},z^{n_{ij}}) \in \Tnoise_{n_{ij}}\},\\
\T'_{n_{ij}} &= \prescript{\star}{}{\Tout_{n_{ij}}} \cap \prescript{\star}{}{\Tnoise_{n_{ij}}},\\
\T'_{n}\hspace{0.21cm} &= \bigotimes_{i,j}\T'_{n_{ij}}.
\end{align*}

Note that $\T_{n_{ij}}^1$ and $\T_{n_{ij}}^2$ are defined in \Cref{sub:NoCSIT}. The following lemma, which is analogous to \Cref{lem:PartialTypical} from \Cref{sub:Partial}, shows that $T'_n$ is a typical set.
\begin{lemma}\label{lem:FullTypical}
$\T'_n$ as defined above is a $(1-\epsilon_n)$ typical set where $\epsilon_n$ is exponentially decreasing with $n$.
\end{lemma}

Now in an analogous way to \Cref{thm:ImaxLessCE} and \Cref{thm:ImaxPartialCSIT}, we have the following theorem for the full CSIT scenario.
\begin{thm}\label{thm:ImaxFullCSIT}
Consider the fast fading wiretap channel with full CSIT at the transmitter with $\T'_n$ and $\epsilon_n$ as defined in \Cref{lem:FullTypical}. Using the multiplexing scheme above, it follows that:
\begin{align*}
&\lim\limits_{\substack{n \to \infty\\\epsilon \to 0}} \left( \frac{\Iemaxf}{n}\right) \\
& \qquad \leq \frac{1}{2}\E_{H_E,H_T}\left[\log\left(1+\frac{\gamma(H_T,H_E)H_E^2}{\sigma_E^2}\right)\right].
\end{align*}
\end{thm}

Now that we have found the bound $\xi$, \Cref{cor:Thm1Redux} again tells us that by using the SS-UHF based preprocessing scheme we can achieve any positive rate, $R_s$, with semantic security satisfying \[R_s < R_\code - \frac{1}{2}\E_{H_E,H_T}\left[\log\left(1+\frac{\gamma(H_T,H_E)H_E^2}{\sigma_E^2}\right)\right].\] Once again, let's see how this compares to previous results.

\begin{fact}\label{FullAchievable}
With full CSI for both the main channel and the eavesdropper channels available at the transmitter, the strong secrecy capacity of the fast fading  wiretap channel is:
\begin{align*}
C_s &= \max_{\gamma}\left(\frac{1}{2}\E_{H_T H_E}\left[\log\left(1+\frac{\gamma(H_T,H_E)H_T^2}{\sigma_T^2}\right)  \right]+\right.\cdots\\
&\qquad\qquad\cdots-\left. \frac{1}{2}\E_{H_T H_E}\left[\log\left(1+\frac{\gamma(H_T,H_E)H_E^2}{\sigma_E^2}\right) \right]\right)
\end{align*}
where $\gamma:\Rp^2 \rightarrow \Rp$ obeys the power constraint $\E\left[\gamma(H_T,H_E)\right] \leq P$.
\end{fact}
This fact was originally given in \cite{secureoverfading} under the weak security metric but was upgraded to the strong security metric in \cite{chresolv}. However, to the extent of the authors' knowledge, this result has never been upgraded to semantic security. We provide such a generalization in the next corollary.

\begin{cor}\label{MatchingFullAchievable}
The semantic secrecy capacity of the fast fading wiretap channel with full CSIT is given by:
\begin{align*}
C_s =& \max_{\gamma}\left(\frac{1}{2}\E_{H_T H_E}\left[\log\left(1+\frac{\gamma(H_T,H_E)H_T^2}{\sigma_T^2}\right)  \right] \right.\cdots\\
&\qquad\cdots - \left.\frac{1}{2}\E_{H_T H_E}\left[\log\left(1+\frac{\gamma(H_T,H_E)H_E^2}{\sigma_E^2}\right) \right]\right).
\end{align*}
Furthermore, the transmission scheme of \Cref{scheme} can achieve the semantic secrecy capacity of the fast fading wiretap channel with full CSIT exceptionally fast.
\end{cor}
\begin{proof}
Let $\gamma^*$ be the power allocation function that maximizes the expression in \Cref{FullAchievable} as found in \cite{secureoverfading}. Let the rate of the ECC, $R_\code$, be taken arbitrarily close to
\[\frac{1}{2}\E_{H_T H_E}\left[\log\left(1+\frac{\gamma^*(H_T,H_E)H_T^2}{\sigma_T^2}\right)\right].\]
We know by Shannon's noisy channel coding theorem that \textit{some} ECC will exist which satisfies this rate due to the above expression being less than or equal to the point-to-point capacity of the main fast fading channel. Since the bound found in \Cref{thm:ImaxFullCSIT} holds for any power allocation function $\gamma$, it holds for $\gamma^*$ in particular. In \Cref{thm:ImaxFullCSIT} we found an upper bound to the right hand term of the difference in \Cref{FullAchievable}, thus invoking \Cref{cor:Thm1Redux} we know we can 
achieve any rate arbitrarily close to the secrecy capacity given in \Cref{FullAchievable}. Therefore the semantic secrecy capacity is equal to the weak secrecy capacity by \Cref{fact:CsEqual} in the case of full CSIT and the given wiretap coding scheme achieves it.
\end{proof}

\section{Future Work}
For wiretap channels that do not fall into the purview of the previously listed channels, one must apply \Cref{procedure} in its entirety. Hopefully however, the proof techniques employed here will help guide those pursuits.

As another interesting line of future work, one may try to find a tighter upper bound $\xi$ to the max-information per channel symbol on the fast fading wiretap channel with No-CSIT. Indeed, we proved the case when $\xi = C_E$ (the capacity of the eavesdropper's point-to-point channel), but perhaps this can be improved by clever power allocation techniques.

\section{Conclusion}\label{sec:Conclusion}
The main purpose of this paper has been to amplify the results of physical layer security into a more practical setting. In particular, we have developed a concrete and efficient converter that takes as input an error correcting code and outputs a semantically secure wiretap code. We have addressed five separate wiretap channels that are arguably the most popular in literature and have shown for each which semantically secure rates are achievable.

\section*{Acknowledgment}
The authors would like to thank Himanshu Tyagi of the Indian Institute of Science and Alexander Vardy of UC San Diego upon whose work this paper is largely motivated.

\appendices
\crefalias{section}{appsec}

\renewcommand{\thelemma}{\Alph{section}\arabic{lemma}}

\setcounter{lemma}{0}
\section{Our Construction is an efficient SS-UHF: Proof of \Cref{prop:SSUHF} and \Cref{prop:UHFisEfficient}} \label{appendix:UHFconstruction}
In this beginning appendix, we will prove that our UHF construction based on finite field arithmetic is an SS-UHF (\Cref{prop:SSUHF}) and that it is efficient (\Cref{prop:UHFisEfficient}).

\subsection*{\textbf{Proof of \Cref{prop:SSUHF}}}
\begin{proof}[\unskip\nopunct]
We will show $\F^*$ is universal, uniform, $(l-k)$-regular, and evenly invertible.

\begin{itemize}
\item \textit{Universality:}
Fix $m'_1 \neq m'_2 \in \M'$. We wish to count how many $(s,t)$ satisfy: \[\left[ \left( s \odot m'_1\right) \oplus t \right]_k = \left[ \left( s \odot m'_2\right)\oplus t \right]_k.\]

Since $\oplus$ is equivalent to bitwise addition, we can distribute $[\cdot]_k $ and reduce the equation to: \[ [s \odot m'_1]_k \oplus_k [t]_k = [s \odot m'_2]_k \oplus_k [t]_k\] where $\oplus_k$ is addition over $GF(2^k)$. This reduces even further to $[s \odot m'_1]_k = [s \odot m'_2]_k$, however, this is an equation that does not involve $t$ so that indeed, any choice of $t$ satisfies the original equation. This equation can be rewritten as 
\begin{align*}
0^k &=[s \odot m'_1]_k \oplus_k [s \odot m'_2]_k\\
&= [(s \odot m'_1) \oplus (s \odot m'_2)]_k\\
&= [s \odot m'']_k
\end{align*}
where we have defined $m'' = m'_1 \oplus m'_2$. Now since $m'_1 \neq m'_2$ then $m'' = m'_1 \oplus m'_2 \neq 0^l$. Moreover, by assumption $s \neq 0^l$ so that for each choice of $s$, the multiplication $s \odot m''$ is a unique element in $\{0,1\}^l\setminus 0^l$. Note that since there are $2^{l-k}-1$ elements in $\{0,1\}^l\setminus 0^l$ where the first $k$ bits set to $0$, then there are $2^{l-k}-1$ choices of $s$ that satisfy $0^k = [s \odot m'']_k$. In summary, we have $2^l$ choices for $t$ and $2^{-k}(2^l -2^k)$ choices for $s$, thus we have $2^{-k} 2^l (2^l -2^k)$ choices for $(s,t)$ that satisfy $[( s \odot m'_1) \oplus t ]_k=[ \left( s \odot m'_2\right) \oplus t ]_k$. However, $2^{-k} 2^l (2^l -2^k) \leq 2^{-k} 2^l (2^l - 1)$ since $k \geq 1$ so that (noting $|\seed| = 2^l (2^l-1)$) we have proved that $\F^*$ is a universal hash family.

\ \\

\item \textit{Uniformity:}
Fix $m' \in \M'$ and $m \in \M$. We wish to count how many $(s,t)$ satisfy: \[ [ \left( s \odot m'\right) \oplus t ]_k = m.\] 

We can distribute $[\cdot]_k$ and view this as the equation $[t]_k = m \oplus_k [s \odot m']_k$. For each choice of $s$ the first $k$ bits of $t$ are fixed and the last $l-k$ bits are free; thus there are $2^{l-k}$ choices for $t$. Since there are no restrictions at all on $s$, we can choose any of the $2^{l}-1$ $l$-length bits strings (excluding $0^l$) for $s$. In aggregate there are $2^{l-k}(2^l-1)$ choices of $(s,t)$ that satisfy $[ \left( s \odot m'\right) \oplus t ]_k = m$. Noting again that $|\seed| = 2^l (2^l -1)$ we have proven that our family $\F^*$ is uniform.
\ \\

\item \textit{Regularity:} Fix some $m \in \M$, $s \in \{0,1\}^l \setminus 0^l$, and $t \in \{0,1\}^l$. We wish to count how many $m'$ satisfy: \[ [ \left( s \odot m'\right) \oplus t ]_k = m.\]

As usual, break up this equation to $[s \odot m']_k = m \oplus_k [t]_k$. Since we are working in $GF(2^l)$ and $s \neq 0^l$, for each choice of $m' \in \{0,1\}^l$ the product $s \odot m'$ will be a \textit{unique} element in $\{0,1\}^l$. But by the previous equation, the first $k$ bits of $ s \odot m'$ are fixed at $m \oplus_k [t]_k$ while the last $l-k$ bits are completely free. Hence there will be $2^{l-k}$ choices of $m'$ that satisfy the original equation.  

Therefore, $\F^*$ is $(l-k)$-regular.

\ \\
\item \textit{Invertibility:} Let $m \in \M$, $s \in \{0,1\}^l \setminus 0^l$, and $t \in \{0,1\}^l$. Then,
\begin{align*}
f_{s,t}(\phi_{s,t,R}(m)) &= [ s \odot \left( s^{-1} \odot \left((m || R)  \oplus t \right) \right) \oplus t ]_k\\
&= [ (m||R) \oplus t \oplus t ]_k\\
&= [ m||R ]_k \\
&= m.
\end{align*}
Hence, $\F^*$ is invertible.
\ \\
\item \textit{Even Invertibility:} Suppose we are given a $m \in \M$, $s \in \{0,1\}^l \setminus 0^l$, and $t \in \{0,1\}^l$. Then $\phi_{s,t,r}(m) = \left( s^{-1} \odot (m || r) \oplus t \right)$ is a unique element for every choice of $r$. Since $R \sim \U(\{0,1\}^{l-k})$ the pseudo-message $M' = \phi_{s,t,R}(M)$ will be uniform.
\end{itemize}
\ \\
In summary, we have proven that $\F^*$ is an SS-UHF, thus concluding the proof of \Cref{prop:SSUHF}.
\end{proof}

\subsection*{\textbf{Proof of \Cref{prop:UHFisEfficient}}}
\begin{proof}[\unskip\nopunct]\ 
First recall that $l$ and $k$ are functions of the block length $n$.
\begin{enumerate}
\item Concatenation has time complexity $\bigO(k + (l-k))$ and thus is linear with $n$: $\bigO(k + (l-k)) = \bigO(l) = \bigO(n R_{\code_n}) = \bigO(n)$. Addition in $GF(2^l)$ operates as bitwise addition (or XOR) and thus the time complexity is also linear with $n$: $\bigO(l) = \bigO(n R_{\code_n}) = \bigO(n)$. Therefore, the operation $(m || r) \oplus t $ has linear time complexity.

Now inversion and multiplication in $GF(2^l)$ is known to be computed in at worst quadratic time in $l$ (cf. \cite[Chapter 2]{finitefieldTimeComplexity}). Thus computing the entire inverse $s^{-1} \odot ((m || r) \oplus t)$ is $\bigO(n^2)$.
\ \\
\item 
Using the same arguments as above, the operation $m' \odot s$ can be implemented in quadratic time and addition can be implemented in linear time. Clearly, $[\cdot]_k$ can be implemented in $\bigO(k) = \bigO(n)$: linear time with $n$. Thus, the entire post-processing scheme also can be implemented in quadratic time in $n$.
\end{enumerate}
This concludes the proof of \Cref{prop:UHFisEfficient}.
\end{proof}

\setcounter{lemma}{0}
\section{Security and Rates: Proof of \Cref{lem:LHL} and \Cref{thm:LHL3}} \label{appendix:LHL}
In this appendix, we will prove the two main statements related to the security and achievable rates of our wiretap coding scheme of \Cref{scheme}. Before we begin, we will need the following lemma. Not only do we use it several times in the proofs of the aforementioned results, but also, this lemma justifies our definition of $\alpha$-mutual information as we required $M'$ to be uniform there.

\begin{lemma}\label{lem:mUniform}
The pseudo-message $M'$ is a uniform random variable over the set $\M'$ where $|\M'| = 2^l$, i.e., $\omega(m') = 2^{-l}$.
\end{lemma}
\begin{proof}
We claim that $M'$ is a uniform random variable over the set $\M' = \{0,1\}^l$. We already argued in \Cref{scheme} that given $m$ and $s$, $M'$ is a uniform random variable over $\text{supp}(\phi_s(m))$, hence, we simply need to show that $\omega(m') = 2^{-l}$. Consider the following string of equalities:
\begin{align*}
\omega(m') &\explainequals{1} \sum_{m \in \M} \sum_{s \in \seed} \omega(m',m,s)\\
&\explainequals{2} \sum_{m \in \M} \sum_{s \in \seed} \omega(m'|m,s) P_M(m) P_S(s)\\
&\explainequals{3} 2^{-b}\sum_{m \in \M}P_M(m) \left(\frac{1}{|\seed|} \sum_{s \in \seed}  \indicator\left(m = f_s(m') \right)\right)\\
&\explainequals{4} 2^{-l} \sum_{m \in \M}P_M(m)\\
&= 2^{-l}.
\end{align*}
\begin{explain}\
\begin{enumerate}[1)]
\item Marginal density properties.
\item $M\bot S$ by assumption.
\item $\omega(m'|m,s) = 2^{-b}\indicator\left(m = f_s(m') \right)$ as mentioned in \Cref{scheme}.
\item The term $\left(\frac{1}{|\seed|} \sum_{s \in \seed}  \indicator\left(m = f_s(m') \right)\right) = 2^{-k}$ for any $m \in \M$ and $m' \in \M'$ by the uniform property of our SS-UHF. Moreover $b = l-k$. 
\end{enumerate}
\end{explain}
This concludes the proof of \Cref{lem:mUniform}.
\end{proof}

\newcounter{explainenum}		
\setcounter{explainenum}{0}		
\newcommand{\up}{\arabic{explainenum}\stepcounter{explainenum}}	

\subsection*{\textbf{Proof of \Cref{lem:LHL}}}
\begin{proof}[\unskip\nopunct]
This theorem is the primary tool of this paper. The proof is similar at times to the proof given in \cite{UHF} (for the analogous result for strong security only) and is a very straightforward application of our SS-UHF to the definition of mutual information. Notwithstanding, the proof is rather long and as a point of convenience we note that the proof ends on page \pageref{endOfLemma}.

We first need the following fact which follows immediately from the chain rule of mutual information:
\[\max\limits_{P_M} I(M \sep Z^n) \leq \max\limits_{P_M} I(M \sep Z^n,S).\]
Thus, it is sufficient to bound $\max_{P_M} I(M \sep Z^n,S)$.

We will split the proof into two parts, $\epsilon > 0$ and $\epsilon = 0$, starting with the $\epsilon = 0$ case. As mentioned previously, $1$-typical sets $\T$ are equal to the entire space $\M' \times \Z^n$ less a set of measure $0$, so that $I_\alpha^0(M' \sep Z^n)= I_\alpha(M' \sep Z^n)$. To show our claim is valid, it is therefore sufficient in the case of $\epsilon = 0$ to show: \[\max_{P_M} I(M \sep Z^n,S) \leq \frac{1}{\ln 2}2^{ \frac{1}{2} \left( -b + I_2(M' \sep Z^n)\right)}.\] 

To begin, suppose $M$ has some arbitrary distribution. Since $\seed$ and $\M$ are finite the definition of conditional mutual information $I(M \sep Z^n|S)$ is given by \[\int\limits_{\Z^n} \sum\limits_{m \in \M} \sum\limits_{s \in \seed} \omega(m,z^n,s)\log\left(\dfrac{\omega(m,z^n|s)}{\omega(m|s) \omega(z^n|s)}\right) \mu(dz^n),\] where $\mu$ is some measure on $\Z^n$.

From the \textit{chain rule of mutual information}, since $M \bot S$ by assumption, we have $I(M \sep Z^n,S) = I(M \wedge Z^n|S)$. It then follows that
\begin{align*}
&I(M \sep Z^n,S)\\
&=  I(M \sep Z^n|S) \\
&= \int\limits_{\Z^n} \sum\limits_{m \in \M} \sum\limits_{s \in \seed} \omega(m,z^n,s)\log\left(\dfrac{\omega(m,z^n|s)}{\omega(m|s) \omega(z^n|s)}\right) \mu(dz^n)\\
&= \int\limits_{\Z^n} \sum\limits_{m \in \M} \sum\limits_{s \in \seed} \omega(m,z^n,s)\log\left(\dfrac{\omega(z^n|m,s)}{\omega(z^n|s)}\right) \mu(dz^n) \numberthis \label{eq:thm_LHL_denom}.
\end{align*}

Let us now expand each conditional density of the numerator and denominator of the logarithm in \Cref{eq:thm_LHL_denom}. Starting with the numerator we have:

\begin{align*}
\omega(z^n | m,s) &= \sum\limits_{m' \in \text{supp}(\phi_s(m))} \omega(z^n|m')  \omega(m'|m,s) \numberthis \label{eq:thm_LHL_support}\\
&= 2^{-b} \sum\limits_{m' \in \text{supp}(\phi_s(m))} \omega(z^n|m') \numberthis \label{eq:thm_LHL_sneakers}\\
&= 2^{-b} \sum\limits_{m' \in \M'} \omega(z^n|m') \indicator\left(f_s(m') = m \right) \numberthis \label{eq:thm_LHL_sneakerslast}.
\end{align*}
\Cref{eq:thm_LHL_support} follows from the fact that we can take $M'$ as an intermediate node and sum over all possible realizations of $M'$; by assumption, since we are given $m$ and $s$, then $M'$ can only be found in $\text{supp}(\phi_s(m))$ where $\phi_s$ is the even-inverse of $f_s$. \Cref{eq:thm_LHL_sneakers} follows from the fact that once given $m$ and $s$, the density of $M'$ is uniform on a set with $2^b$ elements which follows from the fact that our SS-UHF is $b$-regular and evenly invertible.

The expansion of the conditional density in the denominator of the logarithm of \Cref{eq:thm_LHL_denom} is given by:
\begin{align*}
\omega(z^n|s) &= \frac{\omega(z^n,s)}{P_S(s)}\\
&\explainequals{\up} \sum\limits_{m \in \M} \frac{\omega(z^n,m,s)P_M(m)}{P_S(s)P_M(m)}\\
&\explainequals{\up} \sum\limits_{m \in \M} \omega(z^n|m,s)P_M(m) \\
&= 2^{-b}  \sum\limits_{m' \in \Mprime} \omega(z^n|m')  \sum\limits_{m \in \M} P_M(m)\indicator{(f_s(m')=m)} \\
&\explainequals{\up} 2^{-b} \sum\limits_{m'\in \M'} \omega(z^n|m') P_M( f_s(m')) \numberthis \label{eq:thm_LHL_banana}.
\end{align*}
\begin{explain}\
\begin{enumerate}\setcounter{enumi}{0}
\item Marginal density property.
\item By assumption, $M \bot S$.
\item When $s$ is fixed, $f_s$ is a well defined function. Thus, inside the sum over $\M'$, $f_s(m')$ can map to only a single $m \in \M$. Therefore, the indicator is $1$ only for a single value of $m$; namely, when $m = f_s(m')$.
\end{enumerate}
\end{explain}

We now continue expanding the leakage (\Cref{eq:thm_LHL_denom}) using these two conditional densities.
\begin{align*}
&I(M \sep Z^n,S)\\
&\explainequals{\up} \int\limits_{\Z^n} \sum\limits_{m \in \M} \sum\limits_{s \in \seed} \omega(m,z^n,s)\cdots \\
&\quad\cdots  \log \left(\frac{ 2^{-b} \sum_{u' \in \Mprime} \omega(z^n|u')\indicator{(f_s(u') = m)}}{ 2^{-b} \sum\limits_{u'' \in \Mprime} \omega(z^n|u'')  P_M(f_s(u'')) }\right) \mu(dz^n) \\
&\explainequals{\up} \int\limits_{\Z^n}  \sum\limits_{m \in \M} \sum\limits_{s \in \seed} \omega(z^n|m,s)P_M(m)P_S(s) \cdots \\
&\qquad\cdots\log \left(\frac{ \sum_{u' \in \Mprime} \omega(z^n|u')\indicator{(f_s(u') = m)}}{ \sum\limits_{u'' \in \Mprime} \omega(z^n|u'')  P_M(f_s(u'')) }\right)  \mu (dz^n) \\
&\explainequals{\up} \frac{1}{|\seed|} \int\limits_{\Z^n} \sum\limits_{m \in \M} \sum\limits_{s \in \seed} \omega(z^n|m,s)P_M(m) \cdots \\
&\qquad\cdots \log \left(\frac{ \sum_{u' \in \Mprime} \omega(z^n|u')\indicator{(f_s(u') = m)}}{ \sum\limits_{u'' \in \Mprime} \omega(z^n|u'')  P_M(f_s(u'')) }\right)\mu(dz^n).
\end{align*}

\begin{explain}\
\begin{enumerate}\setcounter{enumi}{3}
\item We will break with our convention slightly. Here we have written $\omega(z^n|u')$ as shorthand for $\omega_{Z^n|M'}(z^n|u')$; analogously for $\omega(z^n|u'')$. We will stick with this new convention for the remainder of the proof; i.e.\ $\omega(\cdot|u^*)$ and $\omega(u^*)$ will be shorthand for densities with respect to $M'$.
\item By assumption, $M \bot S$.
\item By assumption, $S \sim \U(\seed)$.
\end{enumerate}
\end{explain}

At this point we can expand the conditional density $\omega(z^n|m,s)$ (from \Cref{eq:thm_LHL_sneakerslast}) and continue:
\begin{flalign*}
&= \frac{2^{-b}}{|\seed|} \int\limits_{\Z^n} \sum\limits_{\substack{m \in \M \\ s \in \seed \\m' \in \M'}} \omega(z^n|m') P_M(m) \indicator{(f_s(m') = m)}\cdots \\
& \quad \cdots\log \left(\frac{ \sum_{u' \in \Mprime} \omega(z^n|u')\indicator{(f_s(u') = m)}}{ \sum\limits_{u'' \in \Mprime} \omega(z^n|u'')  P_M(f_s(u'')) }\right) \mu (dz^n) \numberthis \label{eq:thm_LHL_star}\\
&\explainequals{\up}  \frac{2^{-b}}{|\seed|} \int\limits_{\Z^n} \sum\limits_{\substack{m \in \M \\ s \in \seed \\m' \in \M'}} \omega(z^n|m')P_M(m)\indicator{(f_s(m') = m)}\cdots \\
&\quad\cdots \log \left(\frac{ \sum_{u' \in \Mprime} \omega(z^n|u')\indicator{(f_s(u') = f_s(m'))}}{ \sum\limits_{u'' \in \Mprime} \omega(z^n|u'')  P_M(f_s(u'')) }\right) \mu(dz^n) \\
&\explainequals{\up}  \frac{2^{-b}}{|\seed|} \int\limits_{\Z^n} \sum\limits_{\substack{s \in \seed \\m' \in \M'}} \omega(z^n|m')P_M(f_s(m'))\cdots \\
&\quad\cdots \log \left(\frac{ \sum_{u' \in \Mprime} \omega(z^n|u')\indicator{(f_s(u') = f_s(m'))}}{ \sum\limits_{u'' \in \Mprime} \omega(z^n|u'')  P_M(f_s(u'')) }\right) \mu(dz^n)  \\
&\explainequals{\up} 2^{-b}  \int\limits_{\Z^n}  \Biggr[\frac{1}{|\seed|} \sum\limits_{\substack{s \in \seed \\m' \in \M'}} \omega(z^n|m')P_M(f_s(m')) \cdots \\
&\qquad\cdots  \log \left( \sum_{u' \in \Mprime} \omega(z^n|u')\indicator{(f_s(u') = f_s(m'))} \right)\Biggr] +\cdots \\
&\hspace*{1.06cm} \cdots+ \Biggr[-\frac{1}{|\seed|} \sum\limits_{\substack{s' \in \seed \\m'' \in \M'}} \omega(z^n|m'')P_M(f_{s'}(m'')) \cdots \\
&\qquad\cdots \log \left( \sum\limits_{u'' \in \Mprime} \omega(z^n|u'')  P_M(f_{s'}(u'')) \right) \Biggr]  \mu(dz^n) \numberthis \label{eq:thm_LHL_wine}.
\end{flalign*}

\begin{explain}\
\begin{enumerate}\setcounter{enumi}{6}
\item The entire summand is 0 unless $m = f_s(m')$, so we can replace the $m$ in the indicator function of the log as such as long as we stick with the convention that $0 \log 0 = 0$ as the limit suggests.
\item As in \Cref{eq:thm_LHL_banana}, the indicator will filter all but a single $m$; namely, when $m = f_s(m')$.
\item We can break up the logarithm into a subtraction where we change indices of the summation so as not to become confused.
\end{enumerate}
\end{explain}

We will now consider each of expressions within the square brackets of \Cref{eq:thm_LHL_wine} separately, starting with the first. The first square bracket can be written (after multiplying by the unit $2^{-k}2^k$) as 
\begin{align*}
&2^{-k} \sum_{m' \in \Mprime} \omega(z^n|m') \left[ \sum_{s \in \seed} \frac{ 2^k}{|\seed|} P_M(f_s(m'))  \right. \cdots \\
&\qquad\cdots \left. \log \left( \sum_{u' \in \Mprime} \omega(z^n|u')\indicator{(f_s(u') = f_s(m'))}\right) \right] \numberthis \label{thm_LHL_jensen}.
\end{align*}
Our goal now will be to move the sum over $s$ inside of the logarithm via Jensen's inequality. However, Jensen's incurs a multiplicative penalty if the weights do not sum to 1. Fortunately, our weights \textit{do} sum to 1 as shown next. Our preprocessor is an SS-UHF and hence it is \textit{uniform}. Thus for any $m' \in \M'$ we have: 
\begin{align*}
&\sum_{s \in \seed} \left(\frac{ 2^k}{|\seed|} P_M(f_s(m')) \right) \\  =&\sum_{m \in \M} P_M(m) \frac{2^k}{|\seed|} \sum_{s \in \seed} \indicator\left(f_s(m') = m \right) \\
= & 1.
\end{align*}
Thus we can aptly apply Jensen's inequality (without carrying around any extra factors) and move the preceding term inside of the logarithm at the expense of an inequality. This yields:
\begin{align*}
&(\ref{thm_LHL_jensen}) \leq 2^{-k} \sum_{m' \in \Mprime} \omega(z^n|m') \log\left( 2^k \sum_{u' \in \Mprime} \omega(z^n|u') \right. \cdots \\
&\hspace*{2cm} \cdots \left. \sum_{s \in \seed}\frac{P_M(f_s(m'))}{|\seed|} \indicator{(f_s(u') = f_s(m'))} \right) \numberthis \label{eq:thm_LHL_uprimemprime}.
\end{align*}

If $u' = m'$ in \Cref{eq:thm_LHL_uprimemprime}, it is clear that the indicator will always return 1 regardless of $s\in \seed$ so that the argument of the logarithm becomes \[ \sum_{u' \in \Mprime} \omega(z^n|u') \indicator(u'=m') = \omega(z^n|m'),\] where we have again used the fact that our preprocessor is a SS-UHF and is hence uniform.

On the contrary, if $u' \neq m'$ in \Cref{eq:thm_LHL_uprimemprime}, the indicator will only return 1 some of the time, and a nice simplification of the expression is not obvious at this time; we will address this in a bit.

Combining these cases together, the entire first square bracket of \Cref{eq:thm_LHL_wine} is less than or equal to:
\begin{align*}
& 2^{-k} \sum_{m' \in \Mprime} \omega(z^n|m') \log \left[  \vphantom{\sum_{m' \in \Mprime}} \omega(z^n|m') +2^k \sum_{u' \in \Mprime} \omega(z^n|u')\cdots \right.\\
&\left.\qquad \cdots \sum_{s \in \seed}\frac{P_M(f_s(m'))}{|\seed|} \indicator{(f_s(u') = f_s(m'))} \indicator{(u' \neq m')}  \vphantom{\sum_{m' \in \Mprime}} \right].
\end{align*}

Let us now move onto the second square bracket of \Cref{eq:thm_LHL_wine} above. We can write this term as
\begin{align*}
& -\frac{1}{|\seed|} \sum_{s' \in \seed} \left( \sum_{m'' \in \Mprime} \omega(z^n|m'')  P_M(f_{s'}(m''))\right)\cdots \\
&\hspace*{1.8cm} \cdots\log \left( \sum\limits_{u'' \in \Mprime} \omega(z^n|u'') P_M(f_{s'}(u'')) \right) \\
&\leq -\frac{1}{|\seed|} \left(\sum_{s' \in \seed}  \sum\limits_{m'' \in \Mprime} \omega(z^n|m'')  P_M(f_{s'}(m''))\right) \cdots \\
&\hspace*{1.8cm} \cdots \log \left( \frac{\sum\limits_{s'' \in \seed} \sum\limits_{u'' \in \Mprime} \omega(z^n|u'') P_M(f_{s''}(u''))}{\sum\limits_{s''' \in \seed} 1} \right),
\end{align*}
where the inequality follows from the log-sum inequality. Now again using the fact that our preprocessor is an SS-UHF and hence uniform we have the formula $\sum_{s \in \seed} \left(\frac{ 1}{|\seed|} P_M(f_s(m')) \right) = 2^{-k}$ for any $m' \in \M'$. Using this, the entire second square bracket of \Cref{eq:thm_LHL_wine} becomes less than or equal to
\begin{align*}
-2^{-k} \sum\limits_{m'' \in \Mprime} \omega(z^n|m'') \log \left( 2^{-k} \sum\limits_{u'' \in \Mprime} \omega(z^n|u'')\right).
\end{align*}

We are now at a point where each square bracket of \Cref{eq:thm_LHL_wine} is properly simplified. Thus:
\begin{align*}
&(\ref{eq:thm_LHL_wine}) \leq 2^{-b-k}  \int\limits_{\Z^n} \sum_{m' \in \Mprime} \omega(z^n|m')\log\Bigg( \frac{2^k\omega(z^n|m')}{\sum\limits_{u'' \in \Mprime} \omega(z^n|u'')} +\cdots \\
& \cdots+ \frac{2^k2^k}{\sum\limits_{u'' \in \Mprime} \omega(z^n|u'')} \sum\limits_{u' \in \Mprime} \omega(z^n|u') \sum\limits_{s \in \seed} \frac{P_M(f_s(m'))}{|\seed|}   \cdots \\
&\qquad \qquad\qquad \cdots \indicator{(f_s(m') = f_s(u'))} \indicator{(u' \neq m')} \Bigg) \mu(dz^n) \numberthis \label{eq:thm_LHL_blast}.
\end{align*}

We will now simplify the inside of the logarithm. Consider the first summand given by \[\frac{2^k\omega(z^n|m')}{\sum\limits_{u'' \in \Mprime} \omega(z^n|u'')}.\] Conditional densities are defined as $\omega(z^n|u'') = \frac{\omega(z^n,u'')}{\omega(u'')}$. By \Cref{lem:mUniform}, $\omega(u'') = 2^{-l}$ for every $u'' \in \M'$ so that $\omega(z^n|u'') = 2^l \omega(z^n,u'')$. Then by the marginal property of densities, $\sum_{u'' \in \M'} \omega(z^n|u'') = 2^l \sum_{u'' \in \M'} \omega(z^n,u'') = 2^l \omega(z^n)$. Moreover, using Bayes theorem and \Cref{lem:mUniform} again we can write 
\begin{align*}
   \omega(z^n|m') = \frac{\omega(m'|z^n)\omega(z^n)}{\omega(m')} = 2^l \omega(m'|z^n)\omega(z^n) \numberthis \label{eq:thm_LHL_bayes}. 
\end{align*}
The term $2^l \omega(z^n)$ appears both in the numerator and denominator and thus cancels out. Hence the entire first summand of the logarithm in \Cref{eq:thm_LHL_blast} becomes \[2^k \omega(m'|z^n).\]

Now the second summand of the logarithm of \Cref{eq:thm_LHL_blast} is given by
\begin{align*}
&\frac{2^k2^k}{\sum\limits_{u'' \in \Mprime} \omega(z^n|u'')} \sum\limits_{u' \in \Mprime} \omega(z^n|u') \sum\limits_{s \in \seed} \frac{P_M(f_s(m'))}{|\seed|}   \cdots \\
&\qquad \qquad\qquad \cdots \indicator{(f_s(m') = f_s(u'))} \indicator{(u' \neq m')}
\end{align*}
Using the same argument as in the preceding paragraph we have $\sum_{u'' \in \M'} \omega(z^n|u'')  = 2^l \omega(z^n)$ and $\omega(z^n|u') = 2^l \omega(z^n) \omega(u'|z^n)$. Again, the term $2^l \omega(z^n)$ appears in both the numerator and denominator thus canceling each other out. Thus the second summand of the logarithm of \Cref{eq:thm_LHL_blast} simplifies immediately to:
\begin{align*}
&2^k2^k \sum\limits_{u' \in \Mprime} \omega(u'|z^n) \sum\limits_{s \in \seed} \frac{P_M(f_s(m'))}{|\seed|}   \cdots \\
&\qquad \qquad\qquad \cdots \indicator{(f_s(m') = f_s(u'))} \indicator{(u' \neq m')}. \\
&=2^k 2^k \sum\limits_{u' \in \Mprime} \omega(u'|z^n) \indicator{(u' \neq m')}  \sum\limits_{m \in \M} P_M(m)  \cdots \\
&\qquad \cdots\frac{1}{|\seed|} \sum\limits_{s \in \seed}\indicator{(m = f_s(m'))} \indicator{(f_s(m') = f_s(u'))} \numberthis \label{eq:thm_LHL_water}.
\end{align*}
Now note that 
\begin{align*}
&\sum\limits_{m \in \M}\sum\limits_{s \in \seed} P_M(m) \frac{1}{|\seed|} \indicator{(m = f_s(m'))} \indicator{(f_s(m') = f_s(u'))}\\
&=\P_{MS}\left[M = f_S(m') \text{ and } f_S(m') = f_S(u') \right]\\
&= \P_{MS}\left[f_S(m') = f_S(u')\;|\; M = f_S(m') \right] \cdots \\
&\hspace*{3cm} \cdots \P_{MS}\left[ M = f_S(m') \right]\\
&= \P_{MS}\left[M = f_S(u')\right] \cdot \P_{MS}\left[ M = f_S(m') \right] \numberthis \label{eq:thm_LHL_bowl1}.
\end{align*}

However, 
\begin{align*}
\P_{MS}\left[M = f_S(u')\right] &= \sum_{m \in \M} \sum_{s \in \seed} P_M(m) \frac{1}{|\seed|} \indicator\left( m = f_s(u')\right)\\
&= \sum_{m \in \M} P_M(m) \frac{1}{|\seed|} \sum_{s \in \seed} \indicator\left( m = f_s(u')\right)\\
&= 2^{-k} \sum_{m \in \M} P_M(m) \numberthis \label{eq:thm_LHL_penultimate}\\
&= 2^{-k} \numberthis \label{eq:thm_LHL_bowl2},
\end{align*}
where \Cref{eq:thm_LHL_penultimate} follows immediately from the uniform property of our SS-UHF. From this we also have: \[\P_{MS}\left[ M = f_S(m') \right] = 2^{-k}.\]

Thus, combining \Cref{eq:thm_LHL_bowl1} and \Cref{eq:thm_LHL_bowl2} together with \Cref{eq:thm_LHL_water} simplifies the entire second summand of the logarithm in \Cref{eq:thm_LHL_blast} to 
\[\sum\limits_{u' \in \Mprime} \omega(u'|z^n) \indicator{(u' \neq m')} \leq \sum\limits_{u' \in \Mprime} \omega(u'|z^n) = 1.\]

Then it follows, (continuing on from \Cref{eq:thm_LHL_blast}):
\begin{align*}
&I(M \sep Z^n,S)\\
& \leq 2^{-b-k}  \int\limits_{\Z^n} \sum_{m' \in \Mprime} \omega(z^n|m')\log\left( 2^k\omega(m'|z^n) + 1 \right) \mu(dz^n)\\
& \explainequals{\up} \int\limits_{\Z^n}\omega(z^n) \sum_{m' \in \Mprime} \omega(m'|z^n)\log\left( 2^k\omega(m'|z^n) + 1 \right) \mu(dz^n)\\
& \explainlessthanequals{\up} \int\limits_{\Z^n}\omega(z^n)\log\left(  \sum_{m' \in \Mprime} \omega(m'|z^n)(2^k\omega(m'|z^n) + 1) \right) \mu(dz^n)\\
&= \int\limits_{\Z^n}\omega(z^n)\log\left(1+ 2^k\sum_{m' \in \Mprime} \omega(m'|z^n)^2  \right)\mu(dz^n)\\
&\explainlessthanequals{\up} \frac{2^{\frac{1}{2}k}}{\ln 2} \int\limits_{\Z^n}\omega(z^n) \left(\sum_{m' \in \Mprime} \omega(m'|z^n)^2 \right)^{\frac{1}{2}} \mu(dz^n)\\
&= \frac{1}{\ln 2}2^{ \frac{1}{2} \left(k + 2\log  \int\limits_{\Z^n}\omega(z^n) \left(\sum_{m' \in \Mprime} \omega(m'|z^n)^2 \right)^{\frac{1}{2}} \mu(dz^n) \right)}\\
&=  \frac{1}{\ln 2}2^{ \frac{1}{2} \left(k - H_2(M'|Z^n) \right)}\\
&= \frac{1}{\ln 2}2^{ \frac{1}{2} \left(k - l + l -H_2(M'|Z^n) \right)}\\
&\explainequals{\up} \frac{1}{\ln 2} 2^{ \frac{1}{2} \left(-b + I_2(M'\sep Z^n) \right)}\\
&\explainlessthanequals{\up}\frac{1}{\ln 2} 2^{ \frac{1}{2} \left(-b + I_\alpha(M'\sep Z^n) \right)} \quad \text{ for any } \alpha \in [2,\infty].
\end{align*}

\begin{explain}\
\begin{enumerate}\setcounter{enumi}{9}
\item \Cref{eq:thm_LHL_bayes} and $b=l-k$.
\item Jensen's inequality on the sum over $m'$.
\item Use the bound $\log( 1 + x) \leq \frac{1}{\ln 2}\sqrt{x}$ for all $x \geq 0$.
\item By \Cref{lem:mUniform},  $M'$ is uniform so that $H_\alpha(M') = l$ for any $\alpha$ and $I_\alpha(M' \sep Z^n) = H_\alpha(M') - H_\alpha(M'|Z^n)$. Also recall $b = l-k$.
\item By \Cref{fact:renyiOrdering1}, $I_2(M' \sep Z^n) \leq I_\alpha(M' \sep Z^n)$ for any $\alpha \in [2,\infty]$. In particular, $\alpha = \infty$ here proves the second part of our claim for the $\epsilon = 0$ case.
\end{enumerate}
\end{explain}

With this, we have constructed an upper bound to $I(M \sep Z^n, S)$ for an arbitrary message distribution $P_M$. However, since the bound did not depend on the specific choice of $P_M$, the bound also holds for $\max_{P_M} I(M \sep Z^n, S)$. Therefore, we have concluded the $\epsilon = 0$ case.

\newcommand{\Tpre}{{\T_{*}}}
\newcommand{\TpreC}{{\T^\complement_{*}}}
Let us move onto the $\epsilon >0$ case. Fix some $\epsilon > 0$ and consider some $(1-\epsilon)$ typical set $\T \subset \M' \times \Z^n$.

Now consider \Cref{eq:thm_LHL_star} in the previous string of inequalities written as:
\begin{align*}
& \frac{2^{-b}}{|\seed|} \int\limits_{\Z^n} \sum\limits_{\substack{m \in \M \\ s \in \seed}}P_M(m) \left[ \sum\limits_{m' \in \M'} \omega(z^n|m')  \indicator{(f_s(m') = m)}\right. \cdots\\
&\qquad\cdots\left.\log \left(\frac{ \sum\limits_{u' \in \Mprime} \omega(z^n|u')\indicator{(f_s(u') = m)}}{ \sum\limits_{u'' \in \Mprime} \omega(z^n|u'')  P_M(f_s(u'')) }\right)\right] \mu(dz^n) \numberthis \label{eq:thm_LHL_pumpernickel}.
\end{align*}
Inside of the square bracket of \Cref{eq:thm_LHL_pumpernickel}, $z^n$ and $m$ can be considered \textit{fixed}, and thus, each of the 3 sums over $\M'$ can be considered as a sum over two other sets: \[\M_1' = \{m'\in \M':(m',z^n) \in \T\} \text{ and}\] \[\M_2' = \{ m'\in \M':(m',z^n) \in \T^\complement\},\] where $\T^\complement$ denotes the complement of $\T$ in $\M' \times \Z^n$. 

With this, we can then apply the log-sum inequality to \Cref{eq:thm_LHL_pumpernickel} to yield the following:
\begin{align*}
&(\ref{eq:thm_LHL_pumpernickel})\leq \frac{2^{-b}}{|\seed|} \int\limits_{\Z^n} \sum\limits_{\substack{m \in \M \\ s \in \seed}}P_M(m)\cdots \\
&\cdots \left[\left( \sum\limits_{m' \in \M_1'} \omega(z^n|m')  \indicator{(f_s(m') = m)} \right)\right.\cdots\\
&\qquad\quad\cdots\left.\log \left(\frac{ \sum\limits_{u' \in \M_1'} \omega(z^n|u')\indicator{(f_s(u') = m)}}{ \sum\limits_{u'' \in \M_1'} \omega(z^n|u'')  P_M(f_s(u'')) }\right)\right. +\cdots \\
&\cdots +\left.
\left( \sum\limits_{m' \in \M_2'} \omega(z^n|m')  \indicator{(f_s(m') = m)} \right) \right.\cdots\\
&\qquad\quad\cdots\left.\log \left(\frac{ \sum\limits_{u' \in \M_2'} \omega(z^n|u')\indicator{(f_s(u') = m)}}{ \sum\limits_{u'' \in \M_2'} \omega(z^n|u'')  P_M(f_s(u'')) }\right)\right] \mu(dz^n).
\end{align*}

Now define $\mathcal{Q}_\T$ by
\begin{align*}
&\frac{2^{-b}}{|\seed|} \int\limits_{\Z^n} \sum\limits_{\substack{m \in \M \\ s \in \seed}}P_M(m)  \sum\limits_{m' \in \M'} \omega_\T(z^n|m')  \indicator{(f_s(m') = m)}\cdots \\
&\qquad \cdots\log \left(\frac{ \sum\limits_{u' \in \M'} \omega_\T(z^n|u')\indicator{(f_s(u') = m)}}{ \sum\limits_{u'' \in \M'} \omega_\T(z^n|u'')  P_M(f_s(u'')) }\right)\mu(dz^n), 
\end{align*}
so that \Cref{eq:thm_LHL_pumpernickel} yields: \[ I(M \sep Z^n,S) \leq \mathcal{Q}_\T + \mathcal{Q}_{\T^\complement}.\]

When considering just $\mathcal{Q}_\T$ we can continue where we left off from \Cref{eq:thm_LHL_star} of the previous proof ($\epsilon = 0$ case). In fact, it is not hard to see that almost nothing changes and we end up with
\begin{align*}
\mathcal{Q}_\T &\leq \frac{1}{\ln 2} 2^{\frac{1}{2}(-b+I_\alpha^\T(M' \sep Z^n))},
\end{align*}
for $\alpha \in [2,\infty]$.

Now let's focus on $\mathcal{Q}_{\T^\complement}$. It follows that:
\begin{flalign*}
&\mathcal{Q}_{\T^\complement}\\
&\explainlessthanequals{\up}  \frac{2^{-b}}{|\seed|} \int\limits_{\Z^n} \sum\limits_{s \in \seed} \left( \sum\limits_{m' \in \M'} \omega_{\T^\complement}(z^n|m') P_M(f_s(m')) \right) \cdots \\
&\qquad\quad\cdots\log \left(\frac{ \sum_{u' \in \Mprime} \omega_{\T^\complement}(z^n|u')}{ \sum\limits_{u'' \in \Mprime} \omega_{\T^\complement}(z^n|u'')  P_M(f_s(u'')) }\right) \mu(dz^n) \\
&\explainlessthanequals{\up}  \frac{2^{-b}}{|\seed|} \int\limits_{\Z^n} \left(  \sum\limits_{s \in \seed} \sum\limits_{m' \in \M'} \omega_{\T^\complement}(z^n|m') P_M(f_s(m')) \right)  \cdots \\
&\qquad\cdots\log \left(  \frac{|\seed| \sum_{u' \in \Mprime} \omega_{\T^\complement}(z^n|u')}{ \sum\limits_{s' \in \seed} \sum\limits_{u'' \in \Mprime} \omega_{\T^\complement}(z^n|u'')  P_M(f_{s'}(u'')) }\right) \mu(dz^n)\\
&\explainequals{\up}  2^{-l} \int\limits_{\Z^n}\sum\limits_{m' \in \M'} \omega_{\T^\complement}(z^n|m') \cdots \\
&\qquad\quad\cdots\log \left(  \frac{2^k \sum_{u' \in \Mprime} \omega_{\T^\complement}(z^n|u')}{ \sum\limits_{u'' \in \Mprime} \omega_{\T^\complement}(z^n|u'') }\right) \mu(dz^n) \\
&= k 2^{-l} \sum\limits_{m' \in \M'} \int\limits_{\Z^n} \omega(z^n|m')\indicator{\left( (m',z^n) \in \T^\complement \right)}  \mu(dz^n) \\
&=  k 2^{-l} \sum\limits_{m' \in \M'} \P\left[(M',Z^n) \in \T^\complement\,|\,M' = m' \right]\\
&\explainlessthanequals{\up} k \epsilon.
\end{flalign*}

\begin{explain}\
\begin{enumerate}\setcounter{enumi}{14}
\item In the numerator of the logarithm, we have used the trivial bound $\indicator{(f_s(u') = m)} \leq 1$ for all $s,u',m$.
\item Log-sum inequality.
\item Our preprocessor is a SS-UHF and hence it is uniform.
\item We chose $\T$ to be a $(1-\epsilon)$ typical set and there are $2^l$ pseudo-messages.
\end{enumerate}
\end{explain}
\ \\
Again, just as in the $\epsilon = 0$ case, we have provided an upper bound to $I(M \sep Z^n, S)$ for an arbitrary message distribution $P_M$ so that the upper bound also holds for $\max_{P_M} I(M \sep Z^n,S)$. This concludes the $\epsilon > 0$ case. 

Combining both cases, we have for any $\epsilon \geq 0$, $\alpha \in [2,\infty]$: \begin{align*}
\max_{P_M} I(M \sep Z^n,S)\leq \frac{1}{\ln 2} 2^{ \frac{1}{2}\left( -b+I_\alpha^\T(M' \sep Z^n) \right)} + \epsilon k. 
\end{align*}

Since this inequality was derived using an \textit{arbitrary} $(1-\epsilon)$-typical set $\T$, we may as well optimize our choice of $\T$ while keeping $\epsilon$ fixed so as to obtain the \textit{tightest} possible bound. With this we have proven the claim of \Cref{lem:LHL}. 
\end{proof} \label{endOfLemma}

\subsection*{\textbf{Proof of \Cref{thm:LHL3}}}
\begin{proof}[\unskip\nopunct]\
\begin{enumerate}
\item Consider \Cref{lem:LHL2}: we need the right hand side of the inequality to approach $0$ as $n \to \infty$ to show our wiretap coding scheme is semantically secure. We have as $n \to \infty$ that $R_n \to \Rs$ and $R_{\code_n} \to R_\code$. Since $\Rs$ is finite then $\lim_{n \to \infty} \epsilon n R_n = 0$ by the assumption that $\epsilon n \to 0$ as $n \to \infty$. Now if $\lim_{n\to\infty} (R_{\code_n} - R_n - \frac{\Iemax}{n}) > 0$ then the first term in the sum on the right hand side of \Cref{lem:LHL2} will also go to 0. But this is equivalent to \[\Rs < R_\code - \lim_{n \to \infty} \frac{\Iemax}{n}. \] If the right hand side is non-positive however, we will instead choose $\Rs = 0$ since rates must be non-negative. 
\ \\
\item Consider \Cref{lem:LHL2} again. Since $\lim_{n \to \infty} \frac{\Iemax}{n} \leq \xi$ by assumption, we can bound the asymptotic leakage as 
\begin{align*}
&\lim\limits_{n \to \infty} \max_{P_M} I(M \sep Z^n)\\
&\leq \lim\limits_{n \to \infty} \left( \frac{1}{\ln 2} 2^{-\frac{n}{2} (R_{\code_n} - R_n)} 2^{\frac{n}{2} \frac{\Iemax}{n}} + \epsilon n R_n \right)\\
&= \frac{1}{\ln 2} 2^{\lim\limits_{n \to \infty}( -\frac{n}{2} (R_{\code_n} - R_n))}2^{\lim\limits_{n \to \infty}( \frac{n}{2} ) \cdot \lim\limits_{n \to \infty} \frac{\Iemax}{n}} + \\
&\qquad\qquad\qquad\qquad \cdots+\lim\limits_{n \to \infty} \epsilon n R_n \\
&\leq \frac{1}{\ln 2} 2^{\lim\limits_{n \to \infty}( -\frac{n}{2} (R_{\code_n} - R_n))}2^{\lim\limits_{n \to \infty}( \frac{n}{2} ) \cdot \xi} + \lim\limits_{n \to \infty} \epsilon n R_n \\
&= \frac{1}{\ln 2} 2^{\lim\limits_{n \to \infty}( -\frac{n}{2} (R_{\code_n} - R_n - \xi))} + \lim\limits_{n \to \infty} \epsilon n R_n.
\end{align*}
At this point we can continue exactly as in part 1).
\ \\
\item Clearly the first of the two summands on the right hand side of the conclusion of  \Cref{lem:LHL2} is exponentially decreasing when $\Rs$ satisfies the rates given in either (1) or (2) above. Thus, if $\epsilon n R_n$ is exponentially decreasing with $n$, the semantic leakage is exponentially decreasing to 0; i.e.\ $\wiretap$ is  exceptionally semantically secure. For $\epsilon n R_n$ to be exponentially decreasing, it suffices for $\epsilon$ to be exponentially decreasing.
\end{enumerate}
This concludes the proof of \Cref{thm:LHL3}.
\end{proof}

\setcounter{lemma}{0}
\section{Removing the assumption of a public seed}\label{appendix:seedremoval}
In this appendix we shall overview a method that removes the assumption of a \textit{public} seed without rate/security/reliability loss. This method is called \textit{seed recycling} and can be found in \cite{semanticallySecure} and \cite{UHF}.

We have seen in \Cref{thm:LHL3} that our wiretap coding scheme can provide semantic security for certain achievable rates (provided that we prove a bound on the max-information rate), however, we have assumed hitherto that the seed $S$ was publicly available to all parties. This is in \textit{strict} violation of assumptions on a wiretap channel; that is, \textit{all} communication must take place over the wiretap channel. In this section, we remove this assumption and transmit the seed over the wiretap channel. We will show asymptotically that no rate, security, or reliability is lost.

As a first attempt to resolve this violation, suppose the seed is transmitted before beginning transmission of an actual message. This is a problem, however, because it leads to \textit{information rate loss} as follows. Suppose the seed can be transmitted with a probability of error less than some $p_{e,n}$ to the intended receiver in $n c$ channel uses for some constant $c > 1$. Then the transmitter sends $k$ message bits of information in another $n$ channel uses. Overall, $k$ bits of information were transferred in $n + nc = n (1 + c)$ channel uses, thus our overall secure information rate in this case is given asymptotically by \[\lim\limits_{n \to \infty} \frac{k}{n(1+c)} = \frac{1}{1+c} \Rs < \Rs,\] where $\Rs$ is the previous secure achievable rate assuming the seed was public. In other words, the possible asymptotic rates now achievable when sending the seed before message transmission are \textit{strictly} less than before. Therefore, in this case, the rates achieved using \Cref{thm:LHL3} are no longer possible.

As a better attempt to resolve this problem, suppose we use the same seed to send $\eta$ messages $M_1,M_2, \dots, M_\eta$ using $\eta$ \textit{independent} instances of the wiretap channel. First we will pick a block-length $n$ and on the first instance of the wiretap channel, we will send the seed over in $n c$ channel uses, where $c > 1$ is chosen so that the seed's probability of error at the intended receiver is less than or equal to $p_{e,n}$. Pessimistically (from the point of view at the transmitter), we will assume that the eavesdropper always receives a perfect copy of the seed. Now on each of the $\eta$ independent channel instances, we will send a corresponding message using the same scheme as outlined in \cref{scheme} except using the \textit{same seed} for each instance. Let $\textbf{M} = (M_1, M_2, \dots, M_\eta)$ be the vector consisting of the $\eta$ messages and let $\textbf{Z} = (Z^n(1),Z^n(2),\dots,Z^n(\eta))$ where $Z^n(i)$ is the $n$-letter eavesdropper output corresponding to the $i$-th message (also to the $i$-th channel instance).

Consider first the rate of this new procedure. In each of the $\eta$ channel uses, we are sending $k$ bits of information. Moreover, we will end up using the channel $\eta \cdot n$ times for the messages and $n c$ times for the seed. Overall, the asymptotic secure rate of this new procedure is thus given by \[\lim_{n \to \infty} \frac{\eta k}{\eta n + c n} = \lim_{n \to \infty} \frac{k}{n (1 + c/\eta)} = \frac{\Rs}{\lim_{n \to \infty} (1 + c/\eta)},\] where $\Rs$ is again the previous asymptotic secure achievable rate when the seed was public. Since $c$ is a constant, the only way to avoid information rate loss asymptotically is if $\eta \to \infty$ as $n \to \infty$.

Consider next the reliability of this new procedure. If each message has probability of error at the intended receiver bounded by $p_{e,n}$, then the probability that $\textbf{M}$ is in error is given in the next lemma.

\begin{lemma}[Reliability]\label{lem:seedReliablility}
The probability that $\textbf{M}$ is in error is upper bounded by \[1-(1-p_{e,n})^\eta.\]
\end{lemma}
\begin{proof}
Let $A_i$ be the event corresponding to the $i$-th message being in error. Then $A =\bigcup_{i=1}^\eta A_i$ is the event corresponding to at least one of the $\eta$ messages being in error. Hence $\P(A)$ is the probability of error of $\textbf{M}$.

Then since each instance of the wiretap channel is independent, we have the following.
\begin{align*}
\P(A) &= 1- \P(A^\complement)\\
&= 1- \P\left( \bigcap_{i=1}^\eta A_i^\complement \right) \\
&= 1- \prod_{i=1}^\eta \P(A_i^\complement)\\
&= 1- \prod_{i=1}^\eta (1-\P(A_i))\\
&\leq 1- \prod_{i=1}^\eta (1-p_{e,n} )\\
&= 1- (1-p_{e,n})^\eta.
\end{align*}
This concludes the proof of \Cref{lem:seedReliablility}.
\end{proof}

With this lemma, we see that in order to transmit reliably, we have another constraint on $\eta$, that is, we must choose $\eta$ so that $(1-p_{e,n})^\eta \to 1$ as $n \to \infty$. 

Consider last the leakage of this new procedure. 
\begin{lemma}[Security]\label{lem:seedLeakage} For some $i\in \{1,\dots, \eta\}$ the following holds:
\[\max\limits_{P_\textbf{M}} I(\textbf{M} \sep \textbf{Z} ) \leq \eta \cdot \max\limits_{P_M} I(M_i \sep Z^n(i) |S). \] 
\end{lemma}
\begin{proof}
Let $\textbf{M}$ have an arbitrary distribution $P_\textbf{M}$. By the chain rule of mutual information, \[I(\textbf{M} \sep \textbf{Z}) \leq I(\textbf{M} \sep \textbf{Z}, S).\] Since $S \bot M_i$ for each $i$, then $S \bot \textbf{M}$. Then by the chain rule of mutual information again, we have, \[I(\textbf{M} \sep \textbf{Z}) \leq I(\textbf{M} \sep \textbf{Z}| S).\] 

Now $(M_1,Z^n(1)),\ldots,(M_\eta,Z^n(\eta))$ are mutually independent once we are given $S$, thus by a standard mutual information inequality we have \[I(\textbf{M} \sep \textbf{Z}) \leq \sum\limits_{i = 1}^\eta I(M_i \sep Z^n(i)|S).\]
We want to maximize $I(\textbf{M} \sep \textbf{Z})$ over all probability distributions $P_\textbf{M}$. However, that is equivalent to maximizing over each choice of $P_{M_i}$ individually. The above becomes: \[\max\limits_{P_\textbf{M}} I(\textbf{M} \sep \textbf{Z}) \leq \sum\limits_{i = 1}^\eta \max\limits_{P_{M_i}} I(M_i \sep Z^n(i)|S).\]

Here $i$ represents an instance of the wiretap channel. Choose the channel instance $j$ that corresponds to the most leakage $\max_{P_{M_j}} I(M_j \sep Z^n(j)|S)$ leaked to the eavesdropper. The above then becomes \[\max\limits_{P_\textbf{M}} I(\textbf{M} \sep \textbf{Z}) \leq \eta \max\limits_{P_{M_j}} I(M_j \sep Z^n(j)|S).\]
This concludes the proof of \Cref{lem:seedLeakage}.
\end{proof}
This lemma intuitively says that the message leakage of all $\eta$ wiretap channel instances is no more than the number of channel instances multiplied by the leakage over the ``worst case'' wiretap channel (worst here is with respect to the transmitter). Combining this result with \Cref{lem:LHL} and \Cref{lem:LHL2} gives the following proposition.
\begin{prop} Let $i$ be the wiretap channel instance where the transmitter leaks the most information to the eavesdropper. Let $R_{\code_n}$ be the rate of the ECC and $R_n$ the secure rate of transmission for that wiretap channel instance. It follows that
\[\max\limits_{P_\textbf{M}} I(\textbf{M} \sep \textbf{Z} ) \leq  \frac{\eta}{\ln 2}2^{-\frac{n}{2} \left(R_{\code_n} - R_n - \frac{1}{n}\Iemax \right)} + \epsilon \eta n R_n.\]
\end{prop}

With this, just as in \Cref{thm:LHL3}, we see that if $R_n < R_{\code_n} - \frac{1}{n}\Iemax$ for each $n$, then so long as $\eta$ grows with $n$ strictly slower than exponential, the first term will go to 0. Furthermore, $\eta$ must be chosen slow enough so that $\epsilon \eta n \to 0$ as $n \to \infty$.

In summary, with regards to how $\eta$ must grow with $n$ we need the following as $n \to \infty$:
\begin{itemize}
\item $\eta \to \infty$ to guarantee negligible rate loss,
\item $(1-p_{e,n})^\eta \to 1$ to guarantee negligible reliability loss,
\item $\eta$ must grow slower than exponential in $n$ and $\epsilon \eta n \to 0$ to guarantee negligible security loss.
\end{itemize}

It will depend on the specific choice of $\epsilon$ and $p_{e,n}$ in each case in order to properly determine $\eta$, however, if for example $\epsilon$ is exponentially diminishing with $n$ and $p_{e,n}$ diminishes on the order of $1/n$, then picking $\eta$ on the order of $\log(n)$ will be sufficient to satisfy all of the previous requirements. Indeed, there is significant flexibility in these three parameters and finding them to satisfy the previous requirements should not be too intrusive. 

Intuitively, the previous has a nice interpretation. It says that as long as we keep on adding new independent messages when increasing the block length, we can still achieve the same rate, reliability, and security asymptotically as before when we assumed the seed to be public.

\setcounter{lemma}{0}
\section{Proofs from \Cref{sec:App1}} \label{appendix:awgnMaxInfoBound}
In this appendix we will prove two statements from the first applications section. We first prove \Cref{lem:awgnDMCmaxinfo}, which simplifies the expression of max information. Then we provide a reworked proof of \cite[Lemma 6]{UHF} (\Cref{lem:awgnmaxinfobound}) as an aid for our proof of \Cref{thm:ImaxLessCE} in \Cref{appendix:NOCSIT}.

\subsection*{\textbf{Proof of \Cref{lem:awgnDMCmaxinfo}}}
\begin{proof}[\unskip\nopunct]
Recall that $X^n = e_n(M')$ is a random variable over $\codebook_n$ with the same distribution as $M'$. By \Cref{lem:mUniform}, this means $X^n$ is uniform over $\codebook_n$. Since $\codebook_n$ has $2^l$ elements then $\omega(x^n) = 2^{-l}$. Now consider the following string of equalities.
\begin{align*}
&I_\infty^{\T}(X^n \sep Z^n)\\
&= \log|\codebook_n| - H_\infty^{\T}(X^n|Z^n)\\
&= l + \log \int_{\Z^n} \omega(z^n) \max_{x^n \in \codebook_n} \omega_\T(x^n|z^n) \mu(dz^n)\\
&=\log  \int_{\Z^n}  \max_{x^n \in \codebook_n} \frac{\omega(x^n,z^n)}{2^{-l}} \indicator\left( (x^n,z^n) \in \T\right) \mu(dz^n)\\
&= \log \int_{\Z^n} \max_{x^n \in \codebook_n} \omega_\T(z^n|x^n) \mu(dz^n).
\end{align*}
This proves the validity of \Cref{lem:awgnDMCmaxinfo}.
\end{proof}

\newcommand{\pout}{\mathcal{P}_\text{out}}
\newcommand{\pnoise}{\mathcal{P}_\text{noise}}

\begin{lemma}[\text{\cite[Lemma 6]{UHF}}]\label{lem:awgnmaxinfobound} Let $\delta > 0$ small. Then for any $(1-\epsilon)$ typical set $\T$ where $\epsilon = \exp(-n\delta^n/8)$, the asymptotic $\epsilon$-smooth average max-information of an AWGN eavesdropper channel $E$ is bounded by the point-to-point capacity:
\[ \lim_{n\to\infty} \frac{1}{n}  \log \int\limits_{\R^n} \max\limits_{x^n \in \codebook_n} \omega_\T (z^n|x^n) dz^n \leq C_E.\]
\end{lemma}
\begin{proof}
Define a set \[\pout = \{z^n \in \R^n \,|\, ||z^n||^2 \leq n(P + \sigma_E^2)(1 + \delta)\}.\] Also for each $x^n \in \codebook_n$ define a set \[\pnoise^{x^n}= \{z^n\,|\, ||z^n - x^n||^2 \geq n \sigma_E^2 (1- \delta) \}.\] 

Now let $\T_\text{out}, \T_\text{noise} \subset \codebook_n \times \R^n$ be sets defined as $\T_\text{out} = \codebook_n \times \pout$ and $\T_\text{noise} = \{(x^n,z^n)  \,|\, z^n \in \pnoise^{x^n} \text{ for each } x^n \in \codebook_n\}$. Then define a set $\T = \T_\text{out} \cap \T_\text{noise}$. 

It was shown in \cite{UHF} that $\T$ is a $(1-\epsilon)$-typical set using the given $\epsilon$. Note that $\epsilon \to 0$ exponentially fast with $n$. With this we have the following.
\begin{align*}
& \int \limits_{\Rn} \max_{x^n \in \codebook_n} \omega(z^n|x^n) \indicator((x^n,z^n) \in \T) dz^n\\
&\explainequals{1} \int \limits_{\R^n} \max_{x^n \in \codebook_n} \Biggr[ \left( \prod\limits_{i=1}^n  \frac{1}{\sqrt{2\pi\sigma_E^2}} \exp\left(-\frac{(z_i - x_i)^2}{2\sigma_E^2} \right)\right)\cdots \\
&\hspace*{5cm}  \cdots \indicator((x^n,z^n) \in \T) \Biggr] dz^n\\
&= \frac{1}{(2\pi\sigma_E^2)^{\frac{n}{2}}} \int\limits_{\R^n} \max_{x^n \in \codebook_n} \Biggr[  \exp\left(-\frac{\norm{z^n- x^n}^2}{2\sigma_E^2}\right)\cdots\\
&\hspace*{5cm} \cdots \indicator((x^n,z^n) \in \T) \Biggr] dz^n\\
&\explainlessthanequals{2} \frac{\exp\left(-\frac{n}{2} (1-\delta)\right)  }{(2\pi \sigma_E^2)^{\frac{n}{2}}}  \int \limits_{\R^n} \max_{x^n \in \codebook_n} \indicator((x^n,z^n) \in \T) dz^n \\
&\explainlessthanequals{3} \frac{\exp\left(-\frac{n}{2} (1-\delta)\right)  }{(2\pi \sigma_E^2)^{\frac{n}{2}}}  \int \limits_{\pout} dz^n\\
&\explainequals{4} \frac{\exp\left(-\frac{n}{2} (1-\delta)\right)  }{(2\pi \sigma_E^2)^{\frac{n}{2}}}  \text{Vol}(\pout)\\
&\explainequals{5}  \frac{\exp\left(-\frac{n}{2} (1-\delta)\right)  }{(2\pi \sigma_E^2)^{\frac{n}{2}}}  \frac{(\pi n (P+ \sigma_E^2)(1+\delta))^{\frac{n}{2}}}{\Gamma(n/2 +1)}.
\end{align*}

\begin{explain}\
\begin{enumerate}
\item On an AWGN channel, given that $x_i$ was sent, we know that each output is a normal random variable with mean $x_i$ and variance $\sigma_E^2$. Since we assume the channel is memoryless, we can split this density simply into a product.
\item We are working on $\T$ in the integral and thus $\pnoise$. Thus, $\norm{z^n- x^n}^2 \geq n \sigma_E^2 (1-\delta)$.
\item The indicator function returns either 0 or 1 in the area of interest $\pout \cap \pnoise$ and 0 elsewhere. Thus, we can simply upper bound the indicator by 1 everywhere inside of $\pout$.
\item Consider the following equalities: \[\int \limits_{\pout} dz^n = \int_{\R^n} \indicator(z^n \in \pout)dz^n = \mu(\pout) = \text{Vol}(\pout).\]
\item $\pout$ is clearly a ball in real $n$ space of radius $n (P + \sigma_E^2)(1 + \delta)$. The volume of an $n$ ball of radius $r$ is given by \[\frac{\pi^{n/2}}{\Gamma(n/2 +1)} r^n,\] where here $\Gamma$ is the gamma function (generalized factorial) from analysis.
\end{enumerate}
\end{explain}
Taking the logarithm of both sides of the preceding and dividing by $n$ yields:
\begin{align*}
& \frac{1}{n}  \log \int\limits_{\R^n} \max\limits_{x^n \in \codebook_n} \omega_\T (z^n|x^n) dz^n\\
&\leq \frac{1}{n} \log \left( \frac{\exp\left(- (1-\delta)\right)  }{2}  \frac{( n (1+ P/\sigma_E^2)(1+\delta))}{\Gamma(n/2 +1)^{2/n} } \right)^{n/2}\\
&= \frac{1}{2}\biggr(\log\left(1 + \frac{P}{\sigma_E^2}\right) + \log\left((1+\delta)e^\delta\right) + \cdots\\
&\hspace*{3cm} \cdots + \log\left(\frac{1}{2e}\cdot \frac{n}{\Gamma(n/2 +1)^{2/n}} \right)  \biggr)\\
&= C_E + \frac{1}{2}\log\left((1+\delta)e^\delta\right) + \frac{1}{2}\log\left(\frac{1}{2e}\cdot \frac{n}{\Gamma(n/2 +1)^{2/n}} \right).
\end{align*}

Fortunately, $\frac{n}{\Gamma(n/2 +1)^{2/n}} \to 2e$ as $n \to \infty$. Moreover, our choice of $\delta$ is not restricted and can be made arbitrarily small. This completes the proof of \Cref{lem:awgnmaxinfobound}.
\end{proof}

\setcounter{lemma}{0}
\section{Sphere Packing Argument for No-CSIT Channels}\label{appendix:SpherePacking}
In this appendix we provide motivation for how we constructed the typical set in the No-CSIT scenario. We provide sphere packing bounds in this case that are analogous to their AWGN counterparts (cf. \cite{cover_thomas,tse}). 

The capacity expression for an additive white Gaussian noise channel (AWGN) is motivated by an intuitive argument called \textit{sphere packing}. The argument asserts that due to properties of Gaussian random variables, a received output vector should be contained in some small $n$-dimensional ball around the transmitted codeword with high probability. In other words, the noise of the channel will only disturb the input vector by a certain amount (the radius of the small ball) with high probability. Furthermore, all received outputs should be contained in some larger ball with high probability since we are assuming that all the codewords are being transmitted while obeying the power constraint. If we use maximum likelihood decoding, given an output that resides in one of the small balls, the receiver assumes it came from the codeword that generated said ball. Therefore, the maximum number of small spheres we can pack into the larger ball roughly corresponds to how many codewords we can transmit reliably. This technique is called sphere packing since we are attempting to \textit{pack} the larger ball with smaller spheres. Exact calculation is quite challenging; however, simply dividing the volume of the large ball by the volume in a small sphere gives an upper bound. What is perhaps surprising is that as the block length approaches infinity, this upper bound is actually achievable and is exactly the capacity of the AWGN channel. 

We will provide a symmetric argument for the fast fading channel as justification for how and why we choose our typical sets the way we do in the No-CSIT case. Given an input $x^n$ and channel coefficient $h^n$, we know the output $z^n$ will reside in some small ball about the point $h^nx^n$ with high probability since we assume the noise follows a Gaussian distribution. In fact, such a ball will have radius $\sqrt{n \sigma_E^2 (1 + \delta)}$ for $\delta >0$ small. 

In the case of the AWGN channel, the larger ball's dimensions were derived using the fact that we expect our channel to obey the law of conservation of energy; that is, the maximum output energy should be equal to the summation of the maximum input energy and noise energy. We expect a similar phenomenon to hold on the fast fading channel; however, the input energy will also depend on the channel coefficient realization. During the $i$th symbol transmission, suppose $h_i$ is the realized channel coefficient; then the effective maximum input power is given by $h_i^2 P$ so that the effective maximum average output power $\frac{1}{n} Z_i^2$ is given by $h_i^2 P + \sigma_E^2$. Therefore we expect the realization $z_i^2$ to be less than $n (h_i^2 P + \sigma_E^2)(1+\delta)$.

Since $i$ is a \textit{coordinate} of the vector $z^n$, we should then expect $z^n$ to be found in some volume where each component $z_i$ is bounded by $\pm \sqrt{n (h_i^2 P + \sigma_E^2)(1+\delta)}$. Because $h_i$ is \textit{changing} for each use of the channel, each of these bounds will be different. Therefore, in contrast to the AWGN channel where each upper bound was constant with respect to each component, the volume in this case is actually an $n$-dimensional ellipsoid with radii $\sqrt{n (h_i^2 P + \sigma_E^2)(1+\delta)}$. Thus, if we try to pack as many spheres into this ellipsoid as possible as illustrated in the (2-dimensional) \Cref{fig:spherepack}, we should come up with the maximum number of codewords we can transmit reliably, i.e., an expression for capacity. 

Using the same technique as \cite{cover_thomas}, we simply divide the volume of the ellipsoid by the volume of the small balls. That is, since the volume of an ellipsoid with radii $r_i$ is given by $c \cdot \prod_{i=1}^n r_i$ where $c$ is the same constant factor used to calculate the volume of an $n$-dimensional ball, it follows that an upper bound to the max number of codewords is given by:
\begin{align*}
&\frac{c \cdot \prod\limits_{i=1}^n \sqrt{n (h_i^2 P + \sigma_E^2)(1+ \delta)}}{c \cdot \sqrt{n \sigma_E^2 (1 + \delta)}^n}\\
&= \frac{\prod\limits_{i=1}^n \sqrt{n \sigma_E^2 (1 + h_i^2 \SNR )(1+ \delta)}}{ \sqrt{n \sigma_E^2 (1 + \delta)}^n}\\
&= \prod\limits_{i=1}^n \sqrt{1+h_i^2 \SNR}.
\end{align*}

Since \text{rate} is usually defined as the logarithm of the number of codewords normalized by $n$, an upper bound to the max achievable rate is given by:
\begin{align*}
\frac{1}{n} \log \prod\limits_{i=1}^n \sqrt{1+h_i^2 \SNR} &= \frac{1}{2}\left(\frac{1}{n}\log\prod\limits_{i=1}^n (1+h_i^2 \SNR)
\right)\\
&= \frac{1}{2}\left(\frac{1}{n} \sum\limits_{i=1}^n \log(1+h_i^2 \SNR)\right) \\
&\xrightarrow{n\to\infty} \frac{1}{2} \E\left[\log(1+H_E^2 \SNR) \right]\\ 
&= C_E,
\end{align*}
where the convergence follows from the law of large numbers.

Since the above characterizations correctly estimated the asymptotic upper bound for the fast fading channel using the same sphere packing argument as in the AWGN case, we are confident moving forward that these bounds will produce sets that are typical in the proper sense. 

\begin{figure}[h]
  \begin{center}
  	\def\Scale{1.25}
\begin{tikzpicture}[scale=0.63, every node/.style={scale=0.63}]
\draw circle (.2959118*\Scale);
\draw ellipse (\Scale*4 and 2*\Scale);
\foreach \i in {0,1,...,6}
    \draw (60*\i:.5918236*\Scale) ellipse (.2959118*\Scale); 
\foreach \i in {0,1,...,12}
    \draw ({80+30*\i}:1.1436472*\Scale) circle(.2959118*\Scale);
\foreach \i in {0,1,...,18}
    \draw ({10+20*\i}:1.722045*\Scale) circle(.2959118*\Scale);
\draw (1.7:2.25*\Scale) circle(.2959118*\Scale);
\draw (17.5:2.22*\Scale) circle(.2959118*\Scale);
\draw (42.5:2.26*\Scale) circle(.2959118*\Scale);
\draw (30:2.55*\Scale) circle(.2959118*\Scale);
\draw (22:3.0*\Scale) circle(.2959118*\Scale);
\draw (12:2.7*\Scale) circle(.2959118*\Scale);
\draw (13.2:3.35*\Scale) circle(.2959118*\Scale);
\draw (2.9:3.1*\Scale) circle(.2959118*\Scale);
\draw (5:3.65*\Scale) circle(.2959118*\Scale);
\draw (-3:3.7*\Scale) circle(.2959118*\Scale);
\draw (-11:3.5*\Scale) circle(.2959118*\Scale);
\draw (-2:2.7*\Scale) circle(.2959118*\Scale);
\draw (-4:3.2*\Scale) circle(.2959118*\Scale);
\draw (-12:2.3*\Scale) circle(.2959118*\Scale);
\draw (-13:2.9*\Scale) circle(.2959118*\Scale);
\draw (-22:3.0*\Scale) circle(.2959118*\Scale);
\draw (-25:2.3*\Scale) circle(.2959118*\Scale);
\draw (-40:2.25*\Scale) circle(.2959118*\Scale);
\draw (17.5:-2.22*\Scale) circle(.2959118*\Scale);
\draw (42.5:-2.26*\Scale) circle(.2959118*\Scale);
\draw (30:-2.55*\Scale) circle(.2959118*\Scale);
\draw (22:-3.0*\Scale) circle(.2959118*\Scale);
\draw (10:-2.8*\Scale) circle(.2959118*\Scale);
\draw (13.2:-3.35*\Scale) circle(.2959118*\Scale);
\draw (5:-3.65*\Scale) circle(.2959118*\Scale);
\draw (-4:-3.6*\Scale) circle(.2959118*\Scale);
\draw (-13:-3.5*\Scale) circle(.2959118*\Scale);
\draw (1.5:-2.3*\Scale) circle(.2959118*\Scale);
\draw (-1.7:-3*\Scale) circle(.2959118*\Scale);
\draw (-13.5:-2.3*\Scale) circle(.2959118*\Scale);
\draw (-13:-2.9*\Scale) circle(.2959118*\Scale);
\draw (-25:-2.88*\Scale) circle(.2959118*\Scale);
\draw (-28:-2.3*\Scale) circle(.2959118*\Scale);
\draw (-43:-2.28*\Scale) circle(.2959118*\Scale);

\draw[thick,->] (0,0) -- (-4*\Scale,0) node[anchor=east] {$\sqrt{n (h_1^2 P + \sigma_E^2)(1+ \delta)}$};
\draw[thick,->] (0,0) -- (0,2*\Scale) node[anchor=south] {$\sqrt{n (h_2^2 P + \sigma_E^2)(1+ \delta)}$};
\draw[thick,->] (2.75,1.59) -- (3.05, 1.85) node[anchor=east] {};
\node[text width=2cm] at (3.85,2.5) {$\sqrt{n \sigma_E^2 (1 + \delta)}$};
\node[anchor=east] at (3.1,1.81) (arrow) {};
\node[anchor=west] at (2.91,2.5) (label) {};
\draw (label) edge[out=180,in=180,->] (arrow);
\end{tikzpicture}
  \end{center}
  \caption{Sphere packing for the fading channel.}
  \label{fig:spherepack}
\end{figure}
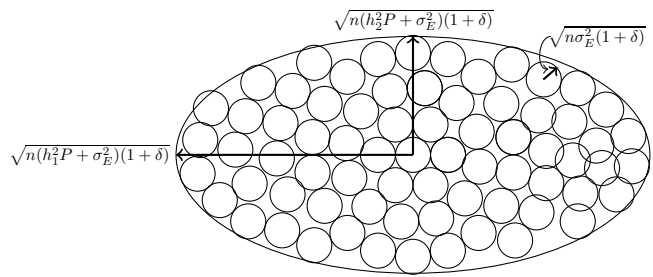

\setcounter{lemma}{0}
\section{Fading: Proof of \Cref{lem:maxinfoFading}, \Cref{fact:ComplexFadingtoRealFading}, and \Cref{lem:noCSIT}} \label{appendix:fading}
In this appendix we prove results related to fading. In particular we prove \Cref{lem:maxinfoFading} (a simplification of max information on fast fading channels), \Cref{fact:ComplexFadingtoRealFading} (the usual result converting complex fast fading channels into real parallel fading channels when CSIR is available), and \Cref{lem:noCSIT} (further simplification of max-information in the No-CSIT case).

\ \\ \\
\subsection*{\textbf{Proof of \Cref{lem:maxinfoFading}}}
\begin{proof}[\unskip\nopunct]
The proof follows directly.
\begin{align*}
&2^{I_\infty^\T(X^n \sep Z^n,\Lambda^n)}\\
&=|\codebook_n| \int\limits_{\Z^n \times \H^n} \omega(z^n,\lambda^n) \max_{x^n \in \codebook_n} \omega_\T(x^n|z^n,\lambda^n)\mu(dz^n,d\lambda^n)\\
&\explainequals{1}  \int\limits_{\H^n} \int\limits_{\Z^n} \max_{x^n \in \codebook_n} \frac{\omega_\T(z^n,x^n,\lambda^n)}{\omega(x^n)} \mu(dz^n) \mu(d\lambda^n)\\
&\explainequals{2} \E_{\Lambda^n} \int\limits_{\Z^n} \max_{x^n \in \codebook_n} \frac{\omega_\T(z^n,x^n,\Lambda^n)}{\omega(x^n)\omega(\Lambda^n)} \mu(dz^n) \\
&\explainequals{3}  \E_{\Lambda^n} \int\limits_{\Z^n} \max_{x^n \in \codebook_n} \omega_\T(z^n|x^n,\lambda^n) \mu(dz^n).
\end{align*}
\begin{explain}\
\begin{enumerate}[1)]
\item First recall that $|\codebook_n| = 2^l$. Next, $X^n = e_n(M')$ has the same distribution as $M'$; that is, $\omega(x^n) = 2^{-l}$ by \Cref{lem:mUniform}. Thus, we can move the $2^l$ inside of the maximization then convert to $\omega(x^n)$. Moreover, we can move $\omega(z^n,\lambda^n)$ inside of the maximization since it does not depend on the maximizing variable $x^n$. Lastly note that each $\mu$ here is not equivalent to each other or the measure from the previous line, it is denoted such simply for notational convenience.
\item We can multiply by the unit $\omega(\lambda^n)/\omega(\lambda^n)$ inside of the maximization. Then we can pull out $\omega(\lambda^n)$ since it does not depend on the maximizing variable $x^n$.
\item Here we are using the assumption that $X^n \bot \Lambda^n$ so that $\omega(x^n)\omega(\lambda^n) = \omega(x^n,\lambda^n)$. Then we use the definition of a conditional probability density function.
\end{enumerate}
\end{explain}
This concludes the proof of \Cref{lem:maxinfoFading}.
\end{proof}

\subsection*{\textbf{Proof of \Cref{fact:ComplexFadingtoRealFading}.}}
\begin{proof}[\unskip\nopunct]
Without loss of generality, consider the intended receiver's channel given above and drop the index $i$ for simplicity. Therefore, we are working with the complex fading channel $Y = H_T X + U_T$. Since $H_T \in \C$ we can write $H_T = |H_T| e^{i \theta}$ and thus, the receiver will receive the random variable $Y = |H_T|e^{i \theta} X + U_T$. However, since we are assuming channel state information is available at the receiver, the receiver actually knows the realization of $H_T$ and hence knows the value $e^{i \theta}$. The receiver thus \textit{adjusts} his output $Y$ accordingly: $Y e^{-i \theta} = |H_T| X + U_T e^{-i \theta}$. Also, the additive white Gaussian noise is assumed to be circularly symmetric, so that $U_T e^{-i \theta}$ is actually distributed the same way as was $U_T$. Therefore, if we define $\tilde{Y} = Y e^{-i \theta}$ as the new output and $\tilde{U}_T = U_T e^{-i \theta}$ as the rotated noise, under the assumption of CSIR, the receiver can convert the original channel into the new channel: $\tilde{Y} = |H_T| X + \tilde{U}_T$. Now we can break up \textit{this} channel into its real and imaginary parts:
\begin{align*}
\tilde{Y}_R + i\tilde{Y}_I &= \left(|H_T|X_R + i |H_T|X_I\right) + \left((\tilde{U}_T)_R + i (\tilde{U}_T)_I\right).
\end{align*}
Combining the real and imaginary parts respectively yields two parallel channels
\begin{align*}
\tilde{Y}_R &= |H_T|X_R + (\tilde{U}_T)_R\\
\tilde{Y}_I &= |H_T|X_I +(\tilde{U}_T)_I.
\end{align*}
Here each output is identically given as \[Y' = |H_T|X' + U_T' \] where $|H_T| \in \Rp, X' \in \R$, $U_T' \sim \mathcal{N}(0, \sigma_T^2)$, and $\E\left[ (X')^2 \right] \leq P$.
This concludes the proof of \Cref{fact:ComplexFadingtoRealFading}.
\end{proof}

\subsection*{\textbf{Proof of \Cref{lem:noCSIT}.}}
\begin{proof}[\unskip\nopunct]
From \Cref{lem:maxinfoFading} and recalling that $\Lambda^n = (H_T^n,H_E^n)$, we have:
\begin{align*}
&2^{I_\infty^\T(X^n \sep Z^n, H_T^n,H_E^n)} \\
&= \E_{H_T^n H_E^n}  \int\limits_{\Rn} \max_{x^n \in \codebook_n}\omega_{\T}(z^n|x^n,H_T^n, H_E^n) dz^n \\
&\explainequals{1} \int\limits_{\Rpn} \omega(h_E^n) \int\limits_{\Rpn} \omega(h_T^n) \int \limits_{\Rn} \max_{x^n \in \codebook_n}\omega_{\T}(z^n|x^n,h_E^n) dz^n dh_T^n dh_E^n\\
&= \int\limits_{\Rpn} \omega(h_E^n) \int \limits_{\Rn} \max_{x^n \in \codebook_n}\omega_{\T}(z^n|x^n,h_E^n) dz^n  dh_E^n\\
&=\E_{H_E^n}  \int\limits_{\Rn} \max_{x^n \in \codebook_n}\omega_{\T}(z^n|x^n,H_E^n) dz^n.
\end{align*}
\begin{explain} \ 
\begin{enumerate}
\item Independence of $H_E^n$ and $H_T^n$. Also, $Z_i = H_{E,i}X_i + U_{E,i}$ and $X_i$ is \textit{not} a function of the channel coefficients since we have No-CSIT; therefore, $Z^n$ is independent of $H_T^n$.
\end{enumerate}
\end{explain}
This concludes the proof of \Cref{lem:noCSIT}.
\end{proof}

\setcounter{lemma}{0}
\section{No-CSIT: Proof of \Cref{lem:Psets}, \Cref{lem:typicaldefinition}, and \Cref{thm:ImaxLessCE}} \label{appendix:NOCSIT}
In this appendix we prove the main results related to the No-CSIT fast fading wiretap channel. In particular, we prove \Cref{lem:Psets}, \Cref{lem:typicaldefinition} (proves that the sets we defined for No-CSIT are actually typical), and \Cref{thm:ImaxLessCE} (one of our main results that proves a bound on max-information in the No-CSIT scenario).

\subsection*{\textbf{Proof of \Cref{lem:Psets}}}
To prove \Cref{lem:Psets}, we will first need a fact and a lemma. The fact is due to \cite{sung} where we have modified its form so as to be easily utilized in the following proofs. It can be considered a generalization of Hoeffding's inequality \cite{hoeffding} to the case of \textit{unbounded} random variables.

\begin{fact}\label{lem:sunglemma}\cite[Theorem 2.1]{sung}
Let $\{W_i\}_{i=1}^n$ be a sequence of independent random variables. Suppose  for all $i$ there exists a $\gamma_i > 0$ such that $\E\left[e^{\gamma_i |W_i|}\right] < \infty.$
Then for any sufficiently small $a > 0$,
\[ \P\left[\left|\frac{1}{n}\sum_{i=1}^{n} (W_i-\E[W_i])\right|\leq a \right] \geq 1 - 2e^{-\frac{n a^2}{4K^*}}\]
where $K_i =2(\E\left[W_i^{4}\right])^{\frac{1}{2}}\E\left[e^{a|W_i|}\right]$ and $K^*=\max\limits_i K_i$.
\end{fact}

The second item that will be needed for the proof of \Cref{lem:Psets} is the following.
\begin{lemma} \label{lem:expectedvaluebound} The following inequality holds:
\[\E\left[ \frac{1}{n} \sum_{i=1}^n \frac{Z_i^2}{\sigma_E^2 + H_{E,i}^2 P} \, \biggr| \, X^n = x^n\right] \leq 1.\]
\end{lemma}
\begin{proof}
\begin{align*}
&\E\left[ \frac{1}{n} \sum_{i=1}^n \frac{Z_i^2}{\sigma_E^2 + H_{E,i}^2 P} \, \biggr| \, X^n = x^n \right]\\
&= \frac{1}{n} \sum_{i=1}^n \E\left[\dfrac{H_{E,i}^2 x_i^2 + U_{E,i}^2 + 2H_{E,i} x_i U_{E,i}}{\sigma_E^2 + H_{E,i}^2 P} \right] \\
&\explainequals{1} \frac{1}{n} \sum_{i=1}^n x_i^2 \cdot \E \left[\dfrac{H_{E,i}^2}{\sigma_E^2 + H_{E,i}^2 P} \right]+\cdots\\
& \cdots+\E U_{E,i}^2 \cdot \E \left[\dfrac{1}{\sigma_E^2 + H_{E,i}^2 P} \right] + \E U_{E,i} \cdot \E\left[\dfrac{2 x_i^2 H_{E,i} }{\sigma_E^2 + H_{E,i}^2 P} \right] \\
&\explainequals{2}  \left(\frac{1}{n} \sum_{i=1}^n  x_i^2 \right)  \cdot \E \left[\dfrac{H_{E}^2}{\sigma_E^2 + H_{E}^2 P} \right] + \E \left[\dfrac{\sigma_E^2}{\sigma_E^2 + H_{E}^2 P} \right] \\
&\explainlessthanequals{3} \E \left[\dfrac{H_{E}^2 P}{\sigma_E^2 + H_{E}^2 P} \right] + \E \left[\dfrac{\sigma_E^2}{\sigma_E^2 + H_{E}^2 P} \right] \\
&= 1.
\end{align*}
\begin{explain}\ 
\begin{enumerate}[1)]
\item Follows from independence of $H_E$, $U_E$.
\item $U_E$ is i.i.d.\ and $ \sim \mathcal{N}(0,\sigma_E^2)$.
\item Follows from the power constraint on all codewords.
\end{enumerate}
\end{explain}
This completes the proof of \Cref{lem:expectedvaluebound}.
\end{proof}

With these tools in hand, we now give the proof of \Cref{lem:Psets}.
\begin{enumerate}
\item \begin{proof}[\unskip\nopunct] \textit{Proof of \Cref{lem:Psets}.1.}\\ 
Let\footnote{Note that $\hat{\mu}$ is the mean here, i.e. it is a number, and is not related to the measure $\mu$.} $\hat{\mu} = \E\left[ \frac{1}{n} \sum_{i=1}^n \frac{Z_i^2}{\sigma_E^2 + H_{E,i}^2 P} \, \biggr| \, X^n = x^n \right]$. Then,
\begin{align*}
&\P \left[\left( H_E^n,Z^n\right) \in \Pout_n \, \biggr|\, X^n = x^n \right]\\
&= \P\left[\frac{1}{n} \sum\limits_{i=1}^n \frac{Z_i^2}{\sigma_E^2 + H_{E,i}^2 P} - 1 \leq \delta_n \, \biggr|\, X^n = x^n \right]\\
&\geq \P\left[\frac{1}{n} \sum\limits_{i=1}^n \frac{Z_i^2}{\sigma_E^2 + H_{E,i}^2 P} - \hat{\mu} \leq \delta_n \, \biggr|\, X^n = x^n \right]\\
&= \P\left[\frac{1}{n} \sum\limits_{i=1}^n \frac{x_i^2 H_{E,i}^2 + U_{E,i}^2 + 2x_i H_{E,i} U_{E,i}}{\sigma_E^2 + H_{E,i}^2 P} - \hat{\mu} \leq \delta_n \right] \numberthis \label{eq:sidewaystriangle},
\end{align*}
where the inequality follows from \Cref{lem:expectedvaluebound}.

Since $x^n$ is a constant and $\{H_{E,i}\}$ and $\{U_{E,i}\}$ are each mutually independent, the term \[\frac{x_i^2 H_{E,i}^2 + U_{E,i}^2 + 2x_i H_{E,i} U_{E,i}}{\sigma_E^2 + H_{E,i}^2 P}\] is an independent random variable. Let us show that it also satisfies the main condition of \Cref{lem:sunglemma} (dropping the subscript $E$ on $H_{E,i}$ and $U_{E,i}$ to reduce clutter).

\begin{align*}
&\E \exp\left(\gamma \frac{x_i^2 H_i^2 + U_i^2 + 2x_i H_i U_i}{\sigma^2 + H_i^2 P}\right) \\
&= \E \exp\left(\gamma \frac{x_i^2 H_i^2 + U_i^2 + 2x_i H_i U_i}{\sigma^2 + H_i^2 P}\right)\indicator\left(H_i > 1 \right)  + \\
&\;\cdots+\E \exp\left(\gamma\frac{x_i^2 H_i^2 + U_i^2 + 2x_i H_i U_i}{\sigma^2 + H_i^2 P}\right)\indicator\left(H_i \leq 1 \right) \\
&\explainlessthanequals{1} \E \exp\left(\gamma\left( \frac{x_i^2 H_i^2}{ H_i^2 P} + \frac{U_i^2}{\sigma^2} +\frac{2x_i H_i^2 U_i}{ H_i^2 P} \right)\right)\indicator\left(H_i > 1 \right) +\\
&\;\cdots+\E \exp\left(\gamma\left( \frac{x_i^2 H_i^2}{H_i^2 P} + \frac{U_i^2}{\sigma^2} +\frac{2x_i U_i}{\sigma^2} \right)\right)\indicator\left(H_i \leq 1 \right) \\
&= \E \exp\left(\gamma\left( \frac{x_i^2}{ P} + \frac{U_i^2}{\sigma^2} +\frac{2x_i U_i}{ P} \right)\right)\\
&\quad+\E \exp\left(\gamma\left( \frac{x_i^2}{ P} + \frac{U_i^2}{\sigma^2}+\frac{2x_i U_i}{\sigma^2} \right)\right)\\
&\leq \E \exp\left(2\gamma\left( \frac{U_i}{\sigma} + x_i \frac{P + \sigma^2}{2 P \sigma}\right)^2\right) \\
&\explainlessthanequals{2} \E \exp\left( 2\gamma\left( G_i \right)^2\right) \\
&\explainlessthan{3} \infty.
\end{align*}
\begin{explain}\
\begin{enumerate}[1)]
\item $H_{i} > 1$ implies $H_{i} \leq H_{i}^2$.
\item $G_i \sim \mathcal{N}(x_i \frac{P+\sigma^2}{2 P \sigma}$, 1) implies that $G_i^2$ is a non-central $\chi^2$ random variable.
\item Choosing $\gamma$ appropriately ensures the moment generating function is finite.
\end{enumerate}
\end{explain}

Since a finite moment generating function implies \textit{every} moment is finite, $K_i$ exists for all $i$ so that $K^*$ is well defined. Therefore, using \Cref{lem:sunglemma}, it follows immediately that 
\[(\ref{eq:sidewaystriangle}) \; \geq 1 - 2e^{-\frac{n\delta_n^2}{4K^*}},\] thereby completing the proof of Lemma 8.1.
\end{proof}
\ \\
\item \begin{proof}[\unskip\nopunct] \textit{Proof of \Cref{lem:Psets}.2.}
\begin{align*}
&\P\left[ Z^n \in \Pnoise_n \, \biggr|\, X^n = x^n, H_E^n = h^n \right]\\
&= \P\left[ \norm{Z^n - x^nh^n}^2 \geq n \sigma_E^2 (1 - \delta'_n)\right. \\
&\hspace{4.25cm} \left.\biggr| X^n = x^n, H_E^n = h^n  \right]\\
&= \P\left[\sum\limits_{i=1}^n \left(Z_i-x_ih_i\right)^2 \geq n \sigma_E^2(1-\delta'_n)\right.\\
&\hspace{4.25cm}\left.\biggr| X^n = x^n, H_E^n = h^n \right] \\
&= \P\left[\sum\limits_{i=1}^n \left(U_i +x_ih_i -x_ih_i\right)^2 \geq n \sigma_E^2(1-\delta'_n)  \right] \\
&= \P\left[\frac{1}{\sigma_E^{2}}\sum\limits_{i=1}^n U_i^2 \geq n(1-\delta'_n)  \right] \\
&\explaingreaterthanequals{1} 1 - e^{-\frac{n {\delta'_n}^{2}}{4}}.
\end{align*}

\begin{explain} \ 
\begin{enumerate}[1)]
\item Chi-squared tail bounds \cite[Lemma 1]{laurent}.
\end{enumerate}
\end{explain}
\end{proof}

\item \begin{proof}[\unskip\nopunct] \textit{Proof of \Cref{lem:Psets}.3.}\\ 
To prove this, we will use \Cref{lem:sunglemma} reduced to the i.i.d.\ case. We have that $\{\log(1+H_{E,i}^2\SNR)\}$ is a sequence of i.i.d. random variables; to employ \Cref{lem:sunglemma} it remains to prove that $\E\left[e^{\gamma |\log(1+H_E^2SNR)|}\right] < \infty$ for some $\gamma > 0$.
\begin{align*}
\E\left[e^{\gamma |\log(1+H_E^2\SNR)|}\right] &= \E\left[e^{\gamma \frac{\ln(1+H_E^2 \SNR)}{\ln(2)}}\right]\\
&= \E\left[(1+H_E^2 \SNR)^{\frac{\gamma}{\ln(2)}}\right]\\
&\text{Letting $\gamma = \ln 2$:}\\
&= \E\left[(1+H_E^2 \SNR)\right] \\
&= 1 + \E[H_E^2]\SNR\\
&< \infty.
\end{align*}
Then \Cref{lem:sunglemma} gives us:
\begin{align*}
&\P \left[H_E^n \in \Perg_n \right]\\
&= \P\left[ \left| \frac{1}{n} \sum\limits_{i=1}^n \log\left(1+H_{E,i}^2 \SNR \right)\right.\right.\cdots \\
&\qquad\qquad\qquad\cdots \left.\left.\vphantom{\sum\limits_{i=1}^n}- \E\left[1 + H_E^2 \SNR \right] \right| \leq \delta''_n\right]\\
&\geq 1 - 2e^{-\frac{n{\delta''_n}^2}{4K}},
\end{align*}
where \[K=2\left(\E\left[\log(1+H_E^2 \SNR)^{4}\right]\right)^{\frac{1}{2}} \E\left[e^{\gamma \log(1+H_E^2 \SNR)}\right].\]
\end{proof}
\end{enumerate}
At this point we have finished the proof of \Cref{lem:Psets}.

\subsection*{\textbf{Proof of \Cref{lem:typicaldefinition}.}}
\begin{proof}[\unskip\nopunct]
\begin{align*}
&\P\left[(H_T^n,X^n,H_E^n,Z^n) \in \Rp^n\times\T_n |X^n = x^n \right]\\
&\explaingreaterthanequals{1}\P\left[(H_T^n) \in \Rp^n |X^n = x^n \right]+\cdots \\
&\qquad \cdots +\P\left[(X^n,H_E^n,Z^n) \in \T_n |X^n = x^n \right] - 1\\ 
&=\P\left[(X^n,H_E^n,Z^n) \in \T_n |X^n = x^n \right]\\
&=\P\left[(X^n,H_E^n,Z^n) \in \Tout_n \cap \Tnoise_n \cap \Terg_n |X^n = x^n \right]\\
&\explaingreaterthanequals{2} \P\left[(X^n,H_E^n,Z^n) \in \Tout_n|X^n = x^n \right] + \cdots \\
&\qquad\cdots +\P\left[(X^n,H_E^n,Z^n) \in \Tnoise_n |X^n = x^n \right]+\cdots \\
&\qquad \cdots +\P\left[(X^n,H_E^n,Z^n) \in \Terg_n |X^n = x^n \right] -2 \\
&\explainequals{3} \P\left[(H_E^n,Z^n) \in \Pout_n \,\biggr|\,X^n = x^n \right]  +\cdots \\
&\quad \cdots +\E_{H_E^n} \left( \P\left[Z^n \in \Pnoise_n \biggr|H_E^n = h^n, X^n = x^n \right]\right) +\cdots\\
&\quad \cdots +\P\left[ H_E^n \in \Perg_n \right] -2\\
&\explaingreaterthanequals{4} (1-\epsilon_n^1) + (1-\epsilon_n^2) + (1-\epsilon_n^3) - 2\\
&= 1- (\epsilon_n^1 + \epsilon_n^2 + \epsilon_n^3)\\
&= 1- \epsilon_n.
\end{align*}

\begin{explain}\
\begin{enumerate}
\item Fr\'echet inequality for Cartesian products.
\item Fr\'echet inequality for intersections.
\item The second term of the sum is explained here:
\begin{align*}
&\P\left[(X^n,H_E^n,Z^n) \in \Tnoise_n |X^n = x^n \right] \\
&= \int\limits_{\H^n}\int\limits_{\Z^n}\omega(z^n, h_E^n | x^n) \indicator((x^n,h_E^n,z^n)\in \T_n^2) dz^ndh_E^n\\
&=\int\limits_{\H^n}\int\limits_{\Z^n}\frac{\omega_{\T_n^2}(z^n| h_E^n, x^n)\omega(h_E^n, x^n)}{\omega(x^n)}  dz^ndh_E^n\\
&=\int_{\H^n}\omega(h_E^n) \int_{\Z^n}\omega_{\T_n^2}(z^n| h_E^n, x^n)dz^ndh_E^n\\
&=\E_{H_E^n} \left( \P\left[Z^n \in \Pnoise_n \biggr|H_E^n = h^n, X^n = x^n \right]\right).
\end{align*}
\item This line follows immediately from \Cref{lem:Psets}.
\end{enumerate}
\end{explain}
This completes the proof of \Cref{lem:typicaldefinition}.
\end{proof}

\subsection*{\textbf{Proof of \Cref{thm:ImaxLessCE}}}
\begin{proof}[\unskip\nopunct]
The proof of \Cref{thm:ImaxLessCE} follows directly and is analogous to our proof of \Cref{lem:awgnmaxinfobound} found in \Cref{appendix:awgnMaxInfoBound}.
\begin{align*}
&2^{\Iemaxf}\\
&\explainlessthanequals{1} 2^{I_\infty^{\T_n}(X^n \sep Z^n, H_T^n,H_E^n)}\\
&\explainequals{2} \E_{H_E^n} \int \limits_{\Rn} \max_{x^n \in \codebook_n} \omega_{\T_n}(z^n|x^n,H_E^n) dz^n\\
&\explainequals{3}\E_{H_E^n} \int \limits_{\R^n} \max_{x^n \in \codebook_n} \left[ \left( \prod\limits_{i=1}^n  \frac{1}{\sqrt{2\pi\sigma_{E}^2}} e^{-\frac{1}{2\sigma_E^2}(z_i -H_{E,i} x_i)^2}\right)\right.\cdots \\
&\hspace{3.5cm}\cdots\left.\indicator((x^n,H_E^n,z^n) \in \T_n) \vphantom{\prod\limits_{i=1}^n} \right] dz^n\\
 &=\frac{1}{(2\pi\sigma_{E}^2)^{\frac{n}{2}}} \E_{H_E^n} \int \limits_{\R^n} \max_{x^n \in \codebook_n} \left(  e^{-\frac{1}{2\sigma_E^2}\norm{z^n-H_{E}^n x^n}^2}\right. \cdots\\
&\hspace{3.5cm}\cdots\left.\indicator((x^n,H_E^n,z^n) \in \T_n)\vphantom{e^{-\frac{1}{2\sigma_E^2}\norm{z^n-H_{E}^n x^n}^2}} \right) dz^n\\
&\explainlessthanequals{4} \frac{e^{-\frac{n}{2} (1-\delta_n')}}{(2\pi \sigma_{E}^2)^{\frac{n}{2}}}  \E_{H_E^n}  \int \limits_{\R^n} \max_{x^n \in \codebook_n} \indicator((x^n,H_E^n,z^n) \in \T_n) dz^n \\
&\explainequals{5}  \frac{e^{-\frac{n}{2} (1-\delta_n')}}{(2\pi \sigma_{E}^2)^{\frac{n}{2}}}  \E_{H_E^n} \int \limits_{\R^n} \max_{x^n \in \codebook_n} \indicator((x^n,H_E^n,z^n) \in \Tout_n \cap \Tnoise_n)\cdots\\
&\hspace{3.5cm}\cdots\indicator((x^n,H_E^n,z^n) \in \Terg_n)dz^n  \\
&=  \frac{e^{-\frac{n}{2} (1-\delta_n')}}{(2\pi \sigma_{E}^2)^{\frac{n}{2}}} \E_{H_E^n} \left[\vphantom{\int \limits_{\R^n}}\indicator(H_E^n \in \Perg_n)\right.\cdots\\
&\qquad\quad\cdots\left.\left(\, \int \limits_{\R^n} \max_{x^n \in \codebook_n} \indicator((x^n,H_E^n,z^n) \in \Tout_n \cap \Tnoise_n) dz^n \right) \right] \numberthis \label{eq:diamond}.
\end{align*}
\begin{explain} \ 
\begin{enumerate}
\item $\Rp^n\times\T_n$ is a $(1-\epsilon_n)$ typical set; however, it may not be the set corresponding to the ``smallest'' $\epsilon_n$ smooth max-information. Note that here we are labeling our typical set as just $\T_n$ for ease and dropping the subscript on $\epsilon_n$.
\item \Cref{lem:noCSIT}. Since we no longer have any dependencies on $H_T$, we will henceforth write our typical set as just $\T_n$.
\item Each output, given $X_i =x_i$ and $H_{E,i}=h_{E,i}$, is $Z_i = h_{E,i}x_i + U_{E,i}$. This is simply a normal random variable that is shifted in mean by $h_{E,i}x_i$ with variance $\sigma_E^2$. Thus, the density for each transmission is given as $$\omega(z_i|x_i,h_{E,i}) = \frac{1}{\sqrt{2 \pi \sigma_E^2}}e^{-\frac{1}{2\sigma_E^2}(z_i-h_{E,i}x_i)^2}.$$
Since we assume the channel is memoryless, we can split this density simply into a product.
\item We are working on $\T_n$ and thus $\Pnoise_n$; thus, $\norm{z^n-h_{E}^n x^n}^2 \geq n \sigma_E^2 (1-\delta_n')$.
\item $\Tout_n,\Tnoise_n,\Terg_n$ are defined in \Cref{sub:NoCSIT}.
\end{enumerate}
\end{explain}

Let us gain some intuition of what is happening at this point. In \Cref{eq:diamond}, suppose $\T_n^{\star} = \Tout_n \cap \Tnoise_n$ and let us understand the term \[\max_{x^n \in \codebook_n} \indicator((x^n,h_E^n,z^n) \in \T_n^{\star}).\] If we temporarily \textit{fix} $z^n$ and $h_E^n$, then this maximization is simply asking if there exists some codeword $x^n \in \codebook_n$ that makes the sequence $(x^n,h_E^n,z^n)$ an element of the set $\T_n^{\star}$. If there does exist such an $x^n$ then this function returns 1; otherwise, it returns 0. If we now relax $z^n$ and only fix $h_E^n$, $\T_n^{\star}$ can be thought of as a typical set as well: it is the set of typical input-output pairs. Thus the above function takes some output $z^n$ and asks if there is possibly any codewords that could have generated such an output knowing the channel coefficient is $h_E^n$. It follows then, that the integral \[\int \limits_{\R^n} \max_{x^n \in \codebook_n} \indicator((x^n,h_E^n,z^n) \in \T_n^{\star}) dz^n,\] roughly ``counts'' the number of valid input-output pairs given some $h_E^n$. 

To calculate such an integral, we need to know the \textit{shape} of $\T_n^{\star}$ and it is clear that $\T_n^{\star} = \Tout_n \cap \Tnoise_n \subset \Tout_n$ so that we can replace the $\T_n^{\star}$ with a $\Tout_n$ in the above integral at the expense of an inequality. However, this has \textit{removed} the maximization since $\Tout_n$ has no dependence on codewords. Therefore the above integration is less than or equal to \[\int \limits_{\R^n}  \indicator((h_E^n,z^n) \in \Pout_n) dz^n.\] 
Given some $h_E^n$, by definition this integral is equal to the Lebesgue measure of $\Pout_n$ which is precisely the volume of $\Pout_n$. Since $\Pout_n$ is actually an ellipsoid with radii, $\sqrt{n\sigma_E^2(1+h_{E,i}^2 \SNR) (1+\delta_n)}$, then this integration is actually calculating the \textit{volume} of said ellipsoid, which is calculated to be \[\dfrac{\pi^{\frac{n}{2}}}{\Gamma(\frac{n}{2}+1)} \prod_{i=1}^n\sqrt{n\sigma_E^2(1+h_{E,i}^2 \SNR) (1+\delta_n)},\] where $\Gamma$ is the usual gamma function of analysis.

Let us return to \Cref{eq:diamond}; using the aforementioned reasoning above we have:
\begin{align*}
(\ref{eq:diamond}) &\leq \frac{e^{-\frac{n}{2} (1-\delta'_n)}}{(2\pi \sigma_{E}^2)^{\frac{n}{2}}}  \E_{H_E^n} \left[\left(\vphantom{\prod_{i=1}^n}\,\, \frac{\pi^{\frac{n}{2}}}{\Gamma(\frac{n}{2}+1)} \indicator(H_E^n \in \Perg_n)\right.\right.\cdots\\
&\hspace{2cm}\cdots\left.\left.\prod_{i=1}^n\sqrt{n\sigma_e^2(1+H_{E,i}^2 \SNR) (1+\delta_n)} \right) \right]\\
&= \frac{e^{-\frac{n}{2} (1-\delta'_n)}}{(2\pi \sigma_{E}^2)^{\frac{n}{2}}} \frac{\pi^{\frac{n}{2}}}{\Gamma(\frac{n}{2}+1)} (n \sigma_E^2 (1+\delta_n))^{\frac{n}{2}} \cdots\\
&\qquad\cdots\E_{H_E^n}\left[\left(\,\, \indicator(H_E^n \in \Perg_n)\prod_{i=1}^n\sqrt{(1+H_{E,i}^2 \SNR)} \right) \right] \\
&= \left( (1+\delta_n) e^{\delta'_n} \frac{n}{2e \cdot \Gamma(\frac{n}{2}+1)^{\frac{2}{n}}} \right)^{\frac{n}{2}}\cdots\\
&\hspace{2.5cm}\cdots\int\limits_{\Perg_n} \omega(h_E^n)\prod_{i=1}^n\sqrt{(1+h_{E,i}^2 \SNR)} dh_E^n\\
&\explainlessthanequals{6} \left( (1+\delta_n) e^{\delta'_n} \frac{n}{2e \cdot \Gamma(\frac{n}{2}+1)^{\frac{2}{n}}} \right)^{\frac{n}{2}} \cdots\\
&\hspace{2.5cm}\cdots\int\limits_{\Perg_n} \omega(h_E^n)2^{\frac{n}{2}\left(\delta_n'' + \E_{H_E}[1+H_E^2 \SNR]\right)} dh_E^n\\
&= \left( (1+\delta_n) e^{\delta'_n} \frac{n}{2e \cdot \Gamma(\frac{n}{2}+1)^{\frac{2}{n}}} \right)^{\frac{n}{2}}\cdots\\
&\hspace{2.5cm}\cdots 2^{\frac{n}{2}\left(\delta_n'' + \E_{H_E}[1+H_E^2 \SNR]\right)}
\int\limits_{\Perg_n} \omega(h_E^n) dh_E^n\\
&= \left( \frac{n(1+\delta_n) e^{\delta'_n}}{2e \cdot \Gamma(\frac{n}{2}+1)^{\frac{2}{n}}} \right)^{\frac{n}{2}} 2^{\frac{n}{2}\left(\delta_n'' + \E_{H_E}[1+H_E^2 \SNR]\right)}\\
\end{align*}

\begin{explain}\ 
\begin{enumerate}\setcounter{enumi}{5}
\item Due to the bounds of integration we know that every value of $h_e$ will satisfy the definition of $\Perg_n$, thus it satisfies:
\begin{align*}
\frac{1}{n} \sum_{i=1}^n \log(1+h_{E,i}^2 \SNR) &- \E_{H_E}[1+H_E^2 \SNR] \leq \delta_n''\\
\Rightarrow \frac{1}{n}\log\left(\prod_{i=1}^n (1+h_{E,i}^2 \SNR)\right) &\leq \delta_n'' + \E_{H_E}[1+H_E^2 \SNR]\\
\intertext{Multiplying by $n$ and exponentiating both sides:}
\Rightarrow \prod_{i=1}^n (1+h_{E,i}^2 \SNR) &\leq 2^{n\left(\delta_n'' + \E_{H_E}[1+H_E^2 \SNR]\right)}\\
\Rightarrow \prod_{i=1}^n \sqrt{(1+h_{E,i}^2 \SNR)} &\leq 2^{\frac{n}{2}\left(\delta_n'' + \E_{H_E}[1+H_E^2 \SNR]\right)}
\end{align*}
\end{enumerate}
\end{explain}

Continuing from the last string of inequalities and equalities, we take the logarithm of the beginning and end, and divide by $n$:
\begin{align*}
&\dfrac{\Iemaxf}{n}\\
&\leq \frac{1}{n}\log\left[ \left( (1+\delta_n) e^{\delta_n'} \frac{n}{2e \cdot \Gamma(\frac{n}{2}+1)^{\frac{2}{n}}} \right)^{\frac{n}{2}}\right.\cdots\\
&\hspace{3.5cm}\cdots\left.2^{\frac{n}{2}\left(\delta_n'' + \E_{H_E}[1+H_E^2 \SNR]\right)}\vphantom{\left( (1+\delta_n) e^{\delta_n'} \frac{n}{2e \cdot \Gamma(\frac{n}{2}+1)^{\frac{2}{n}}} \right)^{\frac{n}{2}}}\right]\\
&= \frac{1}{2}  \log \left((1+\delta_n) e^{\delta'_n} \frac{n}{2e \cdot \Gamma(\frac{n}{2}+1)^{\frac{2}{n}}} \right) +\cdots\\
& \hspace{3.5cm}\cdots+\frac{1}{2}\left(\delta_n'' + \E_{H_E}[1+H_E^2 \SNR]\right) \\
&= \underbrace{\frac{1}{2} \log \left((1+\delta_n) e^{\delta_n'}\right)}_\text{A1} + \underbrace{\frac{1}{2} \log\left( \frac{n}{2e \cdot \Gamma(\frac{n}{2}+1)^{\frac{2}{n}}} \right)}_\text{A2}+\cdots\\
&\hspace{3.5cm}\cdots+\frac{1}{2}\left(\delta_n'' + \E_{H_E}[1+H_E^2 \SNR]\right).
\end{align*}

Let us see the asymptotic behavior of these first two terms.
\begin{itemize}
    \item [\textbf{A1.}] If we choose $\delta_n \to 0$ and $\delta_n' \to 0$ as $n \to \infty$ at rates sufficiently slow (so as to allow $1-\epsilon_n^1 \to 1$ and $1-\epsilon_n^2 \to 1$ resp.), then $\textnormal{A1} \to 0$ as $n\to \infty$.
    \item [\textbf{A2.}] It can be shown that $\frac{n}{2e \cdot \Gamma(\frac{n}{2}+1)^{\frac{2}{n}}} \to 1$ as $n\to \infty$ so that $\textnormal{A2} \to 0$ as $n\to \infty$.
\end{itemize}

Since we can choose $\delta_n,\delta'_n,\delta''_n$ in such a way so that $\delta_n'' \rightarrow 0$ and $\epsilon_n^1,\epsilon_n^2,\epsilon_n^3 \to 0$ as $n \to \infty$, it follows that $\epsilon_n \to 0$ as $n \to \infty$. Combing these previous steps yields our claim:
\[\lim\limits_{\substack{n \to \infty\\\epsilon \to 0}}  \frac{\Iemaxf}{n}\leq \frac{1}{2}\E_{H_E}\left[\log(1+H_E^2 \SNR)\right].\]

Thus, we have completed the proof of \Cref{thm:ImaxLessCE}.
\end{proof}

\setcounter{lemma}{0}
\section{Partial CSIT: Proof of \Cref{lem:PartialTypical} and \Cref{thm:ImaxPartialCSIT}} \label{appendix:partial}
In this final appendix, we shall prove results related to partial CSIT. In particular, we will prove \Cref{lem:PartialTypical} (proves that the sets we defined for partial CSIT are actually typical) and \Cref{thm:ImaxPartialCSIT} (another main result of our paper that proves an upper bound on max-information in the the partial CSIT scenario).
\subsection*{\textbf{Proof of \Cref{lem:PartialTypical}.}}
\begin{proof}[\unskip\nopunct]
We first see that $\T'_{n_i}$ is $(1-\epsilon_{n_{i}})$ typical directly from \Cref{sub:NoCSIT}. Then:
\begin{align*}
&\P\left[(X^n,H_T^n,H_E^n,Z^n) \in \T'_n |X^n = x^n \right]\\
&=\P\left[(X^n,H_T^n,H_E^n,Z^n) \in \T'_{n_1} \times\cdots\times \T'_{n_d} |X^n = x^n \right]\\
&\explaingreaterthanequals{1} \P\left[(X^{n_1},H_T^{n_1},H_E^{n_1},Z^{n_1}) \in \T'_{n_1}|X^{n_1} = x^{n_1}\right]+\cdots\\
&\cdots + \P\left[(X^{n_d},H_T^{n_d},H_E^{n_d},Z^{n_d}) \in \T'_{n_d}|X^{n_d} = x^{n_d}\right]-d+1\\
&\explaingreaterthanequals{2} \sum_{i=1}^{d} (1-\epsilon_{n_{i}}) -d+1\\
&=d-d+1 - \sum_{i=1}^{d} \epsilon_{n_{i}}\\
&\geq 1 - d\epsilon^{*}\quad  \text{(define $\epsilon^{*}$ be the largest $\epsilon_{n_i}$ over all $i$)}.
\end{align*}
Since $\epsilon^{*}$ is going to $0$ with $n\rightarrow \infty$ and we are free to choose $d$, we see that $\T'_n$ is a $(1-\epsilon_n)$ typical set.

\begin{explain}\
\begin{enumerate}
\item Fr\'echet inequality for Cartesian products.
\item We know that $\T'_{n_i}$ is a $(1-\epsilon_{n_i})$ typical set for all $i$ thus:
\[\P\left[(X^{n_i},H_T^{n_i},H_E^{n_i},Z^{n_i}) \in \T'_{n_i}|X^{n_i} = x^{n_i}\right] \geq 1-\epsilon_{n_i}\]
and we sum over all $i$.
\end{enumerate}
\end{explain}
This concludes the proof of \Cref{lem:PartialTypical}.
\end{proof}

\subsection*{\textbf{Proof of \Cref{thm:ImaxPartialCSIT}}}
\begin{proof}[\unskip\nopunct]
The proof follows in a similar fashion to both \Cref{lem:awgnDMCmaxinfo} and \Cref{thm:ImaxLessCE}.
\begin{align*}
&2^{\Iemaxf}\\
&\explainlessthanequals{1} 2^{I_\infty^{\T'_n}(X^n \sep Z^n, H_T^n,H_E^n)}\\
&= \E_{H_T^n H_E^n}  \int\limits_{\Rn} \max_{x^n \in \codebook_n}\omega_{\T'_n}(z^n|x^n,H_T^n,H_E^n) dz^n \\
 &= \E_{H_T^nH_E^n}\int\limits_{\Rn} \max_{x^n \in \codebook_n}\omega(z^n|x^n,H_T^n,H_E^n)\cdots\\
&\hspace{3.5cm}\cdots\indicator((x^n,H_T^n,z^n)\in \T'_n) dz^n\\
&= \int \omega(h_T^n,h_E^n) \int\limits_{\Rn} \max_{x^n \in \codebook_n}\omega(z^n|x^n,h_T^n, h_E^n)\cdots\\
&\hspace{2.7cm}\cdots\indicator((x^n,h_T^n,h_E^n,z^n)\in \T'_n) dz^n dh_T^n dh_E^n\\
&\explainlessthanequals{2} \prod_{i} \int \omega(h_T^{n_{i}},h_E^{n_{i}}) \int\limits_{\R^{n_{i}}} \max_{x^{n_{i}} \in \codebook_{n_{i}}^{i}}\omega(z^{n_{i}}|x^{n_{i}},h_T^{n_{i}},h_E^{n_{i}})\cdots\\
&\hspace{2cm}\cdots\indicator((x^{n_{i}},h_T^{n_{i}},h_E^{n_{i}},z^{n_{i}})\in \T'_{n_{i}}) dz^{n_{i}} dh_T^{n_{i}} dh_E^{n_{i}}\\
\intertext{(now let $\mathcal{J} = \left[h_{T,i},h_{T,i+1}\right)^{n_{i}} \times \Rp^{n_{i}}$)}
&\explainequals{3} \prod_{i} \int\limits_{\mathcal{J}} \omega(h_T^{n_{i}},h_E^{n_{i}}) \int\limits_{\R^{n_{i}}} \max_{x^{n_{i}} \in \codebook_{n_{i}}^{i}}\omega(z^{n_{i}}|x^{n_{i}},h_T^{n_{i}},h_E^{n_{i}})\cdots\\
&\hspace{2cm}\cdots\indicator((x^{n_{i}},h_T^{n_{i}},h_E^{n_{i}},z^{n_{i}})\in \T'_{n_{i}}) dz^{n_{i}} dh_T^{n_{i}} dh_E^{n_{i}}. \numberthis \label{eq:partialthm1}
\end{align*}
\begin{explain}\
\begin{enumerate}
\item $\T'_n$ is a $(1-\epsilon_n)$ typical set; however, it may not be the set corresponding to the ``smallest'' $\epsilon$ smooth max-information.
\item We wish to integrate over all $n$ and to do so, we break up the integral into integrals over each $n_{i}$.
\begin{enumerate}
\item Suppose $N_{i} \geq n_{i}$. In this case, we have transmitted a full $n_{i}$ length codeword over the $i$th channel and choose to \textit{not send} information over the channel during the remaining $N_{i} - n_{i}$ channel uses. Then:
\begin{align*}
&\hspace{-0.4cm}\int\limits_{\R^{N_{i} - n_{i}}} \hspace{-0.2cm}\max_{x^{N_{i} - n_{i}}}\omega(z^{N_{i} - n_{i}}|x^{N_{i} - n_{i}},h_T^{N_{i} - n_{i}},h_E^{N_{i} - n_{i}}) dz^{N_{i} - n_{i}}\\
&=\int\limits_{\R^{N_{i} - n_{i}}} \omega(z^{N_{i} - n_{i}}|h_T^{N_{i} - n_{i}},h_E^{N_{i} - n_{i}}) dz^{N_{i} - n_{i}}\\
&=1.
\end{align*}
Thus, if $N_{i} \geq n_{i} \, \forall i$ then we obtain equality at this line.
\item Suppose $N_{i} < n_{i}$. In this case, the $i$th channel did not appear often enough for the transmitter to send an entire $n_{i}$ length codeword. By not sending the full codeword, we are inherently limiting the amount of information sent across the channel and therefore the amount of information that can be leaked to the eavesdropper. Hence, sending the full $n_{i}$ length codeword allows more information (or equal amount of information) to be leaked to the eavesdropper and therefore serves as an upper bound to the actual value. More clearly:
\begin{align*}
&\int\limits_{\R^{N_{i}}} \omega(z^{N_{i}}|x^{N_{i}},h_T^{N_{i}},h_E^{N_{i}})\cdots\\
&\qquad\qquad\quad\cdots\indicator((x^{N_{i}},h_T^{N_{i}},h_E^{N_{i}},z^{N_{i}})\in \T'_{N_{i}}) dz^{N_{i}}\\
&\leq \int\limits_{\R^{n_{i}}} \omega(z^{n_{i}}|x^{n_{i}},h_T^{n_{i}},h_E^{n_{i}})\cdots\\
&\qquad\qquad\quad\cdots\indicator((x^{n_{i}},h_T^{n_{i}},h_E^{n_{i}},z^{n_{i}})\in \T'_{n_{i}}) dz^{n_{i}}.
\end{align*}
\end{enumerate}
\item Due to the partitioning of the channel coefficients, we know that for each $i$, $H_T \in \left[h_{T,i},h_{T,i+1}\right)$.
\end{enumerate}
\end{explain}

Continuing on we have:
\begin{align*}
(\ref{eq:partialthm1}) &\explainequals{4} \prod_{i} \E_{H_e^{n_{i}}}\int\limits_{\R^{n_{i}}} \max_{x^{n_{i}} \in \codebook_{n_{i}}^{i}}\omega_{\T_{n_{i}}}(z^{n_{i}}|x^{n_{i}},H_e^{n_i}) dz^{n_{i}}\\
&\explainequals{5} \prod_{i} 2^{I_\infty^{\T'_{n_i}}(X^{n_i} \sep Z^{n_i}, H_T^{n_i},H_E^{n_i})}.
\end{align*}

Taking the logarithm of each side and dividing by $n$ we have:
\begin{align*}
&\frac{\Iemaxf}{n} \\
&\leq \frac{1}{n}\log\left(\prod_{i}2^{I_\infty^{\T'_{n_i}}(X^{n_i} \sep Z^{n_i}, H_T^{n_i},H_E^{n_i})}\right) \\
&= \frac{1}{n} \sum_{i} \log\left(2^{I_\infty^{\T'_{n_i}}(X^{n_i} \sep Z^{n_i}, H_T^{n_i},H_E^{n_i})}\right)\\
&\explainlessthanequals{6} \frac{1}{n} \sum_{i} n_{i} \frac{1}{2}\E_{H_E}\left[\log\left(1+H_E^2 \frac{\gamma_{i}(h_{T,i})}{\sigma_E^2}\right)\right]\\
&=\frac{1}{2n} \sum_{i} (p_{i}n-\varepsilon_i) \E_{H_E}\left[\log\left(1+H_E^2 \frac{\gamma_{i}(h_{T,i})}{\sigma_E^2}\right)\right]\\
&=\frac{1}{2}\sum_{i} p_{i} \E_{H_E}\left[\log\left(1+H_E^2 \frac{\gamma_{i}(h_{T,i})}{\sigma_E^2}\right)\right] +\cdots\\
&\qquad \cdots- \frac{1}{2n}\sum_{i} \varepsilon_i\E_{H_E}\left[\log\left(1+H_E^2 \frac{\gamma_{i}(h_{T,i})}{\sigma_E^2}\right)\right]\\
&\explainconv{7}\frac{1}{2} \E_{H_E,H_T}\left[\log\left(1+\frac{\gamma(H_T)H_E^2}{\sigma_E^2}\right)\right],
\end{align*}
as $n \rightarrow \infty$ and $\epsilon \rightarrow 0$.

\begin{explain}\
\begin{enumerate}\setcounter{enumi}{3}
\item We can split up the conditional density as
\begin{align*}
\omega(z^{n_{i}} | x^{n_{i}}, h_T^{n_{i}}, h_E^{n_{i}}) &= \frac{\omega(z^{n_{i}}, x^{n_{i}}, h_E^{n_{i}})\omega(h_T^{n_{i}})}{\omega(x^{n_{i}},h_E^{n_{i}})\omega(h_T^{n_{i}})}\\
&=\omega(z^{n_{i}} | x^{n_{i}}, h_E^{n_{i}})
\end{align*}
where the first equality follows from the fact that $h_T^{n_{i}}$ is independent of $z^{n_{i}}, x^{n_{i}}, \text{ and } h_E^{n_{i}}$. Note that $h_T^{n_{i}}$ was indeed used to determine which codebook to use on this channel, but at this point that has been determined and we have restricted the integration of $h_T^{n_{i}}$ to take this into account, i.e. $x^{n_{i}}$ is independent of $h_T^{n_{i}}$. Thus the multiplicand becomes:
\begin{align*}
&\hspace{-0.4cm}\int\limits_{\mathcal{J}} \omega(h_E^{n_{i}})\omega(h_T^{n_{i}})\hspace{-0.12cm} \int\limits_{\R^{n_{i}}} \hspace{-0.15cm}\max_{x^{n_{i}} \in \codebook_{n_{i}}^{i}}\omega_{\T^{i}}(z^{n_{i}}|x^{n_{i}},h_E^{n_{i}})dz^{n_{i}} dh_E^{n_{i}} dh_T^{n_{i}}\\
&=\int\limits_{\Rp^{n_{i}}} \omega(h_E^{n_{i}}) \int\limits_{\R^{n_{i}}} \max_{x^{n_{i}} \in \codebook_{n_{i}}^{i}}\omega_{\T^{i}}(z^{n_{i}}|x^{n_{i}},h_E^{n_{i}})dz^{n_{i}} dh_E^{n_{i}}.
\end{align*}
The equality follows from the fact that we know $h_T \in \left[h_{T,i},h_{T,i+1}\right)$ for each component of the $n_{i}$ length vector for every $i$. Therefore integrating $\omega(h_T^i)$ over the whole space where $h_T$ is guaranteed to be will yield $1$ for each of the $\sum_{i}n_{i}$ integrals.
We then rewrite the integral over $h_E^{n_{i}}$ in the form of expected value.
\item Definition of $I_\infty^{\T'_{n_i}}(X^{n_i} \sep Z^{n_i}, H_T^{n_i},H_E^{n_i})$ and \Cref{lem:noCSIT}.
\item Upper bound as found in \Cref{thm:ImaxLessCE}.
\item $d$ can be made arbitrarily large and thus the channel coefficient intervals can be made arbitrarily small, hence the convergence of the first term to the expected value. For the second term: \[\lim\limits_{n\rightarrow \infty} \frac{1}{2n}\sum_{i} \varepsilon_i\E_{H_E}\left[\log\left(1+H_E^2 \frac{\gamma_{i}(h_{T,i})}{\sigma_E^2}\right)\right] = 0.\]
Since $\E_{H_E}\left[\log\left(1+H_E^2 \frac{\gamma_{i}(h_{T,i})}{\sigma_E^2}\right)\right]$ is constant with respect to $n$ and $\varepsilon_i \rightarrow 0$. Also, $\delta_n,\delta'_n,\delta''_n$ (from Theorem 1) can be chosen in such a way that $\epsilon\rightarrow 0$ as $n\rightarrow \infty$.
\end{enumerate}
\end{explain}
This concludes the proof of \Cref{thm:ImaxPartialCSIT}.
\end{proof}

\bibliographystyle{IEEEtran}
\bibliography{bibtex/biblio.bib}{}

\end{document}